\PassOptionsToPackage{dvipsnames}{xcolor}
\PassOptionsToPackage{nosort,nocompress}{cite}
\documentclass[a4paper,colorlinks]{preprint}
\usepackage[english]{babel}
\usepackage{mathtools}
\numberwithin{equation}{section}
\usepackage{amsthm}
\usepackage{thmtools}
\usepackage{amssymb}
\usepackage{bm}
\usepackage{amsfonts}
\usepackage{enumitem}
\setlist[itemize]{noitemsep, topsep=3pt}
\usepackage{microtype}
\usepackage{algorithm}
\usepackage{algpseudocode}
\usepackage{float}
\usepackage{comment}
\usepackage[nameinlink,capitalize]{cleveref}

\AddToHook{cmd/oldmaketitle/before}{%
  \ihead*{}\chead*{}\ohead*{}
  \ifoot*{}\ofoot*{}
  \cfoot*{\usekomafont{pagenumber}\thepage}
  \KOMAoptions{headsepline=false,plainheadsepline=false,
    footsepline=false,plainfootsepline=false}
}

\newtheorem{theorem}{Theorem}[section]
\newtheorem{lemma}[theorem]{Lemma}
\newtheorem{corollary}[theorem]{Corollary}
\newtheorem{proposition}[theorem]{Proposition}
\newtheorem{fact}[theorem]{Fact}
\newtheorem{problem}[theorem]{Problem}
\newtheorem{definition}[theorem]{Definition}
\newtheorem{observation}[theorem]{Observation}
\newtheorem{remark}[theorem]{Remark}

\crefname{appendix}{Appendix}{Appendices}
\Crefname{appendix}{Appendix}{Appendices}
\crefname{remark}{remark}{remarks}
\Crefname{remark}{Remark}{Remarks}
\crefname{observation}{observation}{observations}
\Crefname{observation}{Observation}{Observations}
\crefname{proposition}{proposition}{propositions}
\Crefname{proposition}{Proposition}{Propositions}
\crefname{fact}{fact}{facts}
\Crefname{fact}{Fact}{Facts}
\crefname{problem}{problem}{problems}
\Crefname{problem}{Problem}{Problems}
\crefname{algorithm}{algorithm}{algorithms}
\Crefname{algorithm}{Algorithm}{Algorithms}

\newcommand{\Nset}{\mathbb{N}}
\newcommand{\Qset}{\mathbb{Q}}
\newcommand{\Cset}{\mathbb{C}}
\newcommand{\poly}{\operatorname{poly}}

\newcommand{\Z}{\mathbb{Z}}

\newcommand{\bx}{\mathbf{x}}
\newcommand{\bz}{\mathbf{z}}
\newcommand{\ba}{\mathbf{a}}
\newcommand{\bb}{\mathbf{b}}
\newcommand{\bc}{\mathbf{c}}
\newcommand{\be}{\mathbf{e}}

\newcommand{\bq}{\mathbf{q}}
\newcommand{\br}{\mathbf{r}}
\newcommand{\bu}{\mathbf{u}}
\newcommand{\bv}{\mathbf{v}}
\newcommand{\bw}{\mathbf{w}}
\newcommand{\bE}{\mathbf{E}}
\newcommand{\balpha}{\bm{\alpha}}
\newcommand{\bbeta}{\bm{\beta}}
\newcommand{\bgamma}{\bm{\gamma}}
\newcommand{\supp}{\mathsf{supp}}

\newcommand{\rank}{\operatorname{rank}}
\newcommand{\vspan}{\operatorname{span}}

\renewcommand{\sp}{\operatorname{sp}}
\newcommand{\Ht}{\mathrm{Ht}}
\newcommand{\paren}[1]{\left( #1 \right)}

\title{Deterministic Shift Equivalence Testing for Sparse Polynomials}
\author[1]{Markus~Bl\"aser}
\author[2]{Pranjal~Dutta}
\author[3]{Sagnik~Dutta}
\author[4]{Gorav~Jindal\thanks{Funded by the European Union (ERC, CountHom, 101077083). Views and opinions expressed are however those of the author(s) only and do not necessarily reflect those of the European Union or the European Research Council Executive Agency.}}
\affil[1]{Saarland University, Saarland Informatics Campus, Saarbr\"ucken, Germany\authorcr
  \email{mblaeser@cs.uni-saarland.de}}
\affil[2]{Nanyang Technological University, Singapore\authorcr
  \email{pranjal.dutta@ntu.edu.sg}}
\affil[3]{Max Planck Institute for Informatics, Saarland Informatics Campus, Saarbr\"ucken, Germany\authorcr
  \email{sadutta@mpi-inf.mpg.de}}
\affil[4]{University of Regensburg, Regensburg, Germany\authorcr
  \email{gorav.jindal@ur.de}}

\shorttitle{Shift Equivalence Testing for Sparse Polynomials}
\shortauthor{Bl\"aser et al.}
\shortdate{~}
\novelty{}
\date{}

\begin{document}

\abstract{
We study the shift equivalence testing problem for high-degree sparse multivariate polynomials. Given explicit sparse representations of two polynomials $f,g\in\Qset[x_1,\ldots,x_n]$, with exponents encoded in binary, sparsities $s$ and $t$, and total degree at most $d$, the task is to decide whether there is a vector $\ba\in\Qset^n$ such that $f(\bx+\ba)\equiv g(\bx)$.
Furthermore, one needs to output such a vector if it exists.
Dvir, Oliveira, and Shpilka (ICALP 2014) studied shift equivalence for low-degree arithmetic circuits. Their framework does not apply directly to sparse polynomials, which are not closed under shifts, and they left open whether shift equivalence of sparse polynomials admits an efficient deterministic algorithm.
We answer this question positively over $\Qset$. If the input coefficients have bit-size at most $\tau$, we show that a rational shift of polynomial bit-size exists whenever the polynomials are shift equivalent. Moreover, our algorithm outputs such a shift and has a running time of $\poly(n,s,t,\tau,\log d)$ in the bit model. The arithmetic and structural parts of our result also hold over any field of characteristic zero or characteristic larger than $d$.
}

\maketitle
\hypersetup{
  pdfauthor={Markus Bl\"aser, Pranjal Dutta, Sagnik Dutta, Gorav Jindal}
}

\section{Introduction}

Shift Equivalence Testing (SET) asks the following seemingly simple question: Given two
polynomials $f,g\in\Qset[x_1,\ldots,x_n]$, decide whether there is a vector
$\ba=(a_1,\ldots,a_n)\in\Qset^n$ such that
\[
        f(\bx+\ba)\equiv g(\bx),
\]
and output such a vector if it exists. 
The symbol $\equiv$ emphasizes that the
equality is an identity of formal polynomials, not merely an equality of functions. Over $\Qset$ these two notions coincide. The same is true, and the Taylor expansions used below remain valid, over any field of characteristic zero or characteristic larger than the degree under consideration. Note that shift equivalence is indeed an equivalence relation. It is reflexive, since $f$ can be transformed into itself by the zero shift. When $f(\bx + \ba) = g(\bx)$, then also $f(\bx) = g(\bx- \ba)$, so it is symmetric. And if shifting by $\ba$ transforms $f$ into $g$ and shifting by $\bb$ transforms $g$ into $h$, then $\ba + \bb$ transforms $f$ into $h$, giving transitivity.

The algorithmic problem of Shift Equivalence Testing first appeared in computer algebra and symbolic computation.
Grigoriev \cite{grigorievtcs1997} designed the first polynomial-time algorithms for this problem. He assumes that the polynomials are given in dense representation, as a list of \emph{all} coefficients including those that are zero. For an $n$-variate polynomial of total degree at most $d$, this list contains $\binom{n+d}{d}$ coefficients. His deterministic characteristic-zero algorithm takes $n^{O(d)}$ time, so zero-padding a sparse input does not give a running time polynomial in the sparse parameters. Grigoriev also gives a randomized algorithm for large enough finite characteristic and a quantum algorithm for characteristic two.


For polynomials with few nonzero coefficients, a running time polynomial in the dense input size may be far larger than polynomial in the sparse input length.
Dvir, Oliveira, and Shpilka \cite{dviroliveirashpilkai2014icalp} studied SET for polynomials represented compactly by arithmetic circuits. For instance, the determinant of an $n\times n$ matrix has $n!$ nonzero monomials and an arithmetic circuit of polynomial size. Dvir et al.\ give a randomized algorithm for SET with running time polynomial in the circuit size and the total degree $d$. Their algorithm also works with black-box access to $f$ and $g$ and bounds on their degree and circuit size. It proceeds by a reduction to polynomial identity testing.

The polynomial identity testing problem (PIT) is a famous problem in complexity theory. Given a polynomial $f \in \Qset[\bx]$ as an arithmetic circuit $C$, is $f$ identically zero? There is a simple randomized polynomial-time algorithm for the problem based on the so-called Schwartz--Zippel lemma \cite{schwartz1980, zippel1979eurosam}, which essentially inserts a random point from a large enough set and checks whether it evaluates to zero. Derandomizing PIT is a big open problem in complexity theory. In particular, it implies non-uniform circuit lower bounds \cite{kabanetsimpagliazzo2004}, another major open problem in the field. Note that the algorithm by Dvir et al.\ shows that SET reduces to PIT under Turing reductions. But also the converse is true. The polynomial $f$ is identically zero iff it is shift-equivalent to the zero polynomial. Therefore, a full derandomization of SET is out of reach of current methods.

Since finding a deterministic algorithm for general PIT is a hard problem, researchers have developed deterministic identity tests for restricted circuit classes; see, e.g., \cite{DuttaGhosh2024PITSurvey}. For read-once oblivious algebraic branching programs (ROABPs), white-box PIT is deterministic polynomial time \cite{razshpilka2005}, while deterministic black-box PIT is known in quasipolynomial time \cite{forbesshpilka2013,forbessaptharishishpilka2014}. The framework of Dvir, Oliveira, and Shpilka \cite{dviroliveirashpilkai2014icalp} therefore gives the corresponding white-box and black-box SET algorithms for ROABPs.

Their deterministic reduction requires closure properties. Sparse polynomials are not closed under shifts, so it does not apply directly and DOS'14 states no deterministic sparse-class algorithm. Indirectly, however, representing both explicit sparse inputs as ROABPs in one common variable order and using white-box ROABP PIT gives deterministic SET with $\poly(n,s,t,d)$ arithmetic complexity. This is not polynomial in the binary sparse input size when $d$ is exponential.

Our contribution is a deterministic algorithm polynomial in the length of explicit sparse input over $\Qset$. We now specify this input model. Each polynomial is given as an explicit list of nonzero terms
\[
c\,x_1^{\alpha_1}\cdots x_n^{\alpha_n},
\qquad c\in\Qset,
\]
where the coefficient is given explicitly in binary representation. 
Furthermore, the exponents are also encoded in binary. We call this an \emph{explicit sparse representation}.
Thus, the natural input parameters are the number of variables $n$, the sparsities $s=\sp(f)$ and $t=\sp(g)$, the height, equivalently bit-size, of the coefficients of $f,g$, and $\log d$, where $d$ bounds the total degrees of $f$ and $g$. (Sometimes, the maximum bit-size of the coefficients is called logarithmic height, but we use the term height for brevity.)
While low degree is conventional in algebraic complexity, high degree is natural for sparse or lacunary polynomials: an exponent bounded by $d$ uses only $O(\log d)$ bits.
Classical sparse interpolation likewise targets this regime: Ben-Or--Tiwari has algebraic-RAM complexity polynomial in $\log d$ \cite{benortiwari1988stoc}, and Garg--Schost obtains such dependence under straight-line-program access \cite{gargschost2009tcs}.
The indirect deterministic ROABP algorithm described above has polynomial dependence on $d$, so it does not give polynomial time in the binary sparse input length.\footnote{For this reason, some sparse representations encode exponents in unary. \cite{KaltofenKoiran2005} uses the terms lacunary or supersparse when exponents are encoded in binary; here, sparse representation always uses binary exponents.} Our algorithm reduces the degree dependence to $\poly(\log d)$ for explicit sparse inputs and also provides polynomial bit complexity over $\Qset$.

More formally, let {\textbf{Sparse SET}} denote the following problem: given two polynomials $f,g\in\Qset[x_1,\ldots,x_n]$ in sparse representation, decide whether there exists a shift $\ba\in\Qset^n$ such that
\[
f(\bx+\ba)\equiv g(\bx).
\]
%
%
%
As mentioned above, PIT for sparse polynomials is trivial in the sparse representation: one simply compares the monomial list with the zero list. The issue is now that the sparsity is not preserved under shifts. A single monomial can become very large: if we define $\supp(\ba)\coloneqq \{i:a_i\ne 0\}$, then
\[
(\bx+\ba)^{\bgamma}
=\prod_{i=1}^n (x_i+a_i)^{\gamma_i}
\]
has exactly
\[
\prod_{i\in\supp(\ba)}(\gamma_i+1)
\]
nonzero monomials. This number can be exponential in the sparse input length, already for a single monomial and even when the exponents are given in unary. Thus, even if $f$ is sparse, the shifted polynomial $f(\bx+\ba)$ may be dense. Dvir, Oliveira, and Shpilka~\cite[Section 6]{dviroliveirashpilkai2014icalp} pointed out precisely this obstruction: their methods ``almost'' give a quasipolynomial-time algorithm for shift equivalence of sparse polynomials, but the direct argument breaks down because sparse polynomials are not closed under shifts. For explicit sparse inputs, the indirect ROABP approach described above gives a deterministic algorithm using $\poly(n,s,t,d)$ arithmetic operations, which is not polynomial in the sparse input length when $d$ is exponential. This leads to the central question addressed in this paper.

\begin{problem}
Can Sparse SET over $\Qset$ be solved using $\operatorname{poly}(n,s,t,\log d)$ arithmetic operations and bit complexity $\operatorname{poly}(n,s,t,\log d,\tau)$?
\end{problem}

In this paper, we answer the question affirmatively. The first key ingredient is a new structural lemma showing that, in yes instances, the witnessing shift is very special, which forces the relevant intermediate polynomials to remain sparse. This addresses the problem that sparse polynomials are not closed under shifts.

The second problem stems from the fact that the algorithm in~\cite{dviroliveirashpilkai2014icalp} has running time polynomial in $d$, so there is no easy way of adapting this algorithm to a running time of $\poly(\log d)$. The algorithm of~\cite{dviroliveirashpilkai2014icalp} iteratively solves a system of equations yielding certain intermediate polynomials. We modify this approach in a nontrivial manner by showing that it is enough to store and compare all intermediate polynomials in the sparse representation without expanding arbitrary shifts.

There are also further obstacles. Even in yes instances over $\Qset$, one must first show that there exists a witnessing shift $\ba$ in $\Qset^n$, rather than only over an algebraic extension\footnote{Consider $f(\bx+\bz)-g(\bx) = \sum_{\alpha} F_{\alpha}(\bz)\bx^{\alpha}$. Thus, one needs to find a common solution of $F_{\alpha}(\bz)$, and the weak Nullstellensatz gives a common solution~$\ba \in \overline{\Qset}^n$; for more details, see~\cref{small-shift}}. One must then prove that such a shift has polynomially bounded bit complexity in the input size, and for the search version, compute and output it within the same bounds.


\subsection{Our results}

Our main theorem is a deterministic algorithm for shift equivalence of explicitly given sparse multivariate polynomials over the rationals, with running time polynomial in the sparse input parameters. We also prove polynomial bit complexity and output a shift of polynomial bit-size.

For an extension field $\mathbb K\supseteq \Qset$, define
\[
        S_{f,g}(\mathbb K)
        \coloneqq
        \{\ba\in \mathbb K^n : f(\bx+\ba)\equiv g(\bx)\}.
\]
In particular, $S_{f,g}(\overline{\Qset})$ denotes the set of shifts over the
algebraic closure. The rationality result below shows that if
$S_{f,g}(\overline{\Qset})$ is nonempty, then it contains a rational point.
Thus, for the rational decision problem, we may write
\[
        S_{f,g}:=S_{f,g}(\Qset)
\]
without changing the decision problem. 

We state the main results over $\Qset$ to keep the notation simple. The
structural statements and arithmetic algorithms are purely algebraic and remain
valid over any field $\mathbb K$ of characteristic zero or characteristic larger
than $d$, after replacing $\Qset$ by $\mathbb K$ and omitting the height and
bit-complexity assertions.

\begin{theorem}[Main theorem]
\label{thm:intro-main}
Let $f,g\in\Qset[x_1,\ldots,x_n]$ be given in explicit sparse representation, with
sparsities $s$ and $t$ and degree at most $d$. There is a deterministic
algorithm using $\poly(n,s,t,\log d)$ arithmetic operations that decides whether
$S_{f,g}$ is nonempty. If $S_{f,g}\neq\emptyset$, the algorithm outputs a shift
$\ba\in S_{f,g}$.

If the input coefficients have height at most $\tau$, then one can
choose the output shift to have height
$\poly(n,s,t,\tau,\log d)$, and the algorithm can be implemented using
$\poly(n,s,t,\tau,\log d)$ bit operations.
\end{theorem}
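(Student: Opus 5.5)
We follow the degree-by-degree strategy of Dvir, Oliveira, and Shpilka, but keep every intermediate object sparse by first proving a structural lemma about witnessing shifts. Write $f=\sum_{k} f_k$ and $g=\sum_k g_k$ in homogeneous components; note that homogeneous components of sparse polynomials are read off from the term list. By Taylor expansion,
\[
f(\bx+\ba)=\sum_{\bgamma}\frac{\ba^{\bgamma}}{\bgamma!}\,\partial^{\bgamma}f(\bx),
\]
so $\deg f=\deg g=D$, $f_D=g_D$, and $g_{D-1}=f_{D-1}+\sum_i a_i\,\partial_i f_D$. Comparing coefficients gives a linear system in $\ba$, with at most $(s+t)$ equations, namely one per monomial appearing in $g_{D-1}-f_{D-1}$ or in some $\partial_i f_D$. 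Its solution set is an affine space $A_1\subseteq \Qset^n$, and every shift in $S_{f,g}(\overline{\Qset})$ lies in $A_1\otimes\overline{\Qset}$.

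\textbf{Structural lemma.} The key claim is that shifts are ``tame''. Let $L=\{\bb:\sum_i b_i\partial_i f\equiv 0\}$ be the space of translation invariances of $f$. It is also defined by $\poly(n,s)$ linear equations from the sparse representation, since $\partial_i f$ is again sparse. Then $S_{f,g}(\overline{\Qset})$ is either empty or a coset $\ba_0+L$. Moreover, one can choose a representative $\ba_0$ such that $f(\bx+\ba_0)$ is supported on a set of monomials controlled by the supports of $f$ and $g$.

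I would try to prove the stronger statement: after a rational linear change of coordinates $\bx\mapsto M\bx$, computable from $L$, $f$ depends only on the variables complementary to $L$. On those variables the shift is forced to be unique. Concretely, for each variable $x_i$ not killed by $L$, the leading form in $x_i$ pins down $a_i$ linearly, because the coefficient of $x_i^{e-1}$ in $g$ equals that in $f$ plus $e\,a_i$ times the coefficient of $x_i^{e}$. Here we need $e\le d<\operatorname{char}$, or characteristic zero. This uniqueness gives rationality: the unique point of a $\Qset$-defined system is fixed by Galois, so $S_{f,g}(\overline{\Qset})\ne\emptyset$ implies $S_{f,g}(\Qset)\ne\emptyset$. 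It also gives height bounds, since $\ba_0$ solves a linear system with $\poly(n,s,t)$ equations whose coefficients are input coefficients times exponents ($O(\log d)$ bits). Cramer's rule then gives height $\poly(n,s,t,\tau,\log d)$.

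\textbf{Algorithm.} First, compute $L$ and a complement. Second, extract the candidate $\ba_0$ by solving the linear system coming from the top two homogeneous components, together with the per-variable leading-coefficient equations. If the system is inconsistent, reject. Third, verify $f(\bx+\ba_0)\equiv g(\bx)$ without expanding. Here the sparsity issue appears: the shift can make $f(\bx+\ba_0)$ dense.

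For verification, use a deterministic sparse identity test for $h(\bx)=f(\bx+\ba_0)-g(\bx)$ along a Kronecker-type substitution $x_i\mapsto y^{p^{i}}$ modulo small primes, together with shifting. The plan is to reduce to univariate checks: substitute $\bx=\bu+\lambda \bv$ for a small deterministic family of rational points $\bu,\bv$. Then check equality of univariate polynomials $f(\bu+\ba_0+\lambda\bv)$ and $g(\bu+\lambda\bv)$ by evaluating both sides at $\lambda\in\{0,1,\dots\}$ using exponentiation by squaring. This costs $O(\log d)$ operations per term.

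The main obstacle is this verification step, together with the bound on the number of evaluation points needed when $d$ is exponential. Evaluating at $d+1$ points is not allowed. I would instead use the structural lemma: in yes-instances, $g$ is determined by the fact that both sides have the same terms after re-grouping by the invariance directions. Then compare the sparse representations of $f(\bx+\ba_0)$ and $g$ restricted to the non-invariant coordinates, where the lemma should force $a_{0,i}=0$ for every coordinate in which $f$ has large degree, so no dense expansion occurs. Proving that this sparse comparison is exact is the step I expect to be hardest.
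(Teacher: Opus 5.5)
The failing step is the extraction of the candidate $\ba_0$, and your height bound inherits the failure. The degree-$(D-1)$ equation determines only $\sum_i a_i\partial_i f_D$, that is, $\ba$ modulo the invariance space of $f_D$. That space can be much larger than $L$, and the missing information sits in lower degrees, where the equations contain nonlinear tails. Your per-variable leading-coefficient equation does not supply it. As a polynomial in the other variables, the coefficient of $x_i^{e-1}$ in $f(\bx+\ba)$ is the \emph{shifted} coefficient of $x_i^{e-1}$ plus $e\,a_i$ times the \emph{shifted} coefficient of $x_i^{e}$, so the other coordinates of $\ba$ enter nonlinearly. Concretely, take $f=x_1^2+x_2$. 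Then $L=\{0\}$, the degree-$1$ equation fixes only $a_1$, and $a_2$ comes from $H^0(g)=a_2+a_1^2$. The coefficient of $x_2^0$ in $g$ is $x_1^2+2a_1x_1+a_1^2+a_2$, not $x_1^2+a_2$.

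Handling these tails is the core of the paper. It runs the DOS descent over every nonzero homogeneous degree $\delta_1>\cdots>\delta_m$ of $f$. Derivative equality shows that the tail $P_q$ computed from the current candidate $\ba_{q-1}$ equals the tail at a true shift. The paper computes $P_q$ sparsely by enumerating only the degree-$\delta_q$ layer of each shifted monomial and applying one directional derivative. Since you never exhibit a correct linear system with input-height coefficients that pins down $\ba$ modulo $L$, Cramer's rule has nothing to act on. In the paper the right-hand sides depend on the tails, and polynomial height comes only from a separately proved small rational witness. That proof uses the derivative space $W$ of dimension at most $s^2t$, a unipotent operator $Y=1+N$ with $Yf=g$, and its finite logarithm recovering $D_{\bc}f$. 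The tails are then bounded through that one fixed witness, not recursively. Your Galois argument for rationality is fine, since it needs only the coset structure $\ba_0+L$.

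The second missing piece is the structural lemma, which you only conjecture at the end (``the lemma should force $a_{0,i}=0$\ldots''). The version you do state, that $f(\bx+\ba_0)$ is supported on monomials controlled by the supports of $f$ and $g$, is vacuous because $f(\bx+\ba_0)=g$. What is needed is a bound on each shifted monomial separately, ruling out cancellation. If $H^{\ge r}(g)=H^{\ge r}(f(\bx+\ba))$, then $\sp\bigl(H^{\ge r}((\bx+\ba)^{\bgamma})\bigr)\le st$ for every monomial $\bx^{\bgamma}$ of $H^{\ge r}(f)$. For $r=0$ this reads $\prod_{i\in\supp(\ba)}(\gamma_i+1)\le st$.

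The paper proves this with the operators $E_i=(x_i+a_i)\partial_{x_i}$, which are diagonal on shifted monomials. It combines them with a cone-closed set $T$ of $s$ multi-indices for which the vectors $(\balpha_j^{\bv})_j$, $\bv\in T$, form a basis of $\Qset^s$. The operator $\sum_{\bv\in T}r_{\bv}\bE^{\bv}$ sends $f(\bx+\ba)$ to a nonzero multiple of $(\bx+\ba)^{\bgamma}$. Applied to the sparse $g$, it can only produce exponents $\bbeta_k-\bu$ with $\bu\in T$, at most $st$ of them.

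The truncated version with $r=\delta_q$ bounds the degree-$\delta_q$ layers in the tail computation. The $r=0$ version makes verification immediate: compute $\prod_{i\in\supp(\ba)}(\gamma_i+1)$ capped at $st$, reject if the cap is exceeded, otherwise expand and compare. So the Kronecker and line-restriction ideas are unnecessary. Without the lemma, however, you have no $\poly(\log d)$ way either to verify a candidate or to compute the tails.
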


The arithmetic algorithm is proved in \cref{thm:sparse-shift-correctness}. The
bit-complexity analysis over $\Qset$ is given in
\cref{prop:intermediate-height}, and the polynomial-height output guarantee
uses the small rational witness theorem, \cref{small-shift}. This exponentially improves on the indirect ROABP consequence of the DOS'14 framework described above, whose running time has polynomial dependence on $d$, rather than on $\log d$. In
addition, that framework does not by itself address the existence of a rational shift of polynomial bit-size. Thus both the $\poly(n,s,t,\log d)$ arithmetic
bound and the bit-complexity guarantee require additional ideas.

\paragraph{Black-box access.}
Over $\Qset$, given separate evaluation oracles for $f,g$ promised to be sparse,
Ben-Or--Tiwari interpolation \cite{benortiwari1988stoc} recovers their explicit
representations using $\poly(n,s,t,\log d)$ queries and algebraic-RAM operations.
This model permits unit-cost integer remainder operations in addition to
arithmetic, and oracle calls are counted at unit cost. Applying
\cref{thm:intro-main} to the reconstructed representations gives the same bound
in this model. This does not imply the same bit bound: evaluating $x^d$ at $2$ already returns a
$(d+1)$-bit integer. For integer coefficients of magnitude at most $H$, the
modular oracle of Bläser--Jindal \cite{blaserjindal2014issac} yields deterministic
$\poly(n,s,t,\log d,\log H)$ bit-complexity interpolation. Division-free
integer circuits simulate this oracle by evaluating each gate modulo the queried
modulus; the cost of this simulation also depends on the circuit description size.

The first key technical ingredient is the following sparsity bound. It rules out the possibility that, in a yes instance, the sparse output is obtained only by
cancellations among large shifted monomial expansions.

\begin{theorem}[Structural sparsity bound]
\label{thm:intro-structural}
Let $f,g\in\Qset[x_1,\ldots,x_n]$ have sparsities $s$ and $t$, respectively,
and degree at most $d$. If $\ba\in S_{f,g}$, then for every monomial
$\bx^{\bgamma}$ appearing in $f$ or
in $g$,
\[
        \prod_{i\in\supp(\ba)}(\gamma_i+1)\le st .
\]
\end{theorem}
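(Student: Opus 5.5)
The plan is to reduce first to a normalized situation and then find $st$ as the rank of a matrix indexed by pairs (monomial of $f$, monomial of $g$). Let $T=\supp(\ba)$. Variables outside $T$ are not shifted, so I treat them as constants: view $f,g$ as polynomials in $\{x_i\}_{i\in T}$ over $\Qset(x_j : j\notin T)$. This does not increase sparsities, and for each monomial $\bx^{\bgamma}$ of $f$ it keeps a monomial whose $T$-part is $\bgamma|_T$. Next substitute $x_i\mapsto a_i x_i$ for $i\in T$. This rescales coefficients by nonzero constants and changes neither supports nor sparsities. It turns the relation into $f(\bx+\mathbf 1_T)\equiv g(\bx)$. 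By symmetry ($g(\bx-\mathbf 1_T)\equiv f(\bx)$, and $x_i\mapsto -x_i$ preserves supports), it suffices to prove the bound for monomials $\bx^{\bgamma}$ of $f$, with $\bgamma\in\Nset^T$.

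For the core step, fix $\bgamma\in\supp(f)$ and let $B=\{\bbeta\in\Nset^T:\bbeta\le\bgamma\}$, so $|B|=\prod_{i\in T}(\gamma_i+1)$. Consider the family of partial derivatives $\partial^{\bbeta}$ with $\bbeta\in B$. Differentiation commutes with shifts, so
\[
(\partial^{\bbeta}f)(\bx+\mathbf 1_T)\equiv \partial^{\bbeta}g(\bx)\qquad(\bbeta\in B).
\]
Here $\partial^{\bbeta}f$ is supported on $\{\bdelta-\bbeta:\bdelta\in\supp f\}$ and $\partial^{\bbeta}g$ on $\{\bepsilon-\bbeta:\bepsilon\in\supp g\}$. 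The coefficient of $\bx^{\bepsilon-\bbeta}$ in $\partial^{\bbeta}g$ is expressed through the $f$-side via Taylor expansion:
\[
\sum_{\bdelta\in\supp f} c_{\bdelta}\,\frac{\bdelta!}{(\bdelta-\bepsilon)!\,(\bepsilon-\bbeta)!}.
\]
I would build a matrix $M$ with rows indexed by $B$ and columns indexed by pairs $(\bdelta,\bepsilon)\in\supp f\times\supp g$. The entry $M_{\bbeta,(\bdelta,\bepsilon)}$ is the contribution $c_{\bdelta}\binom{\bdelta}{\bepsilon}\binom{\bepsilon}{\bbeta}$, up to normalization of the row by $\bbeta!$. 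Then $\rank M\le st$ trivially. The goal is to show that the rows are linearly independent, i.e.\ $\rank M=|B|$. Equivalently, no nontrivial linear functional $\lambda$ on $B$ annihilates all columns. Such a $\lambda$ corresponds to a nonzero polynomial $P(\bz)=\sum_{\bbeta\in B}\lambda_{\bbeta}\binom{\bz}{\bbeta}$, of degree $\le\gamma_i$ in each $z_i$, vanishing at every $\bepsilon\in\supp g$ with weights coming from $f$.

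The main obstacle is exactly this independence step: turning the identity $f(\bx+\mathbf 1_T)\equiv g(\bx)$ into the statement that a nonzero polynomial $P$ of multidegree $\le\bgamma$ cannot be killed by all $st$ column constraints. My first attempt would be a triangularity argument. Choose a monomial order in which $\bgamma$ is extremal among the $f$-monomials relevant to the box $B$, for instance by applying a generic weight, replacing $\bgamma$ by a vertex of $\mathrm{conv}(\supp f)$, and handling the remaining monomials by a separate face argument. Then show that the column block with $\bdelta=\bgamma$ restricted to $\bepsilon\le\bgamma$ has the full binomial (Pascal-type) structure $\binom{\bgamma}{\bepsilon}\binom{\bepsilon}{\bbeta}$. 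This structure is invertible on $B$, since $(\bx+\mathbf 1)^{\bgamma}$ has all $|B|$ monomials. The identity with $g$ is what forces all these $\bepsilon$ to actually occur in the pairing.

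If the vertex reduction does not handle non-extremal $\bgamma$ cleanly, the fallback is an induction on $|T|$. Fix one variable $x_i$, $i\in T$, and expand in powers of $x_i$. In one variable, apply the classical fact that a $t$-sparse nonzero polynomial has root multiplicity at most $t-1$ at a nonzero point, a Descartes/Wronskian-type bound, to each coefficient slice. This yields a factor $(\gamma_i+1)$ against a budget that multiplies correctly to $st$.
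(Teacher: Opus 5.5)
The normalization steps are fine, but the core step fails as set up. In your matrix $M$, the column indexed by $(\bm{\delta},\bm{\epsilon})$ is the scalar $c_{\bm{\delta}}\binom{\bm{\delta}}{\bm{\epsilon}}$ times the vector $\bigl(\binom{\bm{\epsilon}}{\bbeta}\bigr)_{\bbeta\in B}$ (up to the row normalization by $\bbeta!$, which does not affect rank). That vector depends on $\bm{\epsilon}$ alone. Hence $\rank M\le t$, not merely $\le st$, and the row independence you aim for would prove $\prod_{i\in\supp(\ba)}(\gamma_i+1)\le t$. This is false. Take $f=(x-1)^d$ and $g=x^d$: then $f(x+1)=g(x)$, $s=d+1$, $t=1$, and the monomial $x^d$ of $f$ has $\gamma+1=d+1>t$, while the true bound $st=d+1$ is attained. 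Your own reformulation shows the problem. The weights coming from $f$ are nonzero scalars, so $\lambda$ only has to make $P$ vanish on the $t$ points of $\supp(g)$, and such a nonzero $P$ exists whenever $|B|>t$. The same example defeats the triangularity plan. Here $\bgamma=d$ is a vertex of the Newton polytope, yet $\supp(g)=\{d\}$, so the identity does not force the $\bm{\epsilon}\preceq\bgamma$ to occur in $\supp(g)$. Cancellation among the different expansions $(\bx+\mathbf 1)^{\bm{\delta}}$ is exactly what the theorem must control, and it is where the factor $s$ has to come from. Pairing $f$-monomials with $g$-monomials in a matrix does not capture it.

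The fallback does not supply the missing mechanism either. To use the root-multiplicity bound you need a concrete $st$-sparse nonzero polynomial vanishing to order $\gamma_i$ at $-1$, and you do not say which polynomial that is; producing it is the real content. The paper isolates one shifted monomial using operators that are diagonal in the shifted basis. The operator $E_i=(x_i+a_i)\partial_{x_i}$ satisfies $\bE^{\bv}(\bx+\ba)^{\be}=\be^{\bv}(\bx+\ba)^{\be}$. So, after choosing coefficients $r_{\bv}$ on a cone-closed set $T$ of $s$ multi-indices that interpolate the indicator of $\bgamma$ on the exponent vectors of $f$, the operator $\sum_{\bv\in T}r_{\bv}\bE^{\bv}$ maps $f(\bx+\ba)$ to a nonzero multiple of $(\bx+\ba)^{\bgamma}$. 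Apply the same operator to the $t$ ordinary monomials of $g$. Each $\bE^{\bv}$ only lowers an exponent by some $\bu\preceq\bv$, and cone-closedness keeps $\bu\in T$, so the result has at most $st$ monomials. In one variable, with $f=\sum_k c_k x^{\alpha_k}$, this is the operator $\prod_{k\ne j}\bigl((x+1)\partial_x-\alpha_k\bigr)$ with $T=\{0,\ldots,s-1\}$. Only after this isolation step does a sparsity count, or your root-multiplicity fact, finish the argument.
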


In fact, the theorem we prove is much stronger than the statement above. It applies already when $g(\bx)$ and $f(\bx+\ba)$ agree only in degrees at least some
threshold $r$. In that case, for every monomial of $f$ of degree at least $r$, the part of its shifted expansion that can contribute to degrees at least $r$
has sparsity at most $st$. This truncated form is what is used during the
degree-by-degree descent in the proof of \cref{thm:intro-main}; see
\cref{lem:truncated-structural}. The stronger version is needed for our analysis to work.

\begin{remark}
The inequality in \cref{thm:intro-structural} does not bound the total degree
of $f$ or $g$. It only bounds the exponents in the coordinates in which the
shift is nonzero. This is the relevant restriction for sparse representation:
variables outside $\supp(\ba)$ do not create binomial expansions, while a
monomial $\bx^{\bgamma}$ expands into
\[
        \prod_{i\in\supp(\ba)}(\gamma_i+1)
\]
terms in the shifted coordinates. Thus, the theorem does not say anything about how large the exponents in variables outside the support of the shift could be; it rules out a large binomial expansion in the shifted coordinates.
\end{remark}

The second ingredient, needed for bit complexity over $\Qset$, is that a small
rational witness always exists. This is not automatic from the existence of an
arbitrary shift: a priori, the solution set could contain only shifts over an
algebraic extension, or contain rational shifts only with very large
coordinates. We prove that neither obstruction occurs for sparse SET.

\begin{theorem}[Small rational witness]
\label{thm:intro-small-shift}
\label{small-shift}
Let $f,g\in\Qset[x_1,\ldots,x_n]$ have sparsities $s$ and $t$, degree at most
$d$, and coefficient height at most $\tau$. If
$S_{f,g}(\overline{\Qset})\neq\emptyset$, then $S_{f,g}(\Qset)$ contains a
shift whose coordinates have height
\[
        \poly(n,s,t,\tau,\log d).
\]
Moreover, the polynomial bound is explicit.
\end{theorem}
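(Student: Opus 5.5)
The plan is to show that $S_{f,g}(\overline{\Qset})$ is an affine subspace defined over $\Qset$, and then pick a canonical rational point in it whose coordinates come out of Cramer's rule applied to a linear system with small entries.

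\emph{Step 1 (affine structure).} Suppose $\ba,\bb\in S_{f,g}(\overline{\Qset})$. Then $f(\bx+\ba-\bb)\equiv f(\bx)$, so $\bv=\ba-\bb$ is a period of $f$. In characteristic zero, $f(\bx+k\bv)\equiv f(\bx)$ holds for all $k\in\Nset$. Hence $f(\bx+\lambda\bv)-f(\bx)$, viewed as a polynomial in $\lambda$, vanishes identically. Taking the $\lambda$-derivative at $0$ gives $\sum_i v_i\,\partial_i f\equiv 0$, and the converse also holds. So the set of periods is the kernel $V$ of a linear system with rational coefficients $c\gamma_i$, one equation per pair (monomial of $f$, variable). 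This system has at most $ns$ equations, and its entries have height at most $\tau+\log d$. It follows that $S_{f,g}(\overline{\Qset})=\ba_0+V$.

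\emph{Step 2 (rationality).} Since $f,g$ have rational coefficients, every Galois conjugate $\sigma(\ba_0)$ also lies in $S_{f,g}$. An affine combination (the average) of the finitely many conjugates of $\ba_0$ is therefore a rational point of $S_{f,g}$. So $S_{f,g}(\Qset)\neq\emptyset$, and $S_{f,g}(\Qset)=\ba_0'+V(\Qset)$.

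\emph{Step 3 (small witness).} Choose a set $F$ of coordinates such that $\{\be_j\}_{j\in F}$ spans a complement of $V^{\perp}$-dual, i.e.\ such that $S_{f,g}$ meets $\{a_j=0: j\in F\}$ in exactly one point $\ba^*$; this choice uses only the small linear system defining $V$. It remains to show that $\ba^*$ is the unique solution of a square linear system with entries of height $\poly(n,s,t,\tau,\log d)$. I would obtain this system from the degree-by-degree Taylor comparison. Write $f_k,g_k$ for the homogeneous parts. Comparing degree $D-1$ gives $\sum_i a_i\partial_i f_D=g_{D-1}-f_{D-1}$, which is linear in $\ba$. The lower degrees give equations that are linear in the not-yet-determined component of $\ba$ once the earlier components are fixed. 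This is exactly the descent used in the algorithm. By the truncated structural bound (\cref{lem:truncated-structural}), every intermediate polynomial appearing in the descent is a sum of at most $st$ terms per monomial of $f$ in the shifted coordinates. By \cref{prop:intermediate-height}, the intermediate quantities have polynomial height. In particular, each binomial coefficient $\binom{\gamma_i}{\ell}$ that occurs involves $\prod_{i\in\supp(\ba)}(\gamma_i+1)\le st$, so its height is $O(\log(st)\log d)$. Stacking these linear equations together with $a_j=0$ for $j\in F$ gives a full-rank system with $\poly(n,s,t)$ rows. Cramer's rule, via Hadamard's bound, then bounds the height of $\ba^*$ by $\poly(n,s,t,\tau,\log d)$.

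\emph{Main obstacle.} The difficulty is in Step 3. The later equations in the descent have coefficients that depend on the components of $\ba$ fixed earlier, so heights could compound over up to $n$ rounds. I would first try to show that each round determines $\ba$ modulo a smaller rational subspace through equations whose right-hand sides are sparse polynomials with controlled height, as given by \cref{prop:intermediate-height}. The intended effect is that the height grows additively, by a polynomial amount per round, rather than multiplicatively. If this fails, the fallback is to bound heights via an effective Nullstellensatz applied to the (few, sparse) equations restricted to $\supp(\ba)$.
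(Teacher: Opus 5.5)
Your Steps 1 and 2 are correct and match the paper (\cref{lem:S_f} and the Galois-averaging proof of \cref{prop:rational:shift}). Step 3, however, is the real content of the theorem, and it has a genuine gap.

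\textbf{Step 3 is circular.} \cref{prop:intermediate-height} only bounds $\Ht(\ba_q)$ on runs in which the global height check on $P_q$ passes. On a yes instance, that check passes because of \cref{prop:sparsity-and-height-of-P}, and its hypothesis is exactly a witness $\bc$ of polynomial height. That is the statement you are proving. The same dependence appears at the start of the proof of \cref{thm:sparse-shift-correctness}.

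\textbf{The compounding is real.} Without that input, no additive-growth argument is available. The stage-$q$ right-hand side contains terms $c_{\bgamma}\binom{\bgamma}{\bw}\ba_{q-1}^{\bw}$ with $|\bw|$ as large as $st$. So the height of the previously fixed data can be multiplied by up to $st$ in each of up to $s$ stages, and the recursive estimate is of order $(st)^{s}$, i.e.\ exponential. Feeding one fixed point $\ba^*$ into all stages does not help either. Cramer's rule then gives $\Ht(\ba^*)\le C+C'\,\Ht(P_q)$, while $\Ht(P_q)$ is only bounded by roughly $st\cdot\Ht(\ba^*)$, and together these are vacuous.

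\textbf{The fallback is not a proof.} Arithmetic Nullstellensatz and B\'ezout-type height bounds for isolated points are exponential in the number of unknowns, so they do not give a polynomial bound here.

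\textbf{The missing idea.} You need a way to get the first-order datum $D_{\bc}f=\sum_i c_i\partial_i f$ with polynomial height even when the rational shift $\bc$ is huge. This datum does not depend on which $\bc\in S_{f,g}$ is used, and its homogeneous parts are exactly the descent's right-hand sides $D_{\bc}f_{\delta_q}$. The paper obtains it in one shot rather than degree by degree:
\begin{enumerate}[label=(\roman*), leftmargin=*]
\item With $S=\supp(\bc)$, \cref{lem:truncated-structural} for $r=0$ shows that $W=\langle\partial^{\bv}f:\supp(\bv)\subseteq S\rangle$ has dimension at most $s^2t$. It also shows that every nonzero such derivative has order at most $st-1$.
\item Solve the single system $\sum_{j\ge 2}y_j\partial^{\bv_j}f=g-f$ over a derivative basis of $W$, whose entries have height $O(\tau+st\log d)$. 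This gives a small-height operator $Y=1+N$ with $Yf=g$, as in \eqref{differential-operator}.
\item $Y$ and $T_{\bc}=\exp(D_{\bc})$ are constant-coefficient operators that agree on $f$, so they agree on all of $W$, and both are unipotent there.
\item Hence $(\log Y)f=(\log T_{\bc})f=D_{\bc}f$. Here $\log Y$ is a finite sum of fewer than $st$ powers of $N$, so it has polynomial height.
\item A basic rational solution of $D_{\bb}f=D_{\bc}f$ is then a shift of polynomial height.
\end{enumerate}
Your square-system framing with the constraints $a_j=0$ ($j\in F$) would close the argument once this polynomial-height right-hand side is available. That right-hand side is exactly what the descent cannot supply.
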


The proof of \cref{thm:intro-small-shift} is deferred to
\cref{app:missing-proofs}. It uses the structural sparsity bound again. Once a shift
exists, the structural theorem restricts the derivatives of
$f$ in the shifted coordinates to a polynomial-dimensional space. Inside this
space, the translation from $f$ to $g$ can be represented by a rational
differential operator of polynomial height. Taking a finite logarithm of this
operator and solving one linear system gives a rational shift of polynomial
bit-size.

As a further corollary of the structural bound, we obtain a sparse-time
white-box PIT algorithm for shifted-sparse expressions. Here the shifts are
given explicitly.

\begin{corollary}[White-box PIT for shifted-sparse polynomials]
\label{cor:intro-verify-shift}
Let $f,g\in\Qset[x_1,\ldots,x_n]$ be sparse polynomials of sparsities $s$ and
$t$ and degree at most $d$, and let $\ba,\bb\in\Qset^n$ be given. There is a
deterministic algorithm using
$\poly(n,s,t,\log d)$ arithmetic operations that decides whether
\[
        f(\bx+\ba)\equiv g(\bx+\bb).
\]
Further, if the input coefficients as well as the vectors $\ba,\bb$ have height at most $\tau$, then the algorithm can be implemented using $\poly(n,s,t,\tau,\log d)$ bit operations.
\end{corollary}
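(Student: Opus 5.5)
The plan is to reduce to a single shift and then use the structural sparsity bound, \cref{thm:intro-structural}, as a cheap test. Set $\bc\coloneqq \ba-\bb$. Then $f(\bx+\ba)\equiv g(\bx+\bb)$ holds iff $f(\bx+\bc)\equiv g(\bx)$, by substituting $\bx\mapsto \bx-\bb$, which is an automorphism of $\Qset[\bx]$. So the task is to decide whether the explicitly given vector $\bc$ lies in $S_{f,g}$. The danger is that $f(\bx+\bc)$ may have exponentially many monomials, so we must never expand it blindly. The heights of the entries of $\bc$ are $O(\tau)$.

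\textbf{Step 1 (necessary condition).} Compute $\supp(\bc)$. For every monomial $\bx^{\bgamma}$ of $f$ or of $g$, compute $\prod_{i\in\supp(\bc)}(\gamma_i+1)$. Multiply the factors one at a time and stop as soon as the running product exceeds $st$; this needs only $O(n)$ comparisons of $O(\log d)$-bit numbers per monomial. If some monomial violates the bound, then by \cref{thm:intro-structural} we have $\bc\notin S_{f,g}$, and we reject.

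\textbf{Step 2 (explicit expansion).} If Step 1 passes, every monomial of $f$ expands under the shift into at most $st$ terms. Hence $f(\bx+\bc)$ has at most $s\cdot st$ monomials, and we write it out explicitly. For a monomial $c\,\bx^{\bgamma}$ we expand $\prod_{i\in\supp(\bc)}(x_i+c_i)^{\gamma_i}$ as
\[
c\sum_{\bbeta\le\bgamma}\prod_{i\in\supp(\bc)}\binom{\gamma_i}{\beta_i}c_i^{\gamma_i-\beta_i}\,\bx^{\bbeta},
\]
where $\beta_i=\gamma_i$ for $i\notin\supp(\bc)$. Each $\gamma_i$ with $i\in\supp(\bc)$ is at most $st-1$ because the product is bounded. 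So every binomial coefficient and power $c_i^{\gamma_i-\beta_i}$ is computed with $\poly(s,t)$ arithmetic operations, and its bit size is $\poly(s,t,\tau)$. After that we collect like monomials, sorting by exponent vectors of $O(n\log d)$ bits, and compare coefficient-wise with $g$. The operation count is $\poly(n,s,t,\log d)$.

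\textbf{Step 3 (bit complexity).} Exponents outside $\supp(\bc)$ are copied unchanged and may be as large as $d$. Exponents inside the support are below $st$. So all rational numbers that appear have height $\poly(s,t,\tau)$ plus $O(\log(s^2t))$ for the summation, and the total bit cost is $\poly(n,s,t,\tau,\log d)$.

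The one subtle point is the justification of Step 1's rejection. \cref{thm:intro-structural} is stated for $f,g$ with $\ba\in S_{f,g}$, and we apply it to the pair $(f,g)$ with candidate shift $\bc$. It says that if $\bc$ were a valid shift, the bound would hold for all monomials of both $f$ and $g$. Its contrapositive is exactly our rejection rule, and no field-extension issues arise because $\bc$ is given.
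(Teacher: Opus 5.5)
Your proposal is correct and follows the paper's proof essentially step for step: reduce to $\bc=\ba-\bb$, reject as soon as some $\prod_{i\in\supp(\bc)}(\gamma_i+1)$ exceeds $st$ (sound by \cref{thm:intro-structural}), and otherwise expand the at most $s\cdot st$ shifted terms, collect like terms, and compare with $g$. Checking the monomials of $g$ as well is a harmless extra test. The only small imprecision is in Step~3: summing up to $s$ rationals with different denominators can multiply the height by roughly $s$, not just add $O(\log(s^2t))$, but the resulting bound is still polynomial.
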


The proof is given after \cref{lem:truncated-structural} in
\cref{sec:struc-lem}.


\begin{remark}
Without the structural theorem, the immediate deterministic approach gives only a bound that depends polynomially on $d$. Indeed,
$f(\bx+\ba)-g(\bx+\bb)$ can be represented as a sum of products of univariate
polynomials, and hence as a read-once oblivious algebraic branching program.
Applying the white-box PIT algorithm of Raz and Shpilka
\cite{razshpilka2005} gives a deterministic verification algorithm using
$\poly(n,s,t,d)$ arithmetic operations. 
The corollary above improves this to $\poly(n,s,t,\log d)$ by using the structural bound to keep the computation in sparse representation. 
\end{remark}

\subsection{Proof Overview}

We describe the main ideas behind the algorithm and the analysis. The algorithm is based on the homogeneous descent of~\cite{dviroliveirashpilkai2014icalp}. Their framework gives efficient algorithms for low-degree arithmetic circuits, and in particular, its running time is polynomial in the degree bound. For sparse polynomials, however, the degree can be exponentially larger than its binary encoding, so polynomial dependence on $d$ is not sufficient.

Two additional issues have to be handled in the sparse setting. First, a sparse polynomial need not remain sparse after a shift: even $(x+a)^d$ has $d+1$
monomials, which is exponential in $\log d$. In several variables the blow-up can be $(d+1)^n$. Thus we cannot simply expand shifted monomials during the
descent. Second, over $\Qset$, an arithmetic algorithm is {\em not enough}: one must prove that a polynomial-height rational shift exists and that the algorithm can
output such a small witness.
\paragraph{The \cite{dviroliveirashpilkai2014icalp} descent.}
Write $f_i:=H^i(f)$ and $g_i:=H^i(g)$ for the homogeneous components of degree
$i$. For a vector $\bv\in\Qset^n$, let
\[
        D_{\bv}:=\sum_{\ell=1}^n v_\ell\frac{\partial}{\partial x_\ell},
        \qquad
        f_i^{(r)}(\bv,\bx):=D_{\bv}^r f_i(\bx).
\]
So, $D_{\bv}$ denotes the directional derivative in direction $\bv$. Moreover, $f_i^{(1)}(\bv,\bx)$ is the first-order contribution of the degree-$i$
part of $f$ under a shift in direction $\bv$, while
$f_i^{(r)}(\bv,\bx)$ is the corresponding $r$th-order contribution. Suppose first that $g(\bx)=f(\bx+\bc)$ for some shift $\bc$. The homogeneous
Taylor expansion is {\em triangular} with respect to degree: the degree-$i$ part of
$g$ receives contributions only from the degree-$i$ and higher homogeneous parts
of $f$. More explicitly,
\[
        g_i
        =
        f_i
        +
        f_{i+1}^{(1)}(\bc,\bx)
        +
        \sum_{j=i+2}^{d}
        \frac{1}{(j-i)!} f_j^{(j-i)}(\bc,\bx).
\]
The term $f_i$ is known, and the term
$f_{i+1}^{(1)}(\bc,\bx)$ is linear in the unknown shift $\bc$. The remaining
sum is a higher-order tail coming from degrees strictly larger than $i+1$. This suggests a {\em top-down descent}. At degree $d-1$, there is no tail, so
$$
        g_{d-1}=f_{d-1}+f_d^{(1)}(\bc,\bx)
$$
is just a linear system in the coordinates of $\bc$. At degree $d-2$, the
equation contains the extra term $\frac{1}{2}f_d^{(2)}(\bc,\bx)$, which looks {\em quadratic} in $\bc$. The key observation of~\cite{dviroliveirashpilkai2014icalp} is that this term is already determined once the first derivative of $f_d$ has been matched.
Indeed, if two directions $\ba$ and $\bc$ satisfy
$$
        f_d^{(1)}(\ba,\bx)=f_d^{(1)}(\bc,\bx),
$$
then $D_{\ba-\bc}f_d=0$. Since constant-coefficient differential operators commute, this implies that all higher directional derivatives of $f_d$ in the directions $\ba$ and $\bc$ also agree. Thus the degree-$(d-2)$ equation again
has only one new unknown contribution, namely $f_{d-1}^{(1)}(\bc,\bx)$,
which is {\em linear} in $\bc$. More generally, at every stage of the descent, all higher-order terms coming from previously processed homogeneous components are
already fixed, and the only new condition to impose is a linear equation for the first derivative of the next homogeneous component.

\paragraph{Modification in the sparse setting.}~We apply this descent only at degrees where a new first-order term can appear.
Let
\[
        d=\delta_1>\delta_2>\cdots>\delta_m>0
\]
be the nonzero positive homogeneous degrees of $f$. Since $f$ has sparsity $s$,
we have $m\le s$. At stage $q$, the algorithm has already processed
$\delta_1,\ldots,\delta_{q-1}$ and has a candidate shift $\ba_{q-1}$ satisfying
\[
        f_{\delta_p}^{(1)}(\ba_{q-1},\bx)
        =
        f_{\delta_p}^{(1)}(\bc,\bx)
        \qquad\text{for all }p<q.
\]
By the derivative observation above, the higher-order contributions of these
already processed components are the same for $\ba_{q-1}$ and for the true
shift $\bc$. Therefore the {\em tail}
\[
        P_q(\bx)
        :=
        H^{\delta_q-1}
        \left(
            H^{\ge \delta_q+1}(f)(\bx+\ba_{q-1})
        \right)
\]
is equal to the tail one would obtain from $\bc$:
\[
        P_q(\bx)
        =
        H^{\delta_q-1}
        \left(
            H^{\ge \delta_q+1}(f)(\bx+\bc)
        \right).
\]
After subtracting this tail, the next condition is again linear:
\[
        g_{\delta_q-1}
        =
        f_{\delta_q-1}
        +
        f_{\delta_q}^{(1)}(\bb,\bx)
        +
        P_q(\bx).
\]
The algorithm adds this linear constraint to the previous ones and chooses a
solution $\ba_q$. Thus the nonlinear shift-equivalence problem is reduced, one
homogeneous degree at a time, to simple linear algebra.

\paragraph{Why this is not yet an efficient algorithm.}
The discussion above explains why the shift problem can be reduced to a sequence of linear systems. It does not yet explain why these systems can be constructed and eventually solved from the sparse input in polynomial time. The difficulty lies in the computation of the tail $P_q$. A single monomial $\bx^{\bgamma}$, after the shift $\bx \mapsto \bx + \ba$, produces $\prod_{i\in \operatorname{supp}(\ba)}(\gamma_i+1)$ many monomials, which may be exponential in the sparse input length. Thus, explicitly forming shifted monomials is too expensive.

One could keep shifted monomials in a circuit representation, but then the algorithm has left the sparse representation. The linear systems in the descent
would have entries represented by circuits (of high degree), and constructing or comparing those entries would require PIT for a larger circuit class. The general framework of
\cite{dviroliveirashpilkai2014icalp} then gives dependence polynomial in the degree $d$, or in the relevant circuit/PIT parameters, rather than a polynomial in $\log d$.

There is a second issue over $\Qset$. A naive bit-complexity analysis would
bound the tail $P_q$ in terms of the height of the current candidate $\ba_{q-1}$, since $P_q$ is computed using $\ba_{q-1}$. The right-hand side of the next linear system would then have height depending on $\Ht(\ba_{q-1})$, and
standard bounds for basic rational solutions (defined in \cref{app:missing-proofs}, \cref{def:basic-rational-solution}) would give a recursive estimate for $\Ht(\ba_q)$ in terms of $\Ht(\ba_{q-1})$. Iterating such an estimate over up to $s$ stages need not yield the desired polynomial bound in the sparse input parameters. Thus, the tails must be bounded globally, using a fixed small witnessing shift, rather than recursively through the candidates produced by the algorithm.

\paragraph{The structural lemma and proof sketch of~\cref{thm:intro-structural}.}
The main new structural input is that the bad shifted expansions cannot occur
along a yes instance. The truncated form of the lemma says the following.
Suppose
\[
        H^{\ge r}(g)=H^{\ge r}(f(\bx+\ba)).
\]
Then for every monomial $\bx^{\bgamma}$ of $H^{\ge r}(f)$,
\[
        \sp\left(H^{\ge r}\big((\bx+\ba)^{\bgamma}\big)\right)\le st.
\]
Thus, once the shifted polynomial agrees with $g$ in degrees at least $r$, every shifted monomial that contributes to those degrees is itself sparse {\em after truncation}. In particular, large shifted expansions cannot be hidden by
cancellations in the final sparse polynomial.

The proof isolates one shifted monomial from the identity
$g(\bx)=f(\bx+\ba)$. Let $f(\bx)=\sum_{j=1}^s c_j \bx^{\balpha_j}$, and relabel
the target monomial so that $\bgamma=\balpha_1$. We use Euler-type differential operators, defined as follows:
$$
        E_i=(x_i+a_i)\partial_{x_i}.
$$
These operators are diagonal on the shifted monomial basis: $E_i(\bx+\ba)^\be=e_i(\bx+\ba)^\be$. Hence, for a multi-index $\bv$, the following holds:~$\bE^\bv(\bx+\ba)^{\balpha_j}\,=\,\balpha_j^\bv(\bx+\ba)^{\balpha_j}$. Thus the operators $\bE^\bv$ act on shifted monomials exactly as evaluation functions on the exponent vectors $\balpha_j$.

Using interpolation on the finite set of exponent vectors $\{\balpha_1,\ldots,\balpha_s\}$, we choose a set $T$ of $s$ multi-indices with the following {\em cone-closed} property: if $\bv\in T$ and $\bu\preceq\bv$ coordinatewise, then $\bu\in T$. The concept of a cone-closed basis was introduced in~\cite{forbes2018towards}. We also choose coefficients $r_\bv$ such that
$$
\sum_{\bv\in T} r_\bv \balpha_j^\bv
=
\begin{cases}
1 & j=1,\\
0 & j\ne 1.
\end{cases}
$$
Now define the isolating operator
\[
        \mathcal I_{\bgamma}:=\sum_{\bv\in T} r_\bv \bE^\bv .
\]
By construction, $\mathcal I_{\bgamma}$ {\em kills} all shifted monomials except the one with exponent
$\bgamma$. Therefore $\mathcal I_{\bgamma}(g)=c_1(\bx+\ba)^{\bgamma}$.
The same operator $\mathcal I_{\bgamma}$ can also be viewed as acting on the ordinary sparse expansion of $g$. This is exactly where {\em sparsity enters}. Since $E_i=x_i\partial_{x_i}+a_i\partial_{x_i}$, each application of $E_i$ to an ordinary monomial either preserves its exponent or decreases it by one in coordinate $i$. More generally, applying $\bE^\bv$ to a
monomial $\bx^{\bbeta}$ can only produce monomials of the form $\bx^{\bbeta-\bu}$ with $\bu\preceq\bv$.

Because $T$ is cone-closed, every such possible exponent drop $\bu$ is again contained in $T$. Hence, if $g(\bx)=\sum_{k=1}^t d_k\bx^{\bbeta_k}$, then the support of $\mathcal I_{\bgamma}(g)$ is contained in $\{\bbeta_k-\bu : k\in[t],\ \bu\in T\}$. This set has size at most $t|T|=st$. Since $\mathcal I_{\bgamma}(g)$ is a nonzero scalar multiple of $(\bx+\ba)^{\bgamma}$, the shifted monomial $(\bx+\ba)^{\bgamma}$ itself has support at most $st$.

The truncated version needed by the algorithm follows from the same argument. Indeed, the operators $E_i$ never increase total degree, so the isolation
argument can be applied directly to the high-degree identity
\[
        H^{\ge r}(g)=H^{\ge r}(f(\bx+\ba)).
\]
This gives the same support bound for the truncated shifted monomials $H^{\ge r}((\bx+\ba)^{\bgamma})$.

This is the point where our proof differs from the general framework of
\cite{dviroliveirashpilkai2014icalp}. Sparse polynomials are not closed under arbitrary shifts. The structural lemma saves us here: although an arbitrary shift of a sparse polynomial may be dense, the shifted
pieces that arise along a yes instance of the descent are forced to remain sparse. For details, see~\cref{lem:truncated-structural}.

\paragraph{Computing the tails without full expansion.}
At the beginning of stage $q$, the maintained invariant gives
\[
        H^{\ge \delta_q}(f(\bx+\ba_{q-1}))
        =
        H^{\ge \delta_q}(g).
\]
Therefore the structural lemma applies with $r=\delta_q$ and
$\ba=\ba_{q-1}$. Thus, in particular, for every monomial
$c_{\bgamma}\bx^{\bgamma}$ in $H^{\ge\delta_q}(f)$, the high-degree truncation
\[
        H^{\ge \delta_q}
        \big(c_{\bgamma}(\bx+\ba_{q-1})^{\bgamma}\big)
\]
has at most $st$ monomials in a yes instance.

The obvious way to use this statement would be to expand
$c_{\bgamma}(\bx+\ba_{q-1})^{\bgamma}$ and then keep only the terms of degree at
least $\delta_q$. Unfortunately, this is {\em not efficient}: the shifted monomial may be dense as a
whole, even though its high-degree truncation is sparse. Nor can one simply
invoke sparse interpolation as a black box, since the polynomial naturally
available for evaluation is the entire shifted monomial, whose lower-degree
part may be dense. The sparse object guaranteed by the structural lemma is {\textbf{only}} the high-degree truncation.

Instead, the algorithm enumerates only the exact degree-$\delta_q$ layer, which
is part of the structurally bounded high-degree truncation. Using a truncated
dynamic-programming enumeration, the algorithm either lists all drops that
produce degree $\delta_q$, if there are at most $st$, or detects that there are
more than $st$ and rejects. Thus it never enumerates the potentially dense
lower-degree part; for details see~\cref{lem:compute-Pik}.

The desired tail $P_q$, however, lies one degree lower, in degree
$\delta_q-1$. The algorithm does not enumerate this lower layer directly. Let
$j=|\bgamma|$ and let
\[
        C_\gamma(\bx)
        :=
        H^{\delta_q}
        \big(c_{\bgamma}(\bx+\ba_{q-1})^{\bgamma}\big).
\]
Then the degree-$(\delta_q-1)$ contribution of
$c_{\bgamma}(\bx+\ba_{q-1})^{\bgamma}$ is obtained from the controlled degree
$\delta_q$ part by
\[
        H^{\delta_q-1}
        \big(c_{\bgamma}(\bx+\ba_{q-1})^{\bgamma}\big)
        =
        \frac{1}{j-\delta_q+1}\,
        D_{\ba_{q-1}} C_\gamma(\bx).
\]
Thus the algorithm enumerates only this structurally bounded layer and then
obtains the one lower layer needed for the descent by a single directional
derivative. Summing these contributions over the monomials of $H^{>\delta_q}(f)$
and collecting like terms gives $P_q$ in sparse representation.

\paragraph{Bit complexity assuming a small rational shift.}
We first explain the bit-complexity analysis under the following assumption: there is a true shift $\bc\in S_{f,g}$ whose height is polynomially bounded in the input parameters. Later, we will explain why such a shift always exists over $\Qset$ whenever $S_{f,g}$ is nonempty.

Under this assumption, the key point is that the tails are not recursively bounded. As mentioned before, if one does it recursively, then iterating over the stages of the descent would give an exponential bound on the output height. The way around this is to prove that the tails are controlled by a single global witness. Although $P_q$ is computed syntactically using $\ba_{q-1}$, the descent invariant gives
\[
        P_q(\bx)
        =
        H^{\delta_q-1}
        \left(
            H^{\ge \delta_q+1}(f)(\bx+\bc)
        \right)
\]
for one fixed true shift $\bc$. Hence, after collecting terms, the height of
$P_q$ is controlled in terms of the input and the height of this same $\bc$,
not in terms of a sequence of previously computed candidates. Once the tails
have polynomial height, the linear systems have polynomial-height coefficients,
and choosing basic rational solutions gives polynomial-height candidates; the
definition and the height bound are in \cref{app:missing-proofs},
\cref{def:basic-rational-solution,lem:basic-rational-solution-height-bit}. The
algorithm also imposes global support checks and coefficient-height checks on
the computed tails throughout the descent. These checks are harmless in a yes
instance, by the structural lemma and the small-witness bound. On a no
instance, the algorithm may reject because one of these checks fails or because
one of the linear systems is infeasible.
If neither happens, the algorithm obtains a candidate shift $\ba$. It then
performs a final verification of the identity $f(\bx+\ba)\equiv g(\bx)$.

This final check can again be done very efficiently: by the structural lemma, if the identity holds, then every shifted monomial of $f(\bx+\ba)$ has support at most $st$ after truncation with $r=0$, and hence the whole expansion has support at most $s^2t$. The algorithm expands these bounded pieces, collects like terms, and compares the resulting sparse polynomial with $g$. Thus, the algorithm never
accepts a false candidate. For the detailed analysis, we refer to~\cref{prop:intermediate-height}.

\paragraph{Existence of a small rational shift.}~It remains to justify that a small rational shift exists whenever any shift
exists. First, if a shift exists over an algebraic extension, then a rational
shift exists over $\Qset$: the complex shift set $S_{f,g}(\Cset)=\{\ba\in\Cset^n:f(\bx+\ba)=g(\bx)\}$ is an affine subspace, and averaging an algebraic point over its Galois conjugates gives a rational point in the same affine subspace.

The height bound uses the structural lemma once more. Let $\bc$ be any rational shift and let $S=\operatorname{supp}(\bc)$. Then every monomial
$\bx^{\bgamma}$ of $f$ satisfies $\prod_{i\in S}(\gamma_i+1)\le st$. Consequently, the derivatives of $f$ in the variables of $S$ span only a
polynomial-dimensional space. Inside this space, the translation $f(\bx)\mapsto f(\bx+\bc)=g(\bx)$ can be represented by a rational differential
operator of polynomial height; see the proof of \cref{small-shift} in
\cref{app:missing-proofs}, in particular \cref{differential-operator}. This operator is {\em unipotent} on the relevant derivative space, so its logarithm is a finite sum. The logarithm recovers the first-order translation direction on $f$, namely a polynomial of the form $D_{\bc}f$. 

Solving the linear system $D_{\bb}f = D_{\bc}f$
and taking a basic rational solution using \cref{lem:basic-rational-solution-height-bit} in \cref{app:missing-proofs} gives a vector $\bb$ of polynomial height.
By the derivative characterization of shifts from
\cite{dviroliveirashpilkai2014icalp}, this $\bb$ is again a shift from $f$ to $g$. Thus, whenever Sparse SET is a yes instance over $\Qset$, there is a witness of polynomial bit-size, and the algorithm outputs one. For details, refer to~\cref{small-shift}; its proof is deferred to \cref{app:missing-proofs}.

\subsection{Related work} \label{sec:related-work}

\paragraph{Algorithms related to sparse polynomials.}
Several basic algorithmic problems have been studied for sparse polynomials, including identity testing,
interpolation, factorization, and changes of origin. Sparse identity testing and interpolation go back to the works of Zippel and Ben-Or--Tiwari, and have been
followed by several deterministic and randomized algorithms for sparse polynomials and straight-line programs
\cite{zippel1979eurosam,benortiwari1988stoc,klivansspielman2001stoc, blaserdhardtliptonvishnoi2009ipl,gargschost2009tcs,
arnoldgiesbrechtroche2013casc}. Sparse factorization has also been extensively studied, beginning with algorithms for bounded-degree factors of sparse
polynomials and followed by more recent structural and algorithmic results for factors of sparse polynomials
\cite{lenstra1999lacunary,avendanokricksombra2006issac,
bhargavasarafvolkovich2020jacm,chuyoonshpilka2026sparse,
bhattacharjeekotharyraisaraf2026sparse}. For a survey-style introduction to the algorithmic landscape for sparse polynomials, including arithmetic, interpolation, and factorization, see Roche~\cite{roche2018sparse}.

\paragraph{Sparse shifts and sparsification.}
The algorithmic study of shifts of sparse polynomials goes back to Grigoriev and Karpinski, Lakshman and Saunders, and Grigoriev and Lakshman
\cite{grigorievkarpinski1993aaecc,lakshmans94issac,
grigorievlakshman1995issac}. Their work considered questions such as identity
testing, interpolation, and finding changes of origin under which a polynomial
has a smaller sparsity. Subsequent work studied algorithms for computing sparsest shifted representations and for interpolation in shifted-sparse models
\cite{giesbrechtkaltofenlee2003jsc,giesbrechtroche2010cc}.

\paragraph{Target-free sparse shifts and sparsification.}
A related but different problem is the {\em target-free sparse-shift} problem: given a
polynomial $f$, decide whether there exists a shift $\ba$ such that
$\sp(f(\bx+\ba))$ is small, or smaller than $\sp(f)$. Recent work shows that this
sparsification problem is computationally hard
\cite{chillaragrichenershpilka2023stacs,blaserduttajindal2026soda}. This should be contrasted with Sparse SET: here both endpoints $f$ and $g$ are part of the
input, and the question is whether $g(\bx)=f(\bx+\ba)$ for some shift $\ba$. Thus, our problem is target-given rather than target-free.

\paragraph{Polynomial equivalence and orbit testing.}
There is also a broader polynomial-equivalence literature in which one allows
general linear or affine changes of variables. In this setting one asks whether
there are $A\in \mathsf{GL}_n(\Qset)$ and $\bb\in\Qset^n$ such that $f(A\bx+\bb) \equiv g(\bx)$, or whether the orbit of a given polynomial intersects a prescribed structured
class. Polynomial equivalence is often viewed as an algebraic analogue of graph
isomorphism: already, the equivalence of homogeneous cubic forms captures graph isomorphism
\cite{agrawalsaxena2006stacs,kayal2011soda}. Kayal studied efficient algorithms for special cases of polynomial equivalence and later affine projections of
polynomials \cite{kayal2011soda,kayal2012stoc}. 

More recently, there has been work on equivalence testing for structured polynomial families, including
read-once arithmetic formulas, design polynomials, and the Hamiltonian cycle
polynomial
\cite{guptasahathankey2023soda,baraskardewansaha2024stacs,
dewan2026hamiltonian}. Baraskar, Dewan, Saha, and Sinha also proved NP-hardness of testing equivalence to sparse polynomials and to constant-support polynomials
under affine transformations
\cite{basarkardewansahasinahardness2024icalp}. Sparse SET is the translation-only, target-given special case, namely the case $A=I$ with both sparse endpoints $f$ and $g$ given as input. 

\paragraph{Shift equivalence testing.}
Grigoriev studied SET itself as a decision and search problem for densely
represented polynomials and gave deterministic, randomized, and quantum
algorithms in different characteristics \cite{grigorievtcs1997}. In
characteristic zero, the deterministic algorithm runs in time polynomial in the
dense representation. For $n$-variate polynomials of total degree at most $d$,
the dense representation has size $\binom{n+d}{d}$ and the running time is
$n^{O(d)}$. Padding a sparse input with zero coefficients therefore does not
give an efficient algorithm when the exponents are encoded in binary.

\paragraph{The result of \cite{dviroliveirashpilkai2014icalp} and its implication for sparse SET.}~Dvir, Oliveira, and Shpilka~\cite{dviroliveirashpilkai2014icalp} studied SET for arithmetic circuits, in both the white-box and black-box models \cite{dviroliveirashpilkai2014icalp}. They gave
a randomized algorithm running in time polynomial in the circuit size, the number of variables, and the degree, and showed that the randomness in their algorithm is used only for polynomial identity testing. Consequently,
derandomizing PIT would derandomize SET in their framework. They also obtained a general deterministic reduction for restricted circuit classes. In one form,
their result applies to classes $\mathcal{M}_1,\mathcal{M}_2$ such that
$\mathcal{M}_1$ is closed under taking homogeneous components and first-order directional derivatives, while $\mathcal{M}_2$ is closed under shifts. The
number of arithmetic operations is polynomial in $n,d$, and in the size of a hitting set, or of a white-box PIT algorithm, for the class of differences
$\mathcal{M}_1-\mathcal{M}_2$. They also pointed out an
analogy with algorithms for black-box groups, such as Sims' algorithm: their procedure can be viewed as finding the affine space of all shifts from $f$ to $g$, rather than merely deciding whether this space is empty
\cite{seress2003permutation}.

Although Section~6 of \cite{dviroliveirashpilkai2014icalp} observes that the
sparse class is not closed under shifts and gives no direct sparse-class
algorithm, a low-degree bound follows indirectly via ROABPs. A sparse polynomial
\[
        f(\bx)=\sum_{i=1}^s c_i\prod_{j=1}^n x_j^{\alpha_{i,j}}
\]
is a sum of products of univariate polynomials, and hence has a read-once
oblivious algebraic branching program of width at most $s$ in any variable
order. Choose the same variable order for both inputs. After a shift,
\[
        f(\bx+\ba)=\sum_{i=1}^s c_i\prod_{j=1}^n (x_j+a_j)^{\alpha_{i,j}},
\]
it still has such a representation of size $\poly(n,s,d)$. Therefore, the
general white-box theorem of \cite{dviroliveirashpilkai2014icalp}, together
with white-box ROABP PIT \cite{razshpilka2005}, yields a deterministic algorithm
for explicit sparse inputs using $\poly(n,s,t,d)$ arithmetic operations. This
is an indirect consequence rather than a sparse-case result stated in DOS'14. It is efficient in
the ordinary low-degree regime, but it is not efficient in the high-degree sparsity setting: the dependence is polynomial in $d$, whereas the sparse input length is polynomial in $\log d$. Moreover, this framework does not by itself address bit complexity over $\Qset$, the existence of a rational shift of polynomial height, or the problem of outputting such a small witness.

\section{Preliminaries}

\paragraph{Notation.}
\begin{enumerate}

\item \textbf{Monomials.} For vectors $\bz=(z_1,\ldots,z_n)$ and $\bv\in\Nset^n$, we write
\[
\bz^{\bv}=z_1^{v_1}\cdots z_n^{v_n}.
\]
When $\bz$ is a vector of variables, then $\bz^\bv$ is the monomial with exponent vector given by $\bv$. When $\bz$ is a vector of field elements, then it is the monomial evaluated at the given point.

\item \textbf{Degree and derivatives.} Throughout the paper, degree means total degree. We define $|\bv| := v_1+\cdots+v_n$, $\bv! := v_1!\cdots v_n!$, and
$\partial^{\bv}=\partial_{x_1}^{v_1}\cdots \partial_{x_n}^{v_n}$, where $\partial_{x_i}$ is the partial derivative taken with respect to the variable $x_i$.

\item \textbf{Partial order.} We use the coordinate-wise partial order $\preceq$ on $\Nset^n$: for
$\bv,\bu\in\Nset^n$, we write $\bv\preceq \bu$ if and only if $v_i\le u_i$
for every $i\in[n]$. A set $T\subseteq\Nset^n$ is called \emph{cone-closed} if,
whenever $\bu\in T$ and $\bv\preceq \bu$, we also have $\bv\in T$.

\item \textbf{Polynomial notation.} For a polynomial $f$, $H^\ell(f)$ denotes the homogeneous part of degree $\ell$ of $f$, that is, the sum of all monomials of $f$ that have degree $\ell$. For two polynomials $f,g\in\mathbb K[\bx]$, $S_{f,g}(\mathbb K)$ denotes the set of all shift vectors $\ba\in\mathbb K^n$ such that $f(\bx + \ba) = g(\bx)$. The set of all shifts $\ba\in\mathbb K^n$ that leave $f$ invariant, that is, $f(\bx + \ba) = f(\bx)$, is denoted by $S_f(\mathbb K)$. Over $\Qset$, we write $S_{f,g}$ for $S_{f,g}(\Qset)$. For a polynomial $f(\bx)$, $\sp(f)$ denotes the sparsity of $f$, that is, the number of nonzero monomials in $f$. For a polynomial $p$, we also write
\[
        H^{\ge r}(p)\coloneqq \sum_{\ell\ge r}H^\ell(p).
\]

\item \textbf{Height.} A rational number $r$ is said to have height at most $\tau$ if it can be written as $r=p/q$, where $p,q\in\Z$, $q\neq 0$, and $|p|,|q|\le 2^{\tau}$. This is denoted by $\Ht(r)\le \tau$. The height of a sparse polynomial, vector, or matrix is the maximum height of its coefficients or entries, and is denoted by $\Ht(\cdot)$.

\begin{remark}
In parts of the literature, this quantity is called logarithmic height. In this paper we simply call it height. Up to constant factors it is the bit-size of the numerators and denominators, so we use height and bit-size interchangeably in polynomial bounds.
\end{remark}

\end{enumerate}


\begin{definition}[Directional derivative]
    For $r\ge 1$, the order-$r$ directional derivative of
    $f(\bx)\in\Qset[\bx]$ in direction $\ba\in\Qset^n$ is
    \[
        f^{(r)}(\ba,\bx)
        =
        \sum_{\mathbf e\in [n]^r}
        \left(\prod_{i=1}^r a_{e_i}\right)
        \frac{\partial^r f}{\partial x_{e_1}\cdots\partial x_{e_r}}.
    \]
    We also define $f^{(0)}(\ba,\bx)=f(\bx)$. For each homogeneous component
    $H^\ell(f)$ of $f$, define
    \[
        f_\ell^{(r)}(\ba,\bx)
        =
        \sum_{\mathbf e\in [n]^r}
        \left(\prod_{i=1}^r a_{e_i}\right)
        \frac{\partial^r H^\ell(f)}{\partial x_{e_1}\cdots\partial x_{e_r}}.
    \]
\end{definition}

The following lemma, due to Dvir, Oliveira and Shpilka \cite{dviroliveirashpilkai2014icalp}, states that if the first directional derivatives are identical for two directions $\ba$ and $\bb$, then all higher order directional derivatives are identical, too.

\begin{lemma}[Derivative equality under shifts, \cite{dviroliveirashpilkai2014icalp}]\label{derivative-equality-all-orders}
    Let $f\in\Qset[\bx]$ and $\ba,\bb\in\Qset^n$. If
    $f^{(1)}(\ba,\bx) = f^{(1)}(\bb,\bx)$, then
    $f^{(r)}(\ba,\bx) = f^{(r)}(\bb,\bx)$ for all $r \geq 1$.
\end{lemma}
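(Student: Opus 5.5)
The statement to prove is \cref{derivative-equality-all-orders}: if $f^{(1)}(\ba,\bx)=f^{(1)}(\bb,\bx)$, then $f^{(r)}(\ba,\bx)=f^{(r)}(\bb,\bx)$ for all $r\ge 1$. The plan is to rewrite the directional derivatives as powers of the first-order operator $D_{\bv}=\sum_{\ell} v_\ell\partial_{x_\ell}$ and then use commutativity.

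\textbf{Step 1: operator form.} Expanding $D_{\ba}^r=\bigl(\sum_\ell a_\ell\partial_{x_\ell}\bigr)^r$ as a sum over words $\mathbf e\in[n]^r$ gives
\[
f^{(r)}(\ba,\bx)=D_{\ba}^r f .
\]
This uses only that the partial derivatives $\partial_{x_\ell}$ commute and that $D_{\ba}$ is linear with constant coefficients. The hypothesis then reads $D_{\ba}f=D_{\bb}f$, which is equivalent to $D_{\ba-\bb}f=0$ by linearity of $\bv\mapsto D_{\bv}$.

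\textbf{Step 2: induction on $r$.} The base case $r=1$ is the hypothesis. For the inductive step, assume $D_{\ba}^{r}f=D_{\bb}^{r}f$ and write
\[
D_{\ba}^{r+1}f-D_{\bb}^{r+1}f
= D_{\ba}\bigl(D_{\ba}^{r}f-D_{\bb}^{r}f\bigr)+D_{\bb}^{r}\bigl(D_{\ba}-D_{\bb}\bigr)f .
\]
The first term vanishes by the induction hypothesis. For the second term, $D_{\ba}$ and $D_{\bb}$ commute, so $D_{\ba}D_{\bb}^r=D_{\bb}^rD_{\ba}$, which justifies the rearrangement. Then $D_{\bb}^r(D_{\ba}-D_{\bb})f=D_{\bb}^r D_{\ba-\bb}f=0$.

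A non-inductive alternative uses the factorization
\[
D_{\ba}^r-D_{\bb}^r=\Bigl(\sum_{k=0}^{r-1}D_{\ba}^{k}D_{\bb}^{r-1-k}\Bigr)D_{\ba-\bb},
\]
which is valid because the operators commute. Applying both sides to $f$ and using $D_{\ba-\bb}f=0$ gives the claim directly.

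\textbf{Main obstacle.} There is little real difficulty. The only point needing care is Step 1: identifying the paper's definition, which sums over ordered tuples $\mathbf e\in[n]^r$, with $D_{\ba}^r$. This amounts to checking that expanding the $r$-fold product of $\sum_\ell a_\ell\partial_{x_\ell}$ yields exactly those ordered terms, with the products $\prod_i a_{e_i}$ as coefficients. The homogeneous-component versions $f_\ell^{(r)}$ follow by the same argument applied to $H^\ell(f)$, or by noting that $D_{\bv}$ lowers degree by exactly one and therefore commutes with taking homogeneous components up to an index shift.
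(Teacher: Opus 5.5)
Your proof is correct and is essentially the argument the paper itself indicates in its proof overview, where the lemma is otherwise only cited from \cite{dviroliveirashpilkai2014icalp}. As there, you identify $f^{(r)}(\ba,\bx)$ with $D_{\ba}^r f$, rewrite the hypothesis as $D_{\ba-\bb}f=0$, and use that constant-coefficient differential operators commute; either your induction or the telescoping factorization then finishes it.
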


The next lemma allows us to write the homogeneous parts of a shifted polynomial in terms of their directional derivatives.

\begin{lemma}[Homogeneous Taylor expansion, \cite{dviroliveirashpilkai2014icalp}]\label{taylor-expansion-lemma}
    Let $\ba \in \Qset^n$ and let $f \in \Qset[\bx]$ have degree $d$.
    Then for $0\le i\le d$,
    \[
        H^i(f(\bx+\ba))
        =
        \sum_{j=i}^d\frac{1}{(j-i)!}f_j^{(j-i)}(\ba,\bx).
    \]
\end{lemma}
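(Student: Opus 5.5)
The statement just given is \cref{taylor-expansion-lemma}, the homogeneous Taylor expansion. The plan is to prove it by expanding each homogeneous component of $f$ separately with the multivariate Taylor formula and then sorting the result by degree.

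First write $f=\sum_{j=0}^d f_j$ with $f_j=H^j(f)$. Since $f\mapsto f(\bx+\ba)$ and $H^i$ are both linear, it suffices to show, for each fixed $j$,
\[
H^i\big(f_j(\bx+\ba)\big)=
\begin{cases}\frac{1}{(j-i)!}f_j^{(j-i)}(\ba,\bx) & i\le j,\\ 0 & i>j.\end{cases}
\]
Summing this over $j$ then gives the claim.

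For a single homogeneous $f_j$, use the Taylor formula
\[
f_j(\bx+\ba)=\sum_{\bv}\frac{\ba^{\bv}}{\bv!}\,\partial^{\bv}f_j(\bx).
\]
This is an exact finite identity. It can be checked on monomials by the binomial theorem, noting that $\binom{\gamma}{v}=\partial_x^{v}x^\gamma/(v!\,x^{\gamma-v})$, which needs no division issues in characteristic zero. Each $\partial^{\bv}f_j$ is either zero or homogeneous of degree $j-|\bv|$. So the degree-$i$ part of the sum is exactly the sum over $|\bv|=j-i$, and it is empty when $i>j$.

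Next, identify the group of terms with $|\bv|=r$ with $\frac1{r!}f_j^{(r)}(\ba,\bx)$. In the definition, $f_j^{(r)}$ is a sum over ordered tuples $\mathbf e\in[n]^r$. Since partial derivatives commute, each tuple contributes $\ba^{\bv}\partial^{\bv}f_j$, where $\bv$ counts how often each index occurs in $\mathbf e$. The number of tuples with a given count vector $\bv$ is the multinomial coefficient $r!/\bv!$. Hence
\[
f_j^{(r)}(\ba,\bx)=r!\sum_{|\bv|=r}\frac{\ba^{\bv}}{\bv!}\partial^{\bv}f_j,
\]
and dividing by $r!$ with $r=j-i$ finishes the computation.

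Nothing here is a real obstacle. The only care needed is the multinomial counting that relates ordered tuples $\mathbf e$ to multi-indices $\bv$, and recording that dividing by $r!$ and $\bv!$ is legitimate in characteristic zero, or in characteristic larger than $d$ since all these factorials are at most $d!$.
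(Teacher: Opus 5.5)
Your proof is correct and complete. The paper gives no proof of this lemma; it imports it from DOS'14, so there is no argument of the paper's to compare against.

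Your route is the standard derivation: apply the multi-index Taylor formula to each homogeneous component, read off the degree-$i$ part as the terms with $|\bv|=j-i$, and group the ordered tuples in $f_j^{(r)}(\ba,\bx)$ by count vector, each count vector $\bv$ arising from $r!/\bv!$ tuples. The $j=i$ term ($r=0$) comes from the single empty tuple, which matches the convention $f_i^{(0)}=H^i(f)$.

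The per-component identity you prove first is exactly the fact the paper invokes without proof in \cref{lem:tail-truncated-taylor}. The multi-index Taylor formula you use reappears in the proof of \cref{small-shift}.
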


Finally, if $\ba$ is a shift that transforms $f$ into $g$, then $\bb$ transforms $f$ into $g$, too, if and only if the first directional derivatives are identical.

\begin{lemma}[Shift space via first derivatives, \cite{dviroliveirashpilkai2014icalp}]\label{relation-between-two-shifts}
    Let $f,g\in\Qset[\bx]$ and $\ba,\bb\in\Qset^n$.
    If $\ba \in S_{f,g}$, then $\bb \in S_{f,g}$ if and only if $f_i^{(1)}(\bb,\bx)=f_i^{(1)}(\ba,\bx)$ for $0 \leq i \leq d$.
\end{lemma}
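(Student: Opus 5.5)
The plan is to work directly with the Taylor expansion. Write $f(\bx+\bz)=\sum_{\bv}\frac{1}{\bv!}\bz^{\bv}\,\partial^{\bv}f(\bx)$, and compare this with $g(\bx)$. By \cref{taylor-expansion-lemma}, and since all $f_i^{(r)}$ are determined by $f$ alone, the condition $\bb\in S_{f,g}$ is a system of polynomial equations in $\bb$. The proof then runs through the two implications in turn.

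For ($\Leftarrow$), assume $\ba\in S_{f,g}$ and $f_i^{(1)}(\bb,\bx)=f_i^{(1)}(\ba,\bx)$ for every $i$.
\begin{itemize}
\item Apply \cref{derivative-equality-all-orders} to each homogeneous component $H^i(f)$. This is legitimate because $f_i^{(r)}(\ba,\bx)$ is simply the order-$r$ directional derivative of the polynomial $H^i(f)$. It gives $f_i^{(r)}(\bb,\bx)=f_i^{(r)}(\ba,\bx)$ for all $r\ge1$ and all $i$.
\item Substitute into \cref{taylor-expansion-lemma}. Each homogeneous part $H^i(f(\bx+\bb))$ then equals $H^i(f(\bx+\ba))=H^i(g)$, so $f(\bx+\bb)\equiv g$.
\end{itemize}

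For ($\Rightarrow$), assume both $\ba$ and $\bb$ lie in $S_{f,g}$. Then $f(\bx+\ba)\equiv f(\bx+\bb)$, and substituting $\bx\mapsto \bx-\ba$ gives $f(\bx)\equiv f(\bx+\bw)$ with $\bw=\bb-\ba$. Iterating, $f(\bx+k\bw)\equiv f(\bx)$ for every integer $k\ge 0$.
\begin{itemize}
\item Consider $F(\lambda,\bx):=f(\bx+\lambda\bw)-f(\bx)$ as a polynomial in $\lambda$ with coefficients in $\Qset[\bx]$.
\item $F$ vanishes at infinitely many values $\lambda=k$, so it is identically zero. In characteristic larger than $d$ the values $k=0,\dots,d$ already suffice, since $\deg_\lambda F\le d$.
\item The coefficient of $\lambda^1$ in $F$ is $D_{\bw}f$, so $D_{\bw}f=0$.
\item $D_{\bw}$ maps degree-$i$ forms to degree-$(i-1)$ forms, so $D_{\bw}f=0$ splits into $D_{\bw}H^i(f)=0$ for each $i$.
\item By linearity of $D_{\bv}$ in $\bv$, $D_{\bw}H^i(f)=0$ is exactly $f_i^{(1)}(\bb,\bx)=f_i^{(1)}(\ba,\bx)$.
\end{itemize}

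There is no serious obstacle. The only delicate step is the $\lambda$-polynomial argument: it needs enough distinct integers $k$, which is why the characteristic zero or characteristic $>d$ assumption is used.
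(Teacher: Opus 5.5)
Your proof is correct. The paper itself gives no argument for this lemma; it imports it from DOS'14. In the paper's setup it is a short consequence of the other two quoted facts, $S_{f,g}(\Cset)=\ba+S_f(\Cset)$ and \cref{lem:S_f}. Indeed, $\bb\in S_{f,g}$ if and only if $\bb-\ba\in S_f$, if and only if $D_{\bb-\ba}f=0$. Splitting by degree, exactly as in your last two bullets, this is equivalent to $f_i^{(1)}(\bb,\bx)=f_i^{(1)}(\ba,\bx)$ for all $i$.

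What you do differently is make the argument self-contained. Your ($\Leftarrow$) direction avoids \cref{lem:S_f}: it compares the Taylor expansions at $\ba$ and $\bb$ directly, using \cref{derivative-equality-all-orders}. This is the same mechanism the paper later uses in the descent. Your ($\Rightarrow$) direction uses the polynomial $F(\lambda,\bx)=f(\bx+\lambda\bw)-f(\bx)$, which vanishes at every $\lambda=k\in\Nset$. This effectively reproves the nontrivial inclusion $S_f\subseteq\{\bu : D_{\bu}f=0\}$ of \cref{lem:S_f} rather than citing it.

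The citation route is shorter. Yours shows exactly where the characteristic assumption is used: in the $d+1$ distinct integer roots of $F$ and in the factorials of the Taylor expansion.
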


For a polynomial $f(\bx)$ of degree $d$, a shift $\ba\in\Qset^n$, and
$0\leq i \leq d$, we define:
\[
P_{i,\ba}(f) \coloneqq H^i\paren{H^{\ge i+2}(f)(\bx+\ba)}.
\]
\begin{lemma}\label{lem:tail-truncated-taylor}
For any polynomial $f\in \Qset[\bx]$, we have
\[
 P_{i,\ba}(f) = H^i\paren{H^{\ge i+2}(f)(\bx+\ba)} =
        \sum_{j=i+2}^d\frac{1}{(j-i)!}f_j^{(j-i)}(\ba,\bx).
    \]
\end{lemma}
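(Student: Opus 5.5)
The plan is to derive the formula directly from the homogeneous Taylor expansion, \cref{taylor-expansion-lemma}, applied not to $f$ but to the truncated polynomial $h\coloneqq H^{\ge i+2}(f)$. The whole argument is bookkeeping of homogeneous components, and it runs in three steps.

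First, identify the homogeneous components of $h$. Since $h$ consists exactly of the monomials of $f$ of degree at least $i+2$, we have $H^j(h)=H^j(f)$ for $j\ge i+2$ and $H^j(h)=0$ for $j\le i+1$. By linearity of the directional-derivative definition in the polynomial being differentiated, this gives
\[
h_j^{(r)}(\ba,\bx)=f_j^{(r)}(\ba,\bx)\quad\text{for } j\ge i+2,
\]
and $h_j^{(r)}(\ba,\bx)=0$ for $j\le i+1$, for every $r\ge 0$.

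Second, apply \cref{taylor-expansion-lemma} to $h$ at degree $i$:
\[
H^i(h(\bx+\ba))=\sum_{j=i}^{d}\frac{1}{(j-i)!}h_j^{(j-i)}(\ba,\bx).
\]
The lemma is stated for $0\le i\le \deg h$. If $h=0$, or if $\deg h<i$ (which here forces $h=0$, since all components of $h$ have degree at least $i+2$), then both sides of the claimed identity vanish. The right-hand side is an empty sum or consists of zero terms, because $f_j=0$ for $i+2\le j\le d$ whenever $h=0$. The same vacuity covers $i\ge d-1$, so we may assume $h\neq 0$. Then $\deg h\le d$, and the summation may be extended up to $d$ with vanishing terms.

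Third, substitute the first step into the sum. The terms with $j=i$ and $j=i+1$ vanish, and the remaining terms become $\frac{1}{(j-i)!}f_j^{(j-i)}(\ba,\bx)$ for $i+2\le j\le d$. This is exactly the claimed expression for $P_{i,\ba}(f)$.

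There is no real obstacle. The only point requiring care is the boundary case $h=0$ (equivalently, $f$ having no monomials of degree at least $i+2$), which must be checked separately so that \cref{taylor-expansion-lemma} is invoked only within its stated range.
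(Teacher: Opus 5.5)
Your proof is correct and is essentially the paper's argument. The paper first splits $H^{\ge i+2}(f)=\sum_{j\ge i+2}H^j(f)$ by linearity and applies the homogeneous Taylor expansion to each homogeneous piece, keeping only its single surviving term; you apply the expansion once to $h=H^{\ge i+2}(f)$ and discard the vanishing components, which is the same bookkeeping, and your explicit treatment of the case $h=0$ is a harmless extra.
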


\begin{proof}

    We have
    \begin{align*}
        H^i\paren{H^{\ge i+2}(f)(\bx+\ba)}
        &=
        H^i\paren{\sum_{j=i+2}^d H^j(f)(\bx+\ba)}\\
        &=
        \sum_{j=i+2}^d H^i\paren{H^j(f)(\bx+\ba)}
    \end{align*}
Applying \cref{taylor-expansion-lemma} to each $H^j(f)$ gives an expansion of $H^i\paren{H^j(f)(\bx+\ba)}$ in terms of the directional derivatives of the homogeneous components of $H^j(f)$. Since $H^j(f) = f_j$ has only one homogeneous component, we get a single term in the expansion:
    $$ H^i\paren{H^j(f)(\bx+\ba)} = \frac{1}{(j-i)!}f_j^{(j-i)}(\ba,\bx).$$
    Plugging this into the previous equation proves the lemma.
\end{proof}

\subsection{Properties of \texorpdfstring{$S_{f,g}$}{Sfg}}
Recall that for $f,g\in\Qset[\bx]$,
$$
S_{f,g}(\Cset) = \{\ba \in \Cset^n : f(\bx+\ba)=g(\bx)\}
$$
where the equality is understood as an identity in $\Cset[\bx]$. We first write two elementary facts about this shift set and then use them for the rational bit-complexity analysis.

\begin{lemma}[{\cite[Observation 4.2, Lemma 4.5]{dviroliveirashpilkai2014icalp}}]\label{lem:S_f}
Let $f \in \Cset[\bx]$, and define
$$
S_f(\Cset) := \{\bu \in \Cset^n : f(\bx+\bu)=f(\bx)\}.
$$
Then $S_f(\Cset)$ is a linear subspace of $\Cset^n$, given by $S_f(\Cset) = \{\ba \in \Cset^n\mid \sum_{i=1}^n a_i\frac{\partial f}{\partial x_i} = 0\}$.
\end{lemma}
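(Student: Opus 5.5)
We have to prove \cref{lem:S_f}: $S_f(\Cset)$ is a linear subspace and equals the kernel of the map $\ba\mapsto\sum_i a_i\,\partial f/\partial x_i$. That map is linear from $\Cset^n$ to $\Cset[\bx]$, so its kernel is automatically a subspace. The whole proof therefore reduces to the equality
\[
S_f(\Cset)=\Bigl\{\ba : \textstyle\sum_i a_i \partial_{x_i} f=0\Bigr\},
\]
which I would prove by two inclusions, using the homogeneous Taylor expansion (\cref{taylor-expansion-lemma}) and the derivative-equality lemma (\cref{derivative-equality-all-orders}). Both lemmas are stated over $\Qset$, but their proofs are formal and hold verbatim over $\Cset$.

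For ``$\supseteq$'': suppose $D_\ba f=f^{(1)}(\ba,\bx)=0$. Since the zero vector also has $f^{(1)}(\mathbf 0,\bx)=0$, \cref{derivative-equality-all-orders} gives $f^{(r)}(\ba,\bx)=f^{(r)}(\mathbf 0,\bx)=0$ for all $r\ge1$; more directly, $f^{(r)}(\ba,\bx)=D_\ba^r f=D_\ba^{r-1}(D_\ba f)=0$. The operator $D_\ba$ preserves homogeneity, lowering degree by one, so $D_\ba f=0$ forces $D_\ba f_j=0$ for every homogeneous component $f_j$, hence $f_j^{(r)}(\ba,\bx)=0$ for all $r\ge1$. \cref{taylor-expansion-lemma} then gives $H^i(f(\bx+\ba))=f_i$ for every $i$, that is, $f(\bx+\ba)=f(\bx)$.

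For ``$\subseteq$'': suppose $f(\bx+\ba)=f(\bx)$. I expect this to be the one nontrivial step, since the Taylor expansion only says that a sum of higher-order terms vanishes, and one must extract $D_\ba f=0$ from it. I would iterate: $f(\bx+k\ba)=f(\bx)$ for every $k\in\Nset$. The univariate polynomial $\lambda\mapsto f(\bx+\lambda\ba)-f(\bx)\in\Cset[\bx][\lambda]$ then has infinitely many roots, so it vanishes identically. Its coefficient of $\lambda$ is $D_\ba f$ (Taylor in $\lambda$; characteristic zero makes the factorials harmless), so $D_\ba f=0$.

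Alternatively one can argue with the Taylor expansion by downward induction on degree: the top-degree identity $f_{d-1}+D_\ba f_d=f_{d-1}$ gives $D_\ba f_d=0$, hence all higher derivatives of $f_d$ vanish, and the next degree then yields $D_\ba f_{d-1}=0$, and so on. This yields $D_\ba f=\sum_j D_\ba f_j=0$. Either route completes the proof, and linearity of the kernel finishes the claim.
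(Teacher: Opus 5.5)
Your proof is correct and complete. The paper gives no proof of this lemma; it quotes it from Dvir, Oliveira, and Shpilka (Observation~4.2 and Lemma~4.5 there). Your argument is therefore a self-contained replacement for that citation, built only from \cref{taylor-expansion-lemma} and \cref{derivative-equality-all-orders}.

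For $\supseteq$, \cref{taylor-expansion-lemma} is phrased componentwise, so $D_{\ba}f=0$ must first be split into $D_{\ba}f_j=0$ degree by degree. You do this, using that $D_{\ba}$ lowers degree by exactly one.

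You give two routes for $\subseteq$. The scaling argument is the shortest: $f(\bx+\lambda\ba)-f(\bx)\in\Cset[\bx][\lambda]$ vanishes at $\lambda=0,1,2,\ldots$, hence identically, and its $\lambda$-coefficient is $D_{\ba}f$. It also shows directly why $S_f(\Cset)$ is closed under complex scalars. That is the only non-obvious part of the subspace claim, since closure under addition and negation is immediate from the definition. The downward induction on homogeneous degree is the same triangular mechanism that drives \cref{alg:sparse-shift} and the proof of \cref{thm:sparse-shift-correctness}, so it fits the paper's framework more naturally.

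Both routes go through unchanged over any field of characteristic larger than $d$. For the scaling argument, the $d+1$ distinct roots $\lambda=0,\ldots,d$ already suffice because $\deg_\lambda\le d$. This is consistent with the paper's claim that its structural results extend to that setting. Some such hypothesis is genuinely needed: over $\mathbb{F}_p$, the polynomial $x^p$ violates $\supseteq$ and $x^p-x$ violates $\subseteq$.
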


\begin{lemma}[{\cite[Lemma 4.4]{dviroliveirashpilkai2014icalp}}]
For $f,g \in \Cset[\bx]$, the set $S_{f,g}(\Cset)$ is either empty or an affine subspace of $\Cset^n$. More precisely, if $\ba_0 \in S_{f,g}(\Cset)$, then
$$
S_{f,g}(\Cset)=\ba_0+S_f(\Cset).
$$
\end{lemma}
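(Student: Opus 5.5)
We prove the lemma by a direct computation, using \cref{lem:S_f} for the structure of $S_f(\Cset)$. Assume $S_{f,g}(\Cset)\neq\emptyset$ and fix $\ba_0\in S_{f,g}(\Cset)$, so that $f(\bx+\ba_0)=g(\bx)$ as polynomials. It suffices to show the two inclusions $\ba_0+S_f(\Cset)\subseteq S_{f,g}(\Cset)$ and $S_{f,g}(\Cset)\subseteq \ba_0+S_f(\Cset)$. Once these hold, $S_{f,g}(\Cset)$ is a translate of the linear subspace $S_f(\Cset)$, hence an affine subspace.

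\textbf{First inclusion.} Let $\bu\in S_f(\Cset)$, so that $f(\bx+\bu)=f(\bx)$ as an identity in $\Cset[\bx]$. This is a formal identity, so we may substitute $\bx\mapsto\bx+\ba_0$ on both sides. This gives
\[
f(\bx+\ba_0+\bu)=f(\bx+\ba_0)=g(\bx),
\]
and therefore $\ba_0+\bu\in S_{f,g}(\Cset)$.

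\textbf{Second inclusion.} Let $\bb\in S_{f,g}(\Cset)$. Then $f(\bx+\bb)=g(\bx)=f(\bx+\ba_0)$. Substituting $\bx\mapsto\bx-\ba_0$, which is again legitimate for formal identities, gives
\[
f(\bx+(\bb-\ba_0))=f(\bx).
\]
Hence $\bb-\ba_0\in S_f(\Cset)$, that is, $\bb\in\ba_0+S_f(\Cset)$.

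No step here presents a real obstacle. The only point to check is that composing a polynomial identity with an affine change of variables preserves the identity, which holds because substitution is a ring homomorphism of $\Cset[\bx]$. The fact that $S_f(\Cset)$ is a linear subspace is taken from \cref{lem:S_f}. If one wanted this lemma to be self-contained, that fact also follows from the same substitution argument: $S_f(\Cset)$ is closed under addition, contains $\mathbf 0$, and is closed under negation. Closure under complex scalar multiples is the less obvious part, and for it one would use the derivative characterization in \cref{lem:S_f}.
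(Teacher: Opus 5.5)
Your proof is correct: the two substitutions give exactly $S_{f,g}(\Cset)=\ba_0+S_f(\Cset)$, and $S_f(\Cset)$ is a linear subspace by \cref{lem:S_f}, since it is the kernel of the linear map $\ba\mapsto\sum_i a_i\,\partial f/\partial x_i$. The paper does not prove this lemma but only cites it, and your translation argument is the standard proof; if you want closure under scalars without derivatives, note that $f(\bx+k\bu)=f(\bx)$ for all $k\in\Z$ forces $f(\bx+\lambda\bu)-f(\bx)\in\Cset[\bx][\lambda]$ to vanish identically.
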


\begin{proposition}[Existence of a shift in $\Qset^n$] \label{prop:rational:shift}
Let $f,g \in \Qset[\bx]$. If
$$
S_{f,g}(\Cset) := \{\ba \in \Cset^n : f(\bx+\ba)=g(\bx)\}
$$
is nonempty, then $S_{f,g}(\Qset)$ contains a rational point. Equivalently, there exists $\bq \in \Qset^n$ such that
$$
f(\bx+\bq)=g(\bx).
$$
\end{proposition}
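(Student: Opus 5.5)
The plan is to show that $S_{f,g}(\Cset)$ is an affine subspace cut out by linear equations with rational coefficients; a nonempty such space always has a rational point. One route is Galois averaging, the other is direct linear algebra over $\Qset$. I would use the linear-algebra route, since it avoids any issue with infinite extensions.

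\textbf{Step 1: reduce to a finite extension.} Suppose $S_{f,g}(\Cset)\neq\emptyset$. The condition $f(\bx+\bz)=g(\bx)$ is a finite system of polynomial equations $F_\alpha(\bz)=0$ with rational coefficients, namely the coefficients of $f(\bx+\bz)-g(\bx)$. Since $\Cset$ is algebraically closed and the system has a solution, the weak Nullstellensatz shows the ideal $(F_\alpha)$ is proper over $\Qset$. Hence the system also has a solution $\ba\in\overline{\Qset}^n$, so $\ba$ lies in a finite Galois extension $\mathbb L/\Qset$.

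\textbf{Step 2: linear description of the solution set.} By \cref{relation-between-two-shifts}, a vector $\bb\in\mathbb L^n$ lies in $S_{f,g}(\mathbb L)$ if and only if $f_i^{(1)}(\bb,\bx)=f_i^{(1)}(\ba,\bx)$ for all $0\le i\le d$. This is the condition $D_{\bb-\ba}f=0$, which is equivalent to $D_{\bb}f=D_{\ba}f$ because the homogeneous parts can be separated.

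\textbf{Step 3: averaging.} For each $\sigma\in\mathrm{Gal}(\mathbb L/\Qset)$, apply $\sigma$ coefficientwise to the identity $f(\bx+\ba)=g(\bx)$. Since $f,g$ have rational coefficients, this gives $f(\bx+\sigma(\ba))=g(\bx)$, so $\sigma(\ba)\in S_{f,g}(\mathbb L)$.

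By \cref{lem:S_f} together with the affine-structure lemma, $S_{f,g}$ is an affine space. An affine combination of its points with weights summing to $1$ therefore stays in it. Concretely, each $\sigma(\ba)-\ba$ lies in $S_f$, a linear space, so any average of these differences also lies in $S_f$. Take the average
\[
\bq=\frac{1}{|G|}\sum_{\sigma\in G}\sigma(\ba).
\]
It satisfies $\bq-\ba\in S_f(\mathbb L)$ and hence $f(\bx+\bq)=g(\bx)$. Moreover, $\bq$ is fixed by every element of $G$, so $\bq\in\Qset^n$.

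\textbf{Alternative for Step 3.} Instead of averaging, note that $D_{\bb}f=D_{\ba}f$ is a linear system in $\bb$ whose coefficient matrix is rational, since it is built from the coefficients of the $\partial f/\partial x_\ell$. Its right-hand side, however, lies in $\mathbb L$. To get a rational right-hand side, expand $\ba$ in a $\Qset$-basis of $\mathbb L$. The linear system $\sum_\ell b_\ell\,\partial_\ell f=\sum_\ell a_\ell\,\partial_\ell f$ then splits into rational components. This is messier than averaging, so I would present the averaging argument.

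\textbf{Main obstacle.} The only delicate point is Step 1, passing from $\Cset$ to an algebraic point. Everything else is formal.
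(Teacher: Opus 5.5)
Your proof is correct and follows the paper's route: the weak Nullstellensatz gives a shift $\ba\in\overline{\Qset}^n$, and the average of its Galois conjugates stays in the affine space $\ba+S_f$ and is $G$-fixed, hence rational. Step~2 is not needed, and your sketched alternative would additionally require showing $D_{\ba}f\in\Qset[\bx]$, which needs its own argument (e.g.\ Galois invariance), so presenting the averaging argument is the right choice. One minor slip in Step~1: properness of $(F_\alpha)$ does not need the Nullstellensatz, since evaluating a relation $1=\sum_\alpha H_\alpha F_\alpha$ at the complex solution gives $1=0$; it is the Nullstellensatz that then turns properness into a point of $\overline{\Qset}^n$.
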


\noindent The proof is deferred to \cref{app:missing-proofs}.

\section{Structural Sparsity Bounds}
We first prove the structural statement that drives the algorithm. The proof
needs to isolate one chosen shifted monomial from
$g(\bx)=f(\bx+\ba)$ while keeping control over what happens when the same
operator is applied to the sparse monomial expansion of $g$. The next lemma
provides exactly the indexing set needed for this. It gives an interpolation
basis for the exponent vectors of the monomials of $f$, but with the additional
cone-closed property: if a multi-index is selected, then every smaller
multi-index is selected as well. This closure is what later prevents the
differential operator from creating too many possible output monomials.

\subsection{Cone-closed basis}

\begin{lemma}[Cone-closed interpolation basis]\label{lem:cone-closed-basis}
    Let
    \[
        B=\{\bb_1,\ldots,\bb_s\}\subseteq \Nset^n
    \]
    be a finite set of exponent vectors. Assume that the elements of $B$ remain
    distinct as points of $\Qset^n$. For every $\bv\in\Nset^n$, define
    $M_\bv\in\Qset^s$ by
    \[
        (M_\bv)_j\coloneqq \bb_j^{\bv}
        \qquad(j\in[s]).
    \]
    Then there is a cone-closed set $T\subseteq\Nset^n$ such that $\{M_\bv:\bv\in T\}$ is a basis of $\Qset^s$. In particular, $|T|=s$.
\end{lemma}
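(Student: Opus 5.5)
We will prove the statement by a greedy construction along a degree-compatible monomial order, which is the standard way of obtaining a staircase (order ideal) basis.

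Consider the evaluation map $\phi:\Qset[\bz]\to\Qset^s$, $p\mapsto (p(\bb_1),\ldots,p(\bb_s))$. Then $M_\bv=\phi(\bz^\bv)$. Since the points $\bb_j$ are distinct, $\phi$ is surjective: Lagrange-type separating polynomials exist, built from products of linear forms $z_i-(\bb_k)_i$ with $(\bb_j)_i\ne(\bb_k)_i$. Hence the kernel $I$ is an ideal of codimension $s$ in $\Qset[\bz]$, and $\Qset[\bz]/I\cong\Qset^s$.

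Fix a monomial order $<$, for instance graded lexicographic, and let $T$ be the set of standard monomials of $I$, namely those exponents $\bv$ such that $\bz^\bv$ is not the leading monomial of any element of $I$. The proof then runs through three steps.

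First, $T$ is cone-closed. If $\bu\preceq\bv$ and $\bz^\bu=\mathrm{LM}(h)$ for some $h\in I$, then $\bz^{\bv-\bu}h\in I$ has leading monomial $\bz^\bv$, because monomial orders are multiplicative. So the complement of $T$ is an upward-closed set (the leading-term ideal), and $T$ is downward closed.

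Second, $\{\bz^\bv:\bv\in T\}$ is linearly independent modulo $I$. A nontrivial combination lying in $I$ would have leading monomial in $T$, which is a contradiction.

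Third, these monomials span $\Qset[\bz]/I$. This follows by the usual reduction argument: take a counterexample polynomial with minimal leading monomial and subtract a suitable multiple of an element of $I$ with the same leading monomial. This step needs well-ordering of the monomial order.

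Together, the three steps show that $\{M_\bv:\bv\in T\}=\phi(\{\bz^\bv\})$ is a basis of $\Qset^s$, and so $|T|=s$.

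If we wish to avoid Gröbner language, there is an equivalent elementary version. Enumerate the monomials in increasing order $<$ and put $\bv$ into $T$ exactly when $M_\bv\notin\vspan\{M_\bu:\bu<\bv\}$. Cone-closedness then needs the following argument. If $M_\bu\in\vspan\{M_{\bw}:\bw<\bu\}$, then the polynomial $\bz^\bu-\sum_{\bw<\bu} c_\bw\bz^\bw$ vanishes on $B$. Multiplying it by $\bz^{\bv-\bu}$ shows that $M_\bv$ lies in the span of the $M_{\bw+\bv-\bu}$, and all of these satisfy $\bw+\bv-\bu<\bv$. Hence $\bv\notin T$. The greedy process terminates with exactly $s$ elements because the $M_\bv$ span $\Qset^s$, which holds since $\phi$ is surjective. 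For termination and finiteness we also note that each element of $T$ has degree at most $s-1$.

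The main obstacle is modest. It is justifying surjectivity of $\phi$ and the multiplicativity argument for downward closure; the latter is the key point and rests on the compatibility of the monomial order with multiplication.
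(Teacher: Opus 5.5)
Your proposal is correct, and its elementary version is essentially the paper's proof. Both rely on surjectivity of evaluation at distinct points (the paper gets it from the Chinese remainder theorem rather than from explicit products of separating linear forms), then choose rank-increasing exponents greedily along graded lexicographic order, and obtain cone-closedness by multiplying the vanishing relation of a non-selected exponent by the complementary monomial (the paper's Hadamard product with $M_{\bu-\bv}$). Your Gr\"obner formulation repackages the same argument: $\bz^\bv$ is a leading monomial of an element of $I$ exactly when $M_\bv\in\vspan\{M_\bu:\bu<\bv\}$, so the standard monomials coincide with the greedy set $T$.
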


\begin{proof}
    Let $\operatorname{ev}_B:\Qset[z_1,\ldots,z_n]\to\Qset^s$ be the evaluation map
    \[
        P\mapsto (P(\bb_1),\ldots,P(\bb_s)).
    \]
    We first check that this map is surjective. For each $m\in[s]$, let
    \[
        \mathfrak m_m=(z_1-b_{m,1},\ldots,z_n-b_{m,n})
        \]
    be the ideal of polynomials vanishing at $\bb_m$. Note that reducing a polynomial $P$ modulo
    $\mathfrak m_m$ is nothing but evaluating $P$ at the point $\bb_m$.
    
    The ideals $\mathfrak m_m$ are maximal, that is,
    they are not contained in any other proper ideal.
    Since the points $\bb_1,\ldots,\bb_s$ are pairwise distinct, the ideals are pairwise
    coprime. By the Chinese remainder theorem, the quotient map
    \[
        \Qset[z_1,\ldots,z_n]\to
        \prod_{m=1}^s \Qset[z_1,\ldots,z_n]/\mathfrak m_m
        \cong \Qset^s
    \]
    is surjective, and this quotient map is exactly $\operatorname{ev}_B$.
    The image of the monomial $\bz^{\bv}$ under this map is exactly $M_\bv$,
    so the family $\{M_\bv:\bv\in\Nset^n\}$ spans $\Qset^s$.

    Now fix the graded lexicographic order $\le_{\mathrm{grlex}}$ on
    $\Nset^n$: vectors are ordered first by total degree $|\bv|$ and then ties are
    broken lexicographically. This order is a well-order
    and refines $\preceq$, meaning that $\bv\preceq \bu$ implies
    $\bv\le_{\mathrm{grlex}}\bu$. Moreover, it is compatible with
    addition: if $\bv<_{\mathrm{grlex}}\bu$, then
    $\bv+\bw<_{\mathrm{grlex}}\bu+\bw$ for every $\bw\in\Nset^n$.

    We construct $T$ by choosing only dimension-increasing vectors. Start with
    $T_0=\emptyset$. If the current span is not all of $\Qset^s$, then choose
    $\bu_{m+1}$ to be the least vector, in graded lexicographic order, such that
    \[
        M_{\bu_{m+1}}\notin \vspan\{M_\bv:\bv\in T_m\},
    \]
    and set $T_{m+1}=T_m\cup\{\bu_{m+1}\}$. Such a vector exists whenever the
    current span is not all of $\Qset^s$, because the full family
    $\{M_\bv:\bv\in\Nset^n\}$ spans $\Qset^s$. Each step increases the dimension
    of the span by one, so after exactly $s$ steps the chosen vectors form a
    basis of $\Qset^s$. Let $T=T_s$.

    When $\bu_{m+1}$ is chosen, minimality implies that every
    $M_\bq$ with $\bq<_{\mathrm{grlex}}\bu_{m+1}$ already lies in
    $\vspan\{M_\bv:\bv\in T_m\}$. After adding $\bu_{m+1}$, the current span
    therefore contains every $M_\bq$ with
    $\bq\le_{\mathrm{grlex}}\bu_{m+1}$. In particular, no smaller vector can
    become rank-increasing later, so the selected vectors are increasing in
    graded lexicographic order.

    We now show that $T$ is cone-closed. Suppose $\bu\in T$, $\bv\preceq \bu$,
    and $\bv\notin T$. Since grlex refines $\preceq$, we have
    $\bv<_{\mathrm{grlex}}\bu$. Write $\bu=\bu_{m+1}$ (so $\bu$ is chosen at
    step $m+1$). Because $\bv$ occurs earlier in the grlex order and was not
    chosen when it was encountered, we have a dependence
    \[
        M_\bv=\sum_{\bw\in T_m,\ \bw<_{\mathrm{grlex}}\bv}\lambda_\bw M_\bw.
    \]
    Since $M_\br\odot M_\bq=M_{\br+\bq}$ for the Hadamard product $\odot$, which is defined as the pointwise product of the vectors,
    taking the Hadamard product with $M_{\bu-\bv}$ gives
    \[
        M_\bu
        =
        M_{\bu-\bv}\odot M_\bv
        =
        \sum_{\bw\in T_m,\ \bw<_{\mathrm{grlex}}\bv}\lambda_\bw M_{\bw+\bu-\bv}.
    \]
    Because graded lexicographic order is compatible with addition and
    $\bw<_{\mathrm{grlex}}\bv$, each index $\bw+\bu-\bv$ satisfies
    $\bw+\bu-\bv<_{\mathrm{grlex}}\bu$. By minimality of $\bu_{m+1}$, every
    such $M_{\bw+\bu-\bv}$ already lies in
    $\vspan\{M_\bq:\bq\in T_m\}$. Hence $M_\bu\in
    \vspan\{M_\bq:\bq\in T_m\}$, contradicting the choice of $\bu_{m+1}$ as
    rank-increasing. Therefore no such $\bv$ exists, and $T$ is cone-closed.
\end{proof}

\subsection{Structural lemma} \label{sec:struc-lem}

\begin{lemma}\label{lem:truncated-structural}
    Let $f,g\in\Qset[\bx]$ be polynomials of degree $d$ with sparsities $s$ and
    $t$, respectively.
    Let $\ba\in\Qset^n$, and suppose that
    \[
        H^{\ge r}(g(\bx))=H^{\ge r}(f(\bx+\ba)).
    \]
    Then, for every monomial $\bx^{\bgamma}$ appearing in $H^{\ge r}(f)$,
    \[
        \sp\left(H^{\ge r}\big((\bx+\ba)^{\bgamma}\big)\right)\le st.
    \]
\end{lemma}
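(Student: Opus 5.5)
The plan is to run the isolating-operator argument from the proof overview, restricted to the high-degree truncation. Write $f=\sum_{j=1}^s c_j\bx^{\balpha_j}$ with distinct exponent vectors and all $c_j\neq 0$. Fix the target monomial $\bx^{\bgamma}$ of $H^{\ge r}(f)$ and relabel so that $\bgamma=\balpha_1$. For $i\in[n]$ consider the Euler-type operators $E_i=(x_i+a_i)\partial_{x_i}$. They commute and satisfy $E_i(\bx+\ba)^{\be}=e_i(\bx+\ba)^{\be}$. Hence for $\bv\in\Nset^n$ we have $\bE^{\bv}(\bx+\ba)^{\balpha_j}=\balpha_j^{\bv}(\bx+\ba)^{\balpha_j}$.

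Apply \cref{lem:cone-closed-basis} to $B=\{\balpha_1,\ldots,\balpha_s\}$; these exponent vectors are distinct in $\Qset^n$. This gives a cone-closed $T$ with $|T|=s$ such that the vectors $M_\bv$, $\bv\in T$, form a basis of $\Qset^s$. Choose $r_\bv$ with $\sum_{\bv\in T}r_\bv M_\bv=e_1$ and set $\mathcal I=\sum_{\bv\in T}r_\bv\bE^{\bv}$. Then $\mathcal I\big(f(\bx+\ba)\big)=c_1(\bx+\ba)^{\bgamma}$.

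The key point is that $\mathcal I$ commutes with the truncation $H^{\ge r}$ in the right sense. Since $E_i=x_i\partial_{x_i}+a_i\partial_{x_i}$, each $E_i$ maps a homogeneous polynomial of degree $\ell$ to a sum of homogeneous parts of degrees $\ell$ and $\ell-1$. So $\bE^\bv$ never raises degree. It follows that $H^{\ge r}(\bE^\bv p)=H^{\ge r}(\bE^\bv H^{\ge r}(p))$ for every polynomial $p$, because the parts of $p$ of degree $<r$ only produce degrees $<r$. Applying this to $p=g$ and to $p=f(\bx+\ba)$, and using the hypothesis $H^{\ge r}(g)=H^{\ge r}(f(\bx+\ba))$, we get
\[
H^{\ge r}(\mathcal I g)=H^{\ge r}\big(\mathcal I f(\bx+\ba)\big)=c_1H^{\ge r}\big((\bx+\ba)^{\bgamma}\big).
\]
I expect this to be the only delicate step. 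Truncation does not commute with $\mathcal I$ exactly, only in the "top part determined by top part" sense above, so I would phrase it carefully via linearity and the degree-nonincreasing property.

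The sparsity count is then the same as in the overview. Write $g=\sum_{k=1}^t d_k\bx^{\bbeta_k}$. Expanding $\bE^\bv$ as a product of the $E_i=x_i\partial_{x_i}+a_i\partial_{x_i}$, one sees that $\bE^{\bv}\bx^{\bbeta}$ is supported on monomials $\bx^{\bbeta-\bu}$ with $\bu\preceq\bv$. Indeed, the factor $x_i\partial_{x_i}$ keeps the exponent and $a_i\partial_{x_i}$ drops coordinate $i$ by one, at most $v_i$ times. Since $T$ is cone-closed, every such $\bu$ lies in $T$. Therefore $\supp(\mathcal I g)\subseteq\{\bbeta_k-\bu:k\in[t],\ \bu\in T\}$, a set of size at most $t|T|=st$. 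Truncation only removes monomials, so $H^{\ge r}(\mathcal I g)$ also has at most $st$ monomials. Dividing by $c_1\neq0$ gives $\sp\big(H^{\ge r}((\bx+\ba)^{\bgamma})\big)\le st$.
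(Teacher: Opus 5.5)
Your proposal is correct and matches the paper's proof essentially step for step. It uses the same Euler operators $E_i=(x_i+a_i)\partial_{x_i}$ and the same isolating operator built from the cone-closed basis of \cref{lem:cone-closed-basis}. It transfers the truncated hypothesis by the same degree-nonincreasing identity $H^{\ge r}(\mathcal I h)=H^{\ge r}(\mathcal I H^{\ge r}(h))$, and it bounds the support by the same $t|T|=st$ exponents $\bbeta_k-\bu$ with $\bu\in T$.
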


\begin{proof}
    Write
    \[
        f(\bx)=\sum_{j=1}^s d_j\bx^{\balpha_j}
        \qquad\text{and}\qquad
        g(\bx)=\sum_{k=1}^t c_k\bx^{\bbeta_k},
    \]
    where all displayed coefficients are nonzero. Fix a monomial
    $\bx^{\bgamma}$ appearing in $H^{\ge r}(f)$, and relabel the monomials of
    $f$ so that $\balpha_1=\bgamma$.

    Since the exponent vectors are distinct integer vectors, the exponent
    vectors of the monomials of $f$ remain distinct as points of $\Qset^n$.
    Apply \cref{lem:cone-closed-basis} to these exponent vectors. We
    get a cone-closed set $T\subseteq\Nset^n$ with $|T|=s$ such that the
    vectors $M_\bv$, $\bv\in T$, defined by
    \[
        (M_\bv)_j=\balpha_j^{\bv}
        \qquad(j\in[s],\bv\in T)
    \]
    form a basis of $\Qset^s$. Hence there are coefficients $r_\bv$, $\bv\in T$,
    isolating the first coordinate:
    \begin{equation}\label{standard-basis-vector}
        \sum_{\bv \in T} r_\bv \balpha_j^{\bv} = \begin{cases}
        1 & \text{if } j = 1,\\
        0 & \text{otherwise}.
    \end{cases}
    \end{equation}
    Define the operators
    \[
        E_i = (x_i+a_i)\frac{\partial}{\partial x_i}
        \qquad\text{and}\qquad
        \bE^{\bv}=\prod_{i\in[n]}E_i^{v_i}.
    \]
    Since $E_i(\bx+\ba)^{\be}=e_i(\bx+\ba)^{\be}$, we have
    \[
        \bE^{\bv}(\bx+\ba)^{\be}=\be^{\bv}(\bx+\ba)^{\be}.
    \]
    Define
    \[
        \mathcal I_{\bgamma}=\sum_{\bv\in T} r_\bv \bE^{\bv}.
    \]
    The operator $\mathcal I_{\bgamma}$ never increases total degree: each
    $E_i=x_i\partial_{x_i}+a_i\partial_{x_i}$ preserves the degree or decreases
    it by one.
    Thus, for every polynomial $h$,
    \[
        H^{\ge r}(\mathcal I_{\bgamma}(h))=H^{\ge r}(\mathcal I_{\bgamma}(H^{\ge r}(h))).
    \]
    Using the assumed equality
    $H^{\ge r}(g)=H^{\ge r}(f(\bx+\ba))$, we get
    \[
        H^{\ge r}(\mathcal I_{\bgamma}(g))=H^{\ge r}(\mathcal I_{\bgamma}(f(\bx+\ba))).
    \]
    On shifted monomials, $\mathcal I_{\bgamma}$ acts diagonally. Indeed,
    $f(\bx+\ba)=\sum_{j=1}^s d_j(\bx+\ba)^{\balpha_j}$, so
    \begin{align*}
        \mathcal I_{\bgamma}(f(\bx+\ba))
        &=
        \sum_{j=1}^s d_j
        \left(\sum_{\bv\in T}r_\bv\balpha_j^{\bv}\right)
        (\bx+\ba)^{\balpha_j}\\
        &=
        d_1(\bx+\ba)^{\bgamma},
    \end{align*}
    where the last equality follows from \cref{standard-basis-vector}.
    Since $d_1\ne 0$, multiplication by $d_1$ does not change sparsity.
    Therefore,
    \[
        \sp\left(H^{\ge r}(\mathcal I_{\bgamma}(g))\right)
        =
        \sp\left(H^{\ge r}\big((\bx+\ba)^{\bgamma}\big)\right).
    \]

    It remains to bound $\sp(H^{\ge r}(\mathcal I_{\bgamma}(g)))$. For $\bv\in T$ and a monomial
    $\bx^{\bbeta_k}$ of $g$, applying $\bE^\bv$ can only change the exponent to
    $\bbeta_k-\bu$ for some $\bu\preceq\bv$, so its support is contained in
    monomials of the form
    \[
        \bx^{\bbeta_k-\bu}
        \qquad\text{with}\qquad
        \bu\preceq\bv,\ \bu\preceq\bbeta_k.
    \]
    Because $T$ is cone-closed, every such $\bu$ lies in $T$. Hence
    $H^{\ge r}(\mathcal I_{\bgamma}(g))$ is supported on monomials of the form
    \[
        \bx^{\bbeta_k-\bu}
        \qquad
        \text{where } \qquad k\in[t],\ \bu\in T.
    \]
    There are at most $t|T|=st$ such monomials. Therefore
    \[
        \sp\left(H^{\ge r}\big((\bx+\ba)^{\bgamma}\big)\right)\le st.
    \]
\end{proof}

Taking $r=0$ in \cref{lem:truncated-structural} proves the
$f$-monomial part of \cref{thm:intro-structural}. The same statement for
monomials of $g$ follows by applying the lemma to the identity
$f(\bx)=g(\bx-\ba)$.

\begin{proof}[Proof of the shifted-sparse PIT corollary, \cref{cor:intro-verify-shift}]
The identity
\[
        f(\bx+\ba)\equiv g(\bx+\bb)
\]
is equivalent, after the change of variables $\mathbf y=\bx+\bb$, to
\[
        f(\mathbf y+\bc)\equiv g(\mathbf y),
        \qquad\text{where }\bc=\ba-\bb.
\]
It is therefore enough to verify identities of the form
$f(\bx+\bc)\equiv g(\bx)$.

For each monomial $c_{\bgamma}\bx^{\bgamma}$ of $f$, compute
\[
        N_{\bgamma}=\prod_{i\in\supp(\bc)}(\gamma_i+1),
\]
but stop the multiplication once the value exceeds $st$. If
$N_{\bgamma}>st$ for some monomial, reject. This is sound: if the identity were
true, then \cref{thm:intro-structural} would imply $N_{\bgamma}\le st$ for every
monomial of $f$.

Otherwise every shifted monomial has at most $st$ terms. Enumerate all vectors
$\bv\preceq\bgamma$ with $\supp(\bv)\subseteq\supp(\bc)$, form the terms
\[
        c_{\bgamma}
        \prod_{i=1}^n \binom{\gamma_i}{v_i}(\bc_i)^{v_i}
        \bx^{\bgamma-\bv},
\]
collect like terms over all monomials of $f$, and compare the resulting sparse
polynomial with the sparse representation of $g$. The enumeration size is at
most $s\cdot st$, and all exponent arithmetic uses numbers of bit-length
$O(\log d)$, so the arithmetic operation count is
$\poly(n,s,t,\log d)$. Over $\Qset$, if the input coefficients and the shift
coordinates have polynomial bit-size, the same computation has polynomial bit
complexity, since every nonzero enumerated term has $|\bv|\le st-1$.
\end{proof}

\subsection{Height bound for computations in the algorithm} \label{sec:height-bound}
We now fix the single height cap used by the sparse tail computation and by
the main algorithm. In the height and bit-complexity statements below, every
polynomial bound denoted by $\poly(\cdot)$ is effective. Whenever the algorithm
uses such a bound as a height cap, one can replace the notation by a fixed
explicit polynomial in the displayed parameters, chosen large enough to
dominate the determinant bounds and sparse-expansion bounds proved below.
Fix once and for all a sufficiently large absolute constant $C$ and set
\[
    \mathcal B_{\mathrm{glob}}
    =
    \bigl(nst(\tau+1)\log(d+2)\bigr)^C .
\]
We choose $C$ large enough so that this single cap dominates all height bounds
used below, including the sparse-expansion bounds for the intermediate tails
and the determinant bounds for the basic rational solutions in the yes-instance
analysis. Thus every explicit height check in the algorithm can be implemented
by comparing against $\mathcal B_{\mathrm{glob}}$.

\begin{proposition}\label{prop:sparsity-and-height-of-P}
With $C$ chosen sufficiently large in the definition of
$\mathcal B_{\mathrm{glob}}$, the following holds.
Suppose $f,g\in\Qset[\bx]$ are polynomials of degree $d$ with
$\Ht(f), \Ht(g) \le \tau$ and $\sp(f)=s$, $\sp(g)=t$. Suppose
$g(\bx) = f(\bx+\bc)$ for a shift $\bc \in \Qset^n$ whose coordinates satisfy
the polynomial height bound from \cref{small-shift}. Then, for every
$0\leq i \leq d-1$,
\begin{align}
    \sp(P_{i,\bc}(f)) &\le s^2t, \\
    \Ht(P_{i,\bc}(f)) &\le \mathcal B_{\mathrm{glob}}.
\end{align}
\end{proposition}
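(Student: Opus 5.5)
The plan is to write the tail as a sum of shifted monomial pieces and control each piece with \cref{lem:truncated-structural} applied with $r=0$. Since $g(\bx)=f(\bx+\bc)$, the hypothesis of \cref{lem:truncated-structural} holds for every threshold, and in particular for $r=0$. It therefore gives, for every monomial $\bx^{\bgamma}$ of $f$, $\sp((\bx+\bc)^{\bgamma})\le st$, i.e.\ $N_{\bgamma}:=\prod_{i\in\supp(\bc)}(\gamma_i+1)\le st$.

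Write $f=\sum_{j=1}^s d_j\bx^{\balpha_j}$. By definition,
\[
P_{i,\bc}(f)=\sum_{j:\,|\balpha_j|\ge i+2} d_j\, H^i\bigl((\bx+\bc)^{\balpha_j}\bigr).
\]
Each summand $H^i((\bx+\bc)^{\balpha_j})$ is a sub-sum of the binomial expansion
\[
(\bx+\bc)^{\balpha_j}=\sum_{\bv\preceq\balpha_j,\ \supp(\bv)\subseteq\supp(\bc)}\prod_\ell\binom{\alpha_{j,\ell}}{v_\ell}c_\ell^{v_\ell}\,\bx^{\balpha_j-\bv},
\]
whose terms are distinct monomials. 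So each summand has at most $st$ monomials, and collecting over at most $s$ indices $j$ gives $\sp(P_{i,\bc}(f))\le s\cdot st=s^2t$. Cancellation can only decrease this count.

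For the height, each coefficient of $P_{i,\bc}(f)$ is a sum of at most $s$ terms of the form $d_j\prod_\ell\binom{\alpha_{j,\ell}}{v_\ell}c_\ell^{v_\ell}$, subject to $|\bv|=|\balpha_j|-i$. The key point is that only nonzero terms of the expansion appear. Since the vectors $\bv$ range over a box of size $N_{\balpha_j}\le st$, each $v_\ell\le\alpha_{j,\ell}$ with $\alpha_{j,\ell}+1\le st$ for $\ell\in\supp(\bc)$. Hence $|\bv|\le\sum_\ell v_\ell\le st-1$; more precisely $\prod(v_\ell+1)\le st$, so $\sum v_\ell\le st$. Moreover $\binom{\alpha_{j,\ell}}{v_\ell}\le (st)^{st}$, which has height $O(st\log(st))$. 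Note also that the exponents $\alpha_{j,\ell}$ in the shifted coordinates are themselves at most $st$, so the binomials do not depend on $d$ at all.

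Therefore each product $\prod_\ell c_\ell^{v_\ell}$ has height at most $st\cdot\Ht(\bc)$, where $\Ht(\bc)$ is the polynomial bound from \cref{small-shift}. Multiplying by $d_j$ adds $\tau$. Summing $s$ rationals multiplies the common denominators, adding a further factor of $s$ to the height plus $O(\log s)$. The resulting bound is $O\bigl(s\,(\tau+st\log(st)+st\,\Ht(\bc))\bigr)$, which is polynomial in $n,s,t,\tau,\log d$ and so is dominated by $\mathcal B_{\mathrm{glob}}$ once $C$ is large.

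The main obstacle is bookkeeping rather than a conceptual step. One must make sure the height of a sum of $s$ rationals is handled via common denominators, which costs a factor $s$ in height and not more. One must also stress that the degree-$i$ selection is applied to the exact expansion, so no cancellation argument is needed. Finally, the choice of $C$ has to absorb the unspecified polynomial from \cref{small-shift}, as the paper's convention allows.
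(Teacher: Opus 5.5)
Your proposal is correct and follows essentially the same route as the paper. You apply \cref{lem:truncated-structural} with $r=0$ to bound each shifted monomial's support by $st$, which gives $s^2t$ after summing over at most $s$ monomials. You then bound the heights of the surviving expansion terms using $|\bv|\le st-1$ and the small height of $\bc$, and absorb the sum of at most $s$ rationals into $\mathcal B_{\mathrm{glob}}$. Your observation that the exponents in the shifted coordinates are at most $st-1$, so the binomial factors have height independent of $d$, slightly sharpens the paper's $O(st\log d)$ estimate but does not change the argument.
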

\begin{proof}

Write
\[
        f(\bx)=\sum_{\ell=1}^s \lambda_\ell\bx^{\bgamma_\ell},
        \qquad \lambda_\ell\ne 0.
\]
By \cref{lem:tail-truncated-taylor},
\[
        P_{i,\bc}(f)
        =
        H^i\paren{H^{\ge i+2}(f)(\bx+\bc)}
        =
        H^i\paren{
            \sum_{\substack{\ell\in[s]\\ |\bgamma_\ell|\ge i+2}}
            \lambda_\ell(\bx+\bc)^{\bgamma_\ell}
        }.
\]
Since $g(\bx)=f(\bx+\bc)$, \cref{lem:truncated-structural} with
$r=0$ and $\ba=\bc$ implies that
\[
        \sp\left((\bx+\bc)^{\bgamma_\ell}\right)\le st
        \qquad\text{for every }\ell\in[s].
\]
Taking a homogeneous part cannot increase support. Therefore each input
monomial contributes at most $st$ monomials to $P_{i,\bc}$, and after collecting
like terms
\[
        \sp(P_{i,\bc}(f))\le s\cdot st=s^2t.
\]

It remains to bound coefficient heights. A nonzero term of
$(\bx+\bc)^{\bgamma_\ell}$ is indexed by a vector
$\bv\preceq\bgamma_\ell$: it is obtained by replacing $v_r$ copies of $x_r$ by
$c_r$ for each $r \in [n]$, and has coefficient
\[
        \lambda_\ell\prod_{r=1}^n \binom{\gamma_{\ell,r}}{v_r}c_r^{v_r}.
\]
If this term is nonzero, then $\supp(\bv)\subseteq\supp(\bc)$. For every
$\bu\preceq\bv$ along a chain from $\mathbf 0$ to $\bv$, the corresponding term
in $(\bx+\bc)^{\bgamma_\ell}$ is also nonzero and distinct. Hence the full
expansion has at least $|\bv|+1$ nonzero terms. Since its sparsity is at most
$st$ as noted above, we get
\[
        |\bv|\le st-1.
\]
Thus only binomial coefficients $\binom{\gamma_{\ell,r}}{v_r}$ with
$|\bv|\le st-1$ can occur in a nonzero term contributing to
$P_{i,\bc}(f)$. Their product has height $O(st\log d)$, and
$\prod_r c_r^{v_r}$ has polynomial height by
\cref{small-shift}. Multiplying by the input coefficient
$\lambda_\ell$ gives height
\[
        \poly(n,s,t,\tau,\log d)
\]
for each generated term. A coefficient of $P_{i,\bc}(f)$ is a sum of at most
$s$ such rational numbers, so its height is at most
$\poly(n,s,t,\tau,\log d)$. We can choose $C$ large enough so that this height is bounded by $\mathcal B_{\mathrm{glob}}$.
\end{proof}

\section{Main Algorithm and Proof of Correctness}

Before running the descent, handle the trivial cases as follows. If both
polynomials are constant, return $\mathbf 0$ when they are equal and return
\textsc{FAIL} otherwise. If exactly one polynomial is constant, or if the two
nonconstant polynomials have different total degrees, return \textsc{FAIL},
because translations preserve total degree.

The pseudocode below describes the rational bit-complexity version. For the
arithmetic version over a field $\mathbb K$ of characteristic zero or larger
than $d$, use the same descent, omit the global height cap and all height checks,
and choose any solution of each linear system over $\mathbb K$.

\begin{algorithm}[H]
\caption{Main Algorithm for Sparse Shift Equivalence}\label{alg:sparse-shift}
\begin{algorithmic}[1]
\Require Sparse representations of nonconstant polynomials
$f,g\in \Qset[x_1,\dots,x_n]$ with $\deg(f)=\deg(g)=d$, $\sp(f)=s$, $\sp(g)=t$, and height at most $\tau$.
\Ensure A shift in $S_{f,g}$, if one exists; otherwise \textsc{FAIL}

\State Set the global height cap
$\mathcal B_{\mathrm{glob}}=(nst(\tau+1)\log(d+2))^C$, for the fixed constant
$C$ chosen in \cref{sec:height-bound}.
\State Let $H^{\delta_1}(f),\ldots,H^{\delta_m}(f)$ be the nonzero
homogeneous components of $f$ of positive degree, where
$d=\delta_1>\cdots>\delta_m>0$. The degree-zero component, if present, is not
included.
\State Set $\ba_0\gets \mathbf 0$.
\State Set $\mathcal L_0$ to be the empty linear system in the variables
$\bb=(b_1,\ldots,b_n)$.
\For{$q=1,\ldots,m$}
    \State Compute the following polynomial using \cref{lem:compute-Pik} and
    the global cap $\mathcal B_{\mathrm{glob}}$; if the procedure returns \textsc{FAIL},
    return \textsc{FAIL}:
    \[
        P_q(\bx) \gets P_{\delta_q-1,\ba_{q-1}}(f)(\bx)
        =
        \sum_{j=\delta_q+1}^d
        \frac{1}{(j-\delta_q+1)!}\,
        f_j^{(j-\delta_q+1)}(\ba_{q-1},\bx).
    \]
    \Statex The subroutine returns \textsc{FAIL} if the collected $P_q$ has more than
    $s^2t$ monomials or a coefficient of height larger than
    $\mathcal B_{\mathrm{glob}}$.
    \State Let $\mathcal L_q$ be $\mathcal L_{q-1}$ together with the
    coefficient equations of
    \[
        H^{\delta_q-1}(g(\bx))
        =
        H^{\delta_q-1}(f(\bx))
        +f_{\delta_q}^{(1)}(\bb,\bx)
        +P_{q}(\bx).
    \]
    \State Solve the linear system $\mathcal L_q$ for $\bb$.
    \If{the above system has no solution}
        \State \Return \textsc{FAIL}
    \Else
        \State Let $\ba_q\gets \bb$, where $\bb$ is a basic rational solution of the system.
    \EndIf
\EndFor
\State Verify $f(\bx)-g(\bx-\ba_m)\equiv 0$ using the sparse verifier from the proof of \cref{cor:intro-verify-shift}.
\If{$f(\bx)-g(\bx-\ba_m)\equiv 0$}
    \State \Return $\ba_m$
\Else
    \State \Return \textsc{FAIL}
\EndIf
\end{algorithmic}
\end{algorithm}

\subsection{Computing the sparse representation of \texorpdfstring{$P_{i,\ba}(f)$}{Piaf}}

We now explain how to compute $P_q$ in \cref{alg:sparse-shift}.

\begin{lemma} \label{lem:compute-Pik}
Given $\ba_{q-1}$, the step that computes $P_q$ in
\cref{alg:sparse-shift} either returns \textnormal{\textsc{FAIL}} or computes the sparse
representation of
\[
        P_q(\bx)=P_{\delta_q-1,\ba_{q-1}}(f)(\bx)
\]
using $\poly(n,s,t,\log d)$ arithmetic operations. In the rational
bit-complexity model, the same procedure uses
\[
        \poly(n,s,t,\tau,\log d,\Ht(\ba_{q-1}))
\]
bit operations. If it succeeds, then
\[
        \sp(P_q)\le s^2t
        \qquad\text{and, over }\Qset,\qquad
        \Ht(P_q)\le \mathcal B_{\mathrm{glob}}.
\]
Moreover, if
\[
        H^{\ge\delta_q}(f(\bx+\ba_{q-1}))=
        H^{\ge\delta_q}(g(\bx)),
\]
then the enumeration checks in the procedure do not fail.
\end{lemma}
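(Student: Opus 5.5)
The procedure works monomial by monomial. For every monomial $c_{\bgamma}\bx^{\bgamma}$ of $H^{\ge\delta_q+1}(f)$, with $j=|\bgamma|$, it enumerates the exact degree-$\delta_q$ layer
\[
C_{\bgamma}(\bx)=H^{\delta_q}\bigl(c_{\bgamma}(\bx+\ba_{q-1})^{\bgamma}\bigr),
\]
applies one directional derivative $D_{\ba_{q-1}}$, and divides by $j-\delta_q+1$. It then sums over all such $\bgamma$, collects like terms, and finally checks $\sp(P_q)\le s^2t$ and $\Ht(P_q)\le\mathcal B_{\mathrm{glob}}$, returning \textsc{FAIL} if either check is violated. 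Correctness of the formula comes from \cref{taylor-expansion-lemma} applied to the homogeneous polynomial $h=c_{\bgamma}\bx^{\bgamma}$ of degree $j$:
\[
H^{i}(h(\bx+\ba))=\tfrac{1}{(j-i)!}D_{\ba}^{j-i}h .
\]
Hence $D_{\ba}C_{\bgamma}=\frac{1}{(j-\delta_q)!}D_{\ba}^{j-\delta_q+1}h=(j-\delta_q+1)\,H^{\delta_q-1}(h(\bx+\ba))$. Summing over $\bgamma$ with $j\ge\delta_q+1$ gives exactly $P_{\delta_q-1,\ba_{q-1}}(f)$ by \cref{lem:tail-truncated-taylor}. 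The two final checks immediately give the stated sparsity and height guarantees whenever the procedure succeeds.

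The terms of $C_{\bgamma}$ are indexed by drops $\bv\preceq\bgamma$ with $\supp(\bv)\subseteq\supp(\ba_{q-1})$ and $|\bv|=w:=j-\delta_q$. Their coefficients are $c_{\bgamma}\prod_i\binom{\gamma_i}{v_i}a_i^{v_i}$; all of these are nonzero in characteristic zero. I would enumerate them by a truncated depth-first/dynamic-programming scan over the coordinates of $\supp(\ba_{q-1})$. A prefix $(v_1,\ldots,v_k)$ is kept only if it is completable, meaning its remaining weight lies between $0$ and $\sum_{i>k}\gamma_i$. The admissible values of $v_{k+1}$ then form an explicit interval computable from $O(\log d)$-bit integers. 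Every completable prefix extends to at least one full drop, and distinct prefixes give distinct drops. So the number of live prefixes at every level is at most the final number of drops, and the procedure can abort with \textsc{FAIL} as soon as any level (or any interval) exceeds $st$ entries.

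The procedure also rejects any drop with $|\bv|\ge st$. This costs nothing in a yes instance, by a chain argument: every $\bu\preceq\bv$ gives a nonzero, distinct term of degree $|\bgamma|-|\bu|\ge\delta_q$, hence a term of $H^{\ge\delta_q}((\bx+\ba)^{\bgamma})$. This cap also makes the arithmetic cheap. Each $\binom{\gamma_i}{v_i}$ is a product of $v_i<st$ factors, and each $a_i^{v_i}$ needs only $O(st)$ multiplications. So each layer costs $\poly(n,s,t,\log d)$ arithmetic operations. Applying $D_{\ba_{q-1}}$ to an $st$-sparse polynomial yields at most $n\,st$ terms, and collecting at most $s$ such pieces is cheap. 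In the bit model, every generated number has height $\poly(\tau,st\log d,\Ht(\ba_{q-1}))$, which gives the stated bit bound.

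The main point is the no-failure claim under $H^{\ge\delta_q}(f(\bx+\ba_{q-1}))=H^{\ge\delta_q}(g)$. Every $\bgamma$ processed has $|\bgamma|>\delta_q$, so it appears in $H^{\ge\delta_q}(f)$. Then \cref{lem:truncated-structural} with $r=\delta_q$ and $\ba=\ba_{q-1}$ gives $\sp(H^{\ge\delta_q}((\bx+\ba_{q-1})^{\bgamma}))\le st$. This bounds both the degree-$\delta_q$ layer and, via the chain argument above, $|\bv|\le st-1$, so neither the level-size check nor the $|\bv|$ cap fails. I expect the delicate step to be the enumeration argument: the pruning must keep the number of prefixes bounded by the number of completions, so that the dense lower layers are never touched and the cost stays polynomial in $\log d$. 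The post-collection sparsity and height checks are not covered by this lemma's no-fail claim; in yes instances their success is provided separately by \cref{prop:sparsity-and-height-of-P}.
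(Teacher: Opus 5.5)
Your proposal is correct and takes essentially the same route as the paper. Both proofs use a per-monomial capped enumeration of the degree-$\delta_q$ layer, rejection of $|\bv|\ge st$ justified by the chain argument together with \cref{lem:truncated-structural} at $r=\delta_q$, recovery of the degree-$(\delta_q-1)$ contribution as $\frac{1}{|\bgamma|-\delta_q+1}D_{\ba_{q-1}}C_{\bgamma}$, and final sparsity and height checks. The differences are cosmetic: you enumerate with a completability-pruned scan instead of the paper's capped table $\mathcal V_i(h)$, and you derive the derivative identity from \cref{taylor-expansion-lemma} rather than by the paper's direct coefficient computation.
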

\begin{proof}
For every monomial $c_{\bgamma}\bx^{\bgamma}$ of
$H^{>\delta_q}(f)$, the terms of
$c_{\bgamma}(\bx+\ba_{q-1})^{\bgamma}$ of degree exactly $\delta_q$ are indexed
by vectors $\bv\preceq\bgamma$ satisfying
\begin{equation}\label{eq:degree-deltaq-vectors}
        \supp(\bv)\subseteq\supp(\ba_{q-1}),
        \qquad
        |\bv|=|\bgamma|-\delta_q.
\end{equation}
Here $v_\ell$ means that, from the factor
$(x_\ell+a_{q-1,\ell})^{\gamma_\ell}$, we choose $v_\ell$ copies of
$a_{q-1,\ell}$ and the remaining $\gamma_\ell-v_\ell$ copies of $x_\ell$.
Their coefficients are
\[
        c_{\bgamma}
        \prod_{\ell=1}^n
        \binom{\gamma_\ell}{v_\ell}(a_{q-1,\ell})^{v_\ell},
\]
and the corresponding monomial is $\bx^{\bgamma-\bv}$.

The only point at which we use the additional hypothesis in the lemma is to
show that these enumeration checks cannot reject. Under that hypothesis,
\cref{lem:truncated-structural} applies with $\ba=\ba_{q-1}$ and
$r=\delta_q$, and gives
\[
        \sp\left(
        H^{\ge\delta_q}\big((\bx+\ba_{q-1})^{\bgamma}\big)
        \right)
        \le st.
\]
The degree-$\delta_q$ terms are part of this polynomial. The vectors satisfying
\cref{eq:degree-deltaq-vectors} are exactly the choices that produce those
degree-$\delta_q$ monomials, and different vectors give different monomials.
Hence the displayed bound implies that there are at most $st$ such vectors.

We now give a procedure to enumerate all vectors satisfying \cref{eq:degree-deltaq-vectors}.
The procedure either returns a list of all such vectors or returns \textsc{FAIL} if there are more
than $st$ such vectors. The procedure is polynomial-time in the input size, which is polynomial
in $n,s,t,\tau,\log d,\Ht(\ba_{q-1})$. We first compute:
\[
        S\coloneqq\sum_{\ell:\,a_{q-1,\ell}\ne 0}\gamma_\ell.
\]

If $|\bgamma|-\delta_q>S$, then no vector satisfying
\cref{eq:degree-deltaq-vectors} exists, because the sum of the coordinates of
any such vector is $|\bgamma|-\delta_q$, while the sum of the coordinates where
$\ba_{q-1}$ is nonzero is only $S$. In this case the degree-$\delta_q$ part for
this monomial is empty. The degree-$(\delta_q-1)$ part is empty as well, since
it would require one additional copy of some coordinate of $\ba_{q-1}$. Thus
this monomial contributes nothing to $P_q$.

It remains to handle the case $|\bgamma|-\delta_q\le S$. If
$|\bgamma|-\delta_q\ge st$, the procedure returns \textsc{FAIL}. Under the
lemma's additional hypothesis, this case cannot occur. Indeed, since
$|\bgamma|-\delta_q\le S$, choose any $\bv\preceq\bgamma$ supported on
$\supp(\ba_{q-1})$ with $|\bv|=|\bgamma|-\delta_q$. For every
$\bu\preceq\bv$, the corresponding term in
$c_{\bgamma}(\bx+\ba_{q-1})^{\bgamma}$ is nonzero, distinct, and has degree at
least $\delta_q$. Therefore
\[
        \prod_{\ell:\,v_\ell>0}(v_\ell+1)
        \ge |\bv|+1
        =
        |\bgamma|-\delta_q+1
        \ge st+1,
\]
contradicting the displayed sparsity bound above.

It remains to handle the case $|\bgamma|-\delta_q\le st-1$ and
$|\bgamma|-\delta_q\le S$.
Let $\ell_1,\ldots,\ell_m$ be the indices for which $a_{q-1,\ell}\ne 0$. We
build a dynamic-programming table whose entries are short lists of vectors.
The entry $\mathcal V_i(h)$ records, with cap $st+1$, all ways to obtain total
drop $h$ using only the coordinates
$\ell_i,\ell_{i+1},\ldots,\ell_m$.
For $i=m+1,m,\ldots,1$ and $h=0,\ldots,|\bgamma|-\delta_q$, let
$\mathcal V_i(h)$ be a list of vectors
$(v_{\ell_i},\ldots,v_{\ell_m})$ with
\[
        0\le v_{\ell_r}\le \gamma_{\ell_r}
        \qquad\text{for }r=i,\ldots,m,
        \qquad
        v_{\ell_i}+\cdots+v_{\ell_m}=h.
\]
If there are more than $st$ such vectors, we keep only $st+1$ of them.

The table is filled from the end. We start with
\[
        \mathcal V_{m+1}(0)=\{()\},
        \qquad
        \mathcal V_{m+1}(h)=\emptyset\quad(h>0),
\]
where $()$ is the empty vector. For $i\le m$, the list $\mathcal V_i(h)$ is
obtained by trying
$e=0,\ldots,\min\{\gamma_{\ell_i},h\}$ and appending $e$ in front of every
vector in $\mathcal V_{i+1}(h-e)$:
\[
        (e,\bw)
        \qquad
        \text{where}\qquad
        \bw\in \mathcal V_{i+1}(h-e).
\]
We stop adding to an entry once it contains $st+1$ vectors. By induction on
$i$, each entry either contains all vectors satisfying its condition, or
contains $st+1$ such vectors. Hence
$\mathcal V_1(|\bgamma|-\delta_q)$ directly gives the vectors satisfying
\cref{eq:degree-deltaq-vectors}, after adding zero coordinates outside
$\{\ell_1,\ldots,\ell_m\}$.

If $\mathcal V_1(|\bgamma|-\delta_q)$ has size $st+1$, the procedure returns \textsc{FAIL}.
Otherwise it is the exact list of all degree-$\delta_q$ vectors. The table has
$O(nst)$ entries, each entry tries at most $st$ values of $e$ and stores at
most $st+1$ vectors, so this takes polynomial time. Each listed vector gives
the monomial and coefficient above.

The listed terms are exactly the degree-$\delta_q$ part
\[
        C_{\bgamma}(\bx)
        =
        H^{\delta_q}\left(c_{\bgamma}(\bx+\ba_{q-1})^{\bgamma}\right).
\]
We now recover the degree-$(\delta_q-1)$ contribution from
$C_{\bgamma}$ by a direct coefficient calculation. Write
\[
        D_{\ba_{q-1}}
        =
        \sum_{\ell=1}^n a_{q-1,\ell}\frac{\partial}{\partial x_\ell}.
\]
Let $\be_\ell$ denote the $\ell$th standard basis vector. Fix
$\bw\preceq\bgamma$ with $\supp(\bw)\subseteq\supp(\ba_{q-1})$ and
$|\bw|=|\bgamma|-\delta_q+1$. The monomial $\bx^{\bgamma-\bw}$ can arise in
$D_{\ba_{q-1}}(C_{\bgamma})$ only by differentiating, for some $\ell$ with
$w_\ell>0$, the degree-$\delta_q$ monomial corresponding to
$\bw-\be_\ell$. Thus its coefficient in $D_{\ba_{q-1}}(C_{\bgamma})$ is
\[
    c_{\bgamma}\ba_{q-1}^{\bw}
    \sum_{\ell:w_\ell>0}
    \binom{\bgamma}{\bw-\be_\ell}
    (\gamma_\ell-w_\ell+1).
\]
For every $\ell$ with $w_\ell>0$,
\[
    \binom{\bgamma}{\bw-\be_\ell}(\gamma_\ell-w_\ell+1)
    =
    w_\ell\binom{\bgamma}{\bw}.
\]
Hence the coefficient above equals
\[
    (|\bgamma|-\delta_q+1)c_{\bgamma}\binom{\bgamma}{\bw}
    \ba_{q-1}^{\bw},
\]
which is $(|\bgamma|-\delta_q+1)$ times the coefficient of $\bx^{\bgamma-\bw}$ in
$H^{\delta_q-1}(c_{\bgamma}(\bx+\ba_{q-1})^{\bgamma})$. Therefore the contribution
of $c_{\bgamma}\bx^{\bgamma}$ to $P_q$ is
\[
        \frac{1}{|\bgamma|-\delta_q+1}\,
        D_{\ba_{q-1}}\left(C_{\bgamma}(\bx)\right).
\]
Summing these contributions over all monomials of $H^{>\delta_q}(f)$ and
collecting like terms gives exactly
$P_{\delta_q-1,\ba_{q-1}}(f)$.

After the terms have been collected, we make one final size check. If the
result has more than $s^2t$ monomials, or if one of its coefficients has height
larger than the supplied height cap, then we return \textsc{FAIL}. In
\cref{alg:sparse-shift}, this cap is $\mathcal B_{\mathrm{glob}}$. Otherwise we
output the collected sparse representation.

The vector-list dynamic program, the binomial coefficients with
$|\bv|\le st-1$, the one additional multiplication by a coordinate of
$\ba_{q-1}$ and by an exponent in the directional derivative, the collection of
like terms, and the final height comparisons all use
\[
        \poly(n,s,t,\tau,\log d,\Ht(\ba_{q-1}))
\]
bit operations.
\end{proof}

\begin{remark}
The computation of $P_q$ uses the current vector $\ba_{q-1}$, but on a yes
instance its value can be bounded through the fixed witnessing shift $\bc$.
Indeed, the tail equality proved in \cref{thm:sparse-shift-correctness} gives
$P_q=P_{\delta_q-1,\bc}(f)$ on every yes instance.
Thus the sparsity and height of the computed tail are controlled by
\cref{prop:sparsity-and-height-of-P}, using the single fixed shift $\bc$ rather
than a bound that grows recursively with $q$.
\end{remark}

\begin{theorem}\label{thm:sparse-shift-correctness}
    \Cref{alg:sparse-shift} correctly determines whether there exists a shift in
    $S_{f,g}$.
\end{theorem}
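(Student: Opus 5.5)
Soundness is immediate and I would dispose of it first. The algorithm returns a vector $\ba_m$ only after the final sparse verifier from the proof of \cref{cor:intro-verify-shift} confirms $f(\bx)-g(\bx-\ba_m)\equiv 0$, i.e.\ $f(\bx+\ba_m)\equiv g(\bx)$. That verifier rejects only when the structural bound \cref{thm:intro-structural} shows the identity is false. Hence any returned vector lies in $S_{f,g}$, and on a no instance the algorithm outputs \textsc{FAIL}. The trivial cases handled before the descent are correct because translations preserve total degree and constants.

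For completeness, fix a yes instance. By \cref{prop:rational:shift} and \cref{small-shift}, choose a witness $\bc\in S_{f,g}$ of polynomial height. I will prove by induction on $q=0,\ldots,m$ the following invariant:
\begin{itemize}
\item[(i)] the system $\mathcal L_q$ is feasible;
\item[(ii)] every solution $\bb$ of $\mathcal L_q$ satisfies $f_{\delta_p}^{(1)}(\bb,\bx)=f_{\delta_p}^{(1)}(\bc,\bx)$ for all $p\le q$;
\item[(iii)] $\bc$ solves $\mathcal L_q$;
\item[(iv)] $H^{\ge \delta_q-1}(f(\bx+\ba_q))=H^{\ge\delta_q-1}(g)$.
\end{itemize}
Throughout, $f_j=0$ for degrees $j$ not among the $\delta_p$, so first derivatives of $f$ in every homogeneous degree above $\delta_q-1$ are matched.

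The key step is the tail equality $P_q=P_{\delta_q-1,\bc}(f)$. By the induction hypothesis (ii) for $q-1$, we have $f_j^{(1)}(\ba_{q-1},\bx)=f_j^{(1)}(\bc,\bx)$ for every $j>\delta_q$. Then \cref{derivative-equality-all-orders}, applied to each homogeneous component $f_j$, upgrades this to equality of all higher directional derivatives $f_j^{(r)}$. Substituting into \cref{lem:tail-truncated-taylor} gives $P_{\delta_q-1,\ba_{q-1}}(f)=P_{\delta_q-1,\bc}(f)$.

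The same derivative equality, fed into \cref{taylor-expansion-lemma}, shows $H^{\ge\delta_q}(f(\bx+\ba_{q-1}))=H^{\ge\delta_q}(f(\bx+\bc))=H^{\ge\delta_q}(g)$. This is precisely the hypothesis of \cref{lem:compute-Pik}, so the enumeration checks do not fail. The final sparsity and height checks also pass, since $P_q=P_{\delta_q-1,\bc}(f)$ obeys the bounds of \cref{prop:sparsity-and-height-of-P}.

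With the tail fixed, I substitute $\bc$ into the degree-$(\delta_q-1)$ Taylor identity from \cref{taylor-expansion-lemma}:
\[
H^{\delta_q-1}(g)=H^{\delta_q-1}(f)+f_{\delta_q}^{(1)}(\bc,\bx)+P_{\delta_q-1,\bc}(f).
\]
This shows that $\bc$ satisfies the new equations, giving (iii) and therefore feasibility (i). Any solution $\bb$ satisfies the same equation with the same tail $P_q$, so subtracting yields $f_{\delta_q}^{(1)}(\bb,\bx)=f_{\delta_q}^{(1)}(\bc,\bx)$, which is (ii). Consequently the basic rational solution $\ba_q$ exists.

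At $q=m$, the vector $\ba_m$ matches $\bc$ in the first derivative of every homogeneous component of positive degree. The degree-zero component and the linear component $f_1$ have vanishing or constant-only derivatives, which are covered by the $\delta_p$ or are trivially equal. Then \cref{relation-between-two-shifts} gives $\ba_m\in S_{f,g}$, the final verifier accepts, and the algorithm outputs $\ba_m$.

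The main obstacle is the bookkeeping in the inductive step. The hypothesis of \cref{lem:compute-Pik} needs agreement in all degrees $\ge\delta_q$, while (ii) only controls the processed degrees. I would handle this by noting that in degree $i\ge\delta_q$ the Taylor expansion involves only $f_j$ with $j\ge i$ and their derivatives, all of which are matched, together with $H^i(f)$ itself, which is identical on both sides. This also requires checking the edge case $\delta_q-1$ equal to a degree lying strictly between consecutive $\delta$'s, in which case $H^{\delta_q-1}(f)$ is zero.
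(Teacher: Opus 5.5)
Your proposal is correct and takes essentially the same route as the paper. Soundness comes from the final verifier. Completeness is an induction showing that every solution of $\mathcal L_q$ matches $\bc$ in the first derivatives of the processed components. That gives the tail equality $P_q=P_{\delta_q-1,\bc}(f)$ via \cref{derivative-equality-all-orders}, so the checks in \cref{lem:compute-Pik} pass by \cref{lem:truncated-structural} and \cref{prop:sparsity-and-height-of-P}, and \cref{relation-between-two-shifts} closes the argument at the end.

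The only difference is cosmetic: you rederive the high-truncation hypothesis from the derivative equalities at each stage instead of carrying it as a separate invariant. As a result your item (iv) is never used, and its $q=0$ case would need a convention such as $\delta_0=d+1$.
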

\begin{proof}
First suppose $S_{f,g}=\emptyset$. Then for every vector $\ba$, the polynomial
$f(\bx)-g(\bx-\ba)$ is nonzero. Thus no candidate can pass the final
verification check. The algorithm may return \textsc{FAIL} earlier because an
enumeration exceeds the support cap, because a computed coefficient exceeds
$\mathcal B_{\mathrm{glob}}$, or because a linear system is infeasible. If
none of these early checks fails, then all intermediate sparse objects are
bounded by the explicit caps in the algorithm and the candidate coordinates are
bounded by the basic-solution determinant bound, so the final sparse verifier
from \cref{cor:intro-verify-shift} runs within the stated polynomial time and
rejects. Hence the algorithm can only return \textsc{FAIL} on no instances.

It remains to consider a yes instance. By \cref{small-shift}, fix
$\bc\in S_{f,g}$ of height $\poly(n,s,t,\tau,\log d)$. Then
$g(\bx)\equiv f(\bx+\bc)$. By
\cref{taylor-expansion-lemma}, for every $0\le i\le d-1$,
\begin{equation}\label{eq:true-shift-degree}
    H^i(g(\bx))
    =
    H^i(f(\bx))
    +
    f_{i+1}^{(1)}(\bc,\bx)
    +
    P_{i,\bc}(f)(\bx).
\end{equation}

We prove the following invariant at the beginning of iteration $q$:
\begin{enumerate}[label=(\roman*), leftmargin=*]
    \item Both $\bc$ and the current vector $\ba_{q-1}$ satisfy
    $\mathcal L_{q-1}$.
    \item Every solution $\bb$ of $\mathcal L_{q-1}$ satisfies
    \begin{equation}\label{eq:sparse-shift-invariant}
        f_{\delta_p}^{(1)}(\bb,\bx)
        \equiv
        f_{\delta_p}^{(1)}(\bc,\bx)
        \qquad\text{for all }1\le p<q.
    \end{equation}
    \item The current vector $\ba_{q-1}$ satisfies
    \begin{equation}\label{eq:high-truncation-invariant}
        H^{\ge \delta_q}(f(\bx+\ba_{q-1}))
        =
        H^{\ge \delta_q}(g(\bx)).
    \end{equation}
\end{enumerate}
For $q=1$, the first item holds because $\mathcal L_0$ is empty, the second
item is vacuous, and the third holds because $\ba_0=\mathbf 0$,
$\delta_1=d$, and translations do not change the top homogeneous component:
\[
        H^d(f)=H^d(f(\bx+\bc))=H^d(g).
\]

Now assume the invariant holds at the beginning of iteration $q$. The
high-truncation equality \cref{eq:high-truncation-invariant} is exactly the
hypothesis used in \cref{lem:compute-Pik} to apply
\cref{lem:truncated-structural} with $r=\delta_q$. Hence the enumeration checks
in the computation of $P_q$ do not fail.

We next compare the tail being computed with the corresponding tail at the true
shift $\bc$. If $j>\delta_q$ and $H^j(f)\ne 0$, then $j=\delta_p$ for some
$p<q$. Applying the induction invariant to the algorithm's current vector
$\ba_{q-1}$ gives equality of the first directional derivatives in the
directions $\ba_{q-1}$ and $\bc$. By
\cref{derivative-equality-all-orders},
\begin{equation}\label{eq:processed-all-derivatives}
    f_j^{(r)}(\ba_{q-1},\bx)
    \equiv
    f_j^{(r)}(\bc,\bx)
    \qquad\text{for every }r\ge 1 \quad\text{and }j> \delta_q.
\end{equation}
Therefore
\begin{equation}\label{eq:tail-sum-equality}
    P_q(\bx)
    =
    P_{\delta_q-1,\ba_{q-1}}(f)(\bx)
    =
    P_{\delta_q-1,\bc}(f)(\bx).
\end{equation}
By \cref{eq:tail-sum-equality}, the polynomial being checked is
$P_{\delta_q-1,\bc}(f)$, which depends only on the fixed witnessing shift
$\bc$. The structural lemma gives the required sparsity bound, so the size
checks in \cref{lem:compute-Pik} pass. In the rational bit-complexity version,
\cref{prop:sparsity-and-height-of-P} also applies: $P_{\delta_q-1,\bc}(f)$ has
height at most $\mathcal B_{\mathrm{glob}}$. Hence the height checks pass as
well, so the computation of $P_q$ does not return
\textsc{FAIL}.

Using \cref{eq:true-shift-degree,eq:tail-sum-equality} with
$i=\delta_q-1$, we get
\begin{equation}\label{eq:true-shift-linear-equation}
    H^{\delta_q-1}(g(\bx))
    =
    H^{\delta_q-1}(f(\bx))
    +f_{\delta_q}^{(1)}(\bc,\bx)
    +P_q(\bx).
\end{equation}
Thus $\bc$ satisfies the new equations added in iteration $q$, and hence
satisfies $\mathcal L_q$.

Let $\bb$ be any solution of $\mathcal L_q$. Since
$\mathcal L_{q-1}\subseteq\mathcal L_q$, the induction invariant gives
\cref{eq:sparse-shift-invariant} for all $p<q$. Subtracting
\cref{eq:true-shift-linear-equation} from the new equation in $\mathcal L_q$
gives
\begin{equation}\label{eq:new-derivative-equality}
    f_{\delta_q}^{(1)}(\bb,\bx)
    \equiv
    f_{\delta_q}^{(1)}(\bc,\bx).
\end{equation}
Therefore every solution of $\mathcal L_q$ satisfies the derivative equalities
for all $p\le q$.

The algorithm chooses $\ba_q$ to be a solution of $\mathcal L_q$. In the
rational bit-complexity version it chooses a basic rational solution; by
\cref{lem:basic-rational-solution-height-bit} in \cref{app:missing-proofs} and the choice of
$\mathcal B_{\mathrm{glob}}$, this solution has height at most
$\mathcal B_{\mathrm{glob}}$. If
$q<m$, we prove the high-truncation invariant for the next iteration. By the previous
paragraph and \cref{derivative-equality-all-orders}, all directional
derivative contributions from components $H^{\delta_p}(f)$ with $p\le q$ are
the same under $\ba_q$ and under $\bc$. For degrees at least
$\delta_{q+1}$, every contribution either comes from one of these already
processed higher components, or from the top part of the component of the same
degree, which is independent of the shift. Hence
\[
    H^{\ge \delta_{q+1}}(f(\bx+\ba_q))
    =
    H^{\ge \delta_{q+1}}(f(\bx+\bc))
    =
    H^{\ge \delta_{q+1}}(g(\bx)).
\]
This proves the invariant for the next iteration.

After the final iteration, the returned vector $\ba_m$ satisfies
\[
    f_{\delta_p}^{(1)}(\ba_m,\bx)
    \equiv
    f_{\delta_p}^{(1)}(\bc,\bx)
    \qquad\text{for every }1\le p\le m.
\]
If $H^j(f)=0$ or $j=0$, then
$f_j^{(1)}(\ba_m,\bx)=f_j^{(1)}(\bc,\bx)=0$. Therefore
\[
    f_j^{(1)}(\ba_m,\bx)\equiv f_j^{(1)}(\bc,\bx)
    \qquad\text{for all }0\le j\le d.
\]
Since $\bc\in S_{f,g}$, \cref{relation-between-two-shifts} implies
$\ba_m\in S_{f,g}$. The final verification succeeds, and the algorithm returns
a valid shift.
\end{proof}

\begin{proposition}[Runtime and height bounds]
\label{prop:intermediate-height}
In the rational bit-complexity version, every basic rational solution $\ba_q$
computed by \cref{alg:sparse-shift} has height
\[
    \poly(n,s,t,\tau,\log d).
\]
Moreover, the algorithm can be implemented using
$\poly(n,s,t,\tau,\log d)$ bit operations.
\end{proposition}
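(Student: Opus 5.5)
The plan is to bound the height of every linear system $\mathcal L_q$ globally, and then bound the cost of each step of \cref{alg:sparse-shift} using the height of $\ba_{q-1}$. The system $\mathcal L_q$ has unknowns $\bb\in\Qset^n$. Its equations are the coefficient equations of $H^{\delta_p-1}(g)-H^{\delta_p-1}(f)-P_p=f^{(1)}_{\delta_p}(\bb,\bx)$ for $p\le q$.

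First, I bound the size and height of $\mathcal L_q$. Each coefficient of $\bb$ is a coefficient of $\partial_{x_\ell}H^{\delta_p}(f)$, which is an input coefficient times an exponent. Its height is therefore $\tau+O(\log d)$. Each equation corresponds to a monomial of $\partial_{x_\ell}H^{\delta_p}(f)$ or of the right-hand side, so there are at most $ns+s+t+s^2t$ equations per stage. Over at most $s$ stages this gives $\poly(n,s,t)$ equations. The right-hand sides involve input coefficients and the coefficients of $P_p$. Whenever $P_p$ was computed without \textsc{FAIL}, \cref{lem:compute-Pik} guarantees $\Ht(P_p)\le\mathcal B_{\mathrm{glob}}$. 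This holds on every instance, yes or no, because it is an explicit check. I emphasize that this bound does not depend on $\ba_{p-1}$, which is the point of the global cap. Cramer's rule, in the form of \cref{lem:basic-rational-solution-height-bit}, then says a basic rational solution of a $\poly(n,s,t)$-size system with entries of height $\poly(n,s,t,\tau,\log d)$ has height $\poly(n,s,t,\tau,\log d)$. This gives the height bound for $\ba_q$ with no recursion in $q$. Choosing $C$ large makes this bound at most $\mathcal B_{\mathrm{glob}}$.

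Next, I account for the running time. There are $m\le s$ iterations. In iteration $q$, \cref{lem:compute-Pik} costs $\poly(n,s,t,\tau,\log d,\Ht(\ba_{q-1}))$, which is polynomial by the previous paragraph. The intermediate numbers before collecting terms have height polynomial in the same quantities, because only $|\bv|\le st-1$ drops occur. Building $\mathcal L_q$ means extracting sparse partial derivatives and matching monomials; this is polynomial, since exponent arithmetic uses $O(\log d)$-bit numbers. A basic rational solution can be computed by fraction-free Gaussian elimination (Bareiss) in time polynomial in the system size and entry heights. The preprocessing steps are polynomial: the trivial cases, sorting homogeneous degrees, and computing $\mathcal B_{\mathrm{glob}}$. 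For the final verification I use \cref{cor:intro-verify-shift} with shift $\ba_m$, whose height is polynomial. That verifier aborts early if some $N_\gamma>st$, so its cost is polynomial whether or not the identity holds.

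The main obstacle is making sure the height bound on the tails does not depend recursively on the candidates. This works only because the algorithm explicitly caps $\Ht(P_q)$ by $\mathcal B_{\mathrm{glob}}$ and rejects otherwise. On yes instances, \cref{prop:sparsity-and-height-of-P} together with the tail equality in \cref{thm:sparse-shift-correctness} shows the cap never triggers wrongly. I will need to state explicitly that the system's coefficient matrix is independent of the tails, and that only the right-hand sides involve $P_p$. Also, the determinant bound uses a fixed polynomial in the displayed parameters, so the constant $C$ can be fixed in advance.
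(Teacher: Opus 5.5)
Your proposal is correct and follows essentially the same route as the paper. You count at most $ns+s+t+s^2t$ equations per stage, with coefficient entries of height $O(\tau+\log d)$. You control the right-hand sides non-recursively through the explicit $\mathcal B_{\mathrm{glob}}$ check and apply \cref{lem:basic-rational-solution-height-bit} to get the height and cost of $\ba_q$. You then observe that a polynomial-height $\ba_{q-1}$ makes \cref{lem:compute-Pik}, the next solve, and the final verifier polynomial.

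One small caveat concerns your aside that choosing $C$ large makes the height of $\ba_q$ at most $\mathcal B_{\mathrm{glob}}$. This is justified only on yes instances, where the tails equal $P_{\delta_q-1,\bc}(f)$ and have a $C$-independent height bound from the proof of \cref{prop:sparsity-and-height-of-P}. On no instances the Cramer bound is itself a polynomial multiple of $\mathcal B_{\mathrm{glob}}$, so no choice of $C$ makes it at most $\mathcal B_{\mathrm{glob}}$. As the paper notes, the proposition never needs $\Ht(\ba_q)\le\mathcal B_{\mathrm{glob}}$ there.
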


\begin{proof}
There are at most $s$ nonzero positive homogeneous components of $f$, so the
algorithm has at most $s$ iterations. In iteration $q$, only the degree
$\delta_q-1$ equations are newly added; all earlier equations are kept in
$\mathcal L_{q-1}$.

We first record the size and height of the equations added in one stage. Suppose
that the algorithm reaches the linear-system step in iteration $q$. Then the
computation of $P_q$ has been accepted by the global checks. Hence
\[
    \sp(P_q) \le s^2t,
    \qquad
    \Ht(P_q) \le \mathcal B_{\mathrm{glob}}.
\]
Therefore the right-hand side $H^{\delta_q-1}(g)-H^{\delta_q-1}(f)-P_q$ has support at most $t+s+s^2t$
and height bounded by $\poly(n,s,t,\tau,\log d)$.
Here we use the fact that $\mathcal B_{\mathrm{glob}}$ was chosen to be a fixed polynomial bound large enough to dominate all height bounds used by the algorithm. The newly added equations are obtained by comparing coefficients in
\[
    f_{\delta_q}^{(1)}(\bb,\bx)
    =
    H^{\delta_q-1}(g)-H^{\delta_q-1}(f)-P_q .
\]
The right-hand side has support at most $t+s+s^2t$. On the other hand,
\[
    f_{\delta_q}^{(1)}(\bb,\bx)
    =
    \sum_{\ell=1}^n b_\ell \partial_\ell H^{\delta_q}(f).
\]
Thus the support of the left-hand side is contained in the union of the supports
of the $n$ partial derivatives of $H^{\delta_q}(f)$. Since
$\sp(H^{\delta_q}(f))\le s$, this union has size at most $ns$. Consequently,
the coefficient comparison in iteration $q$ introduces at most $ns+t+s+s^2t =\poly(n,s,t)$ scalar linear equations.

The coefficients of these equations come from the partial derivatives
$\partial_\ell H^{\delta_q}(f)$. Differentiating a monomial only multiplies its
coefficient by an exponent bounded by $d$, so the entries of the coefficient
matrix have height $O(\tau+\log d)$. The right-hand side has height $\poly(n,s,t,\tau,\log d)$, as shown above.
Therefore, after any iteration $q$, the accumulated system $\mathcal L_q$ has
only $\poly(n,s,t)$ scalar equations in the $n$ variables $b_1,\ldots,b_n$, and
all entries of its coefficient matrix and right-hand side have height $\poly(n,s,t,\tau,\log d)$.

We now apply \cref{lem:basic-rational-solution-height-bit} in \cref{app:missing-proofs}. If the accumulated
system $\mathcal L_q$ is feasible, then a basic rational solution $\ba_q$ has
height $\poly(n,s,t,\tau,\log d)$,
and such a solution can be computed using $\poly(n,s,t,\tau,\log d)$ bit
operations. On yes instances, the choice of $\mathcal B_{\mathrm{glob}}$ is
large enough to dominate this polynomial height bound. On no instances, this
polynomial bound is sufficient for the bit-complexity analysis; no separate
height cap is imposed on $\ba_q$.

It remains to justify that the whole algorithm has polynomial bit complexity.
We argue inductively only for the cost of computing the successive tails and
solutions; the height of $P_q$ itself is controlled by the global check above,
not by a recursive height bound. The initial vector is
$\ba_0=\mathbf 0$, so it has height $0$. Suppose that the algorithm has reached
iteration $q$ and that $\ba_{q-1}$ has height
\[
    \poly(n,s,t,\tau,\log d).
\]
Then \cref{lem:compute-Pik} computes $P_q$, or returns \textsc{FAIL}, using $\poly(n,s,t,\tau,\log d)$
bit operations. If the computation succeeds, the global support and height
checks give
\[
    \sp(P_q)\le s^2t,
    \qquad
    \Ht(P_q)\le \mathcal B_{\mathrm{glob}},
\]
so the newly added equations have the polynomial size and height bounds proved
above. Hence exact linear algebra, together with the appendix 
\cref{lem:basic-rational-solution-height-bit}, computes the next basic rational
solution $\ba_q$ using
\[
    \poly(n,s,t,\tau,\log d)
\]
bit operations, and this solution again has height
\[
    \poly(n,s,t,\tau,\log d).
\]
This completes the induction for every candidate vector produced by the algorithm.

Since there are at most $s$ iterations, the total cost of all computations of
the polynomials $P_q$ and all linear-system solves is $\poly(n,s,t,\tau,\log d)$.
Finally, if the algorithm reaches the last verification step, the candidate shift has polynomial height. Therefore the final sparse verification can be
performed in $\poly(n,s,t,\tau,\log d)$ 
bit operations by the proof of \cref{cor:intro-verify-shift}. Hence the whole
rational bit-complexity version of the algorithm runs in $\poly(n,s,t,\tau,\log d)$
bit operations, and every basic rational solution $\ba_q$ computed by the algorithm has height $\poly(n,s,t,\tau,\log d)$.
\end{proof}

\section{Conclusion and Open Problems}

We gave the first deterministic polynomial-time algorithm, in the sparse input
size, for target-given shift equivalence of sparse multivariate polynomials over
$\Qset$. For $f,g\in \Qset[\bx]$ of sparsities $s,t$ and degree at most $d$,
the algorithm decides whether there is $\ba\in \Qset^n$ such that
\[
    f(\bx+\ba)\equiv g(\bx)
\]
using polynomially many arithmetic operations in $n,s,t,\log d$. It outputs a rational shift
of polynomial height and runs in polynomial bit complexity, with polynomial
bounds in $n,s,t,\tau,\log d$. As noted in the introduction, the
structural and arithmetic parts extend to any field of characteristic zero or
larger than $d$. This exponentially improves on the indirect ROABP-based
consequence of the framework of Dvir, Oliveira, and Shpilka described above
\cite{dviroliveirashpilkai2014icalp}, whose dependence on the degree is
polynomial in $d$ rather than polynomial in $\log d$.

The key point is that, although sparse polynomials are not closed under shifts, our structural lemma shows that the shifted expansions arising in a yes instance remain sparse after the relevant truncations. This allows the homogeneous
descent to be carried out entirely in sparse representation. Over $\Qset$, the same structural control also keeps the bit-complexity analysis from becoming recursive in the heights of the intermediate candidates.

In the algebraic-RAM model described above, separate black-box access to two
promised-sparse inputs is handled by sparse interpolation. This differs from receiving evaluation
access only to the composed expression $f(\bx+\ba)-g(\bx+\bb)$, for which our
structural verifier cannot inspect $f$ and $g$. Black-box PIT for any-order
ROABPs \cite{DBLP:journals/toc/GurjarKS17,guo2020improved} gives polynomial
dependence on $d$ and quasipolynomial-type dependence overall; obtaining
complexity polynomial in $n,s,t,\log d$ for this composed-oracle problem remains
open.

Another direction is to understand the small-characteristic regime. Our algorithm uses Taylor expansion, directional derivatives, and factorial denominators, and therefore assumes that the characteristic is zero or larger
than the degree. In small characteristic, several derivative-based characterizations no longer apply in the same form. It would be interesting to determine whether sparse shift
equivalence still admits an efficient deterministic algorithm for formal polynomial identities in small characteristic.

\paragraph*{Acknowledgements.}
The authors thank OpenAI's ChatGPT Pro for assistance during the preparation of
this manuscript. ChatGPT Pro also suggested the argument leading to the small rational shift existence proof of~\cref{small-shift}. The
authors have independently checked the arguments and take full responsibility
for the correctness of all results and proofs.

\bibliographystyle{alphaurl}
\bibliography{ref}

\appendix
\crefalias{section}{appendix}

\section{Missing Proofs}\label{app:missing-proofs}

\begin{observation}\label[observation]{obs:affine-average}
Let $A\subseteq \Cset^n$ be an affine subspace. If $\ba_1,\ldots,\ba_m\in A$ and $\lambda_1,\ldots,\lambda_m\in \Cset$ satisfy
$$
\sum_{i=1}^m \lambda_i=1,
$$
then
$$
\sum_{i=1}^m \lambda_i\ba_i\in A.
$$
In particular, affine subspaces are closed under affine averages.
\end{observation}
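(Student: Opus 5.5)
The statement is elementary and I will prove it directly from the definition of an affine subspace. I take $A=\bb+V$ for some point $\bb\in\Cset^n$ and some linear subspace $V\subseteq\Cset^n$; every nonempty affine subspace has this form, and the hypothesis supplies at least one point $\ba_1$, so $A$ is nonempty. For each $i$, I write $\ba_i=\bb+\bv_i$ with $\bv_i\in V$.

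The computation is then a single line. Using $\sum_i\lambda_i=1$,
\[
\sum_{i=1}^m\lambda_i\ba_i=\Bigl(\sum_{i=1}^m\lambda_i\Bigr)\bb+\sum_{i=1}^m\lambda_i\bv_i=\bb+\sum_{i=1}^m\lambda_i\bv_i .
\]
Since $V$ is closed under $\Cset$-linear combinations, $\sum_i\lambda_i\bv_i\in V$. Hence the affine combination lies in $\bb+V=A$. The coefficients $\lambda_i$ may be arbitrary complex numbers (in particular, not necessarily nonnegative), because only linearity of $V$ is used.

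No step here is a genuine obstacle. The only point needing care is that the weights sum to exactly $1$: this is what turns the coefficient of the base point $\bb$ into $1$. Without it, an unwanted multiple $(\sum_i\lambda_i-1)\bb$ would remain, and that vector need not lie in $V$.

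In the later use for \cref{prop:rational:shift}, $A$ will be $S_{f,g}(\Cset)=\ba_0+S_f(\Cset)$, the $\ba_i$ will be Galois conjugates of an algebraic shift, and all $\lambda_i=1/m$.
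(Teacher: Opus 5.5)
Your proof is correct and is essentially identical to the paper's: both write $A$ as a base point plus a linear subspace $V$, decompose each $\ba_i$ accordingly, and use $\sum_i\lambda_i=1$ so that the base point appears with coefficient $1$ while $\sum_i\lambda_i\bv_i$ stays in $V$. Your remark that $A$ is nonempty is a small point the paper leaves implicit; it holds because weights summing to $1$ force $m\ge 1$.
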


\begin{proof}
Write $A=\ba_0+V$, where $V\subseteq \Cset^n$ is a linear subspace. For each $i$, write $\ba_i=\ba_0+\bv_i$ with $\bv_i\in V$. Then
$$
\sum_{i=1}^m \lambda_i\ba_i =\ba_0+\sum_{i=1}^m \lambda_i\bv_i\in \ba_0+V=A,
$$
since $\sum_i\lambda_i=1$ and $V$ is closed under linear combinations.
\end{proof}

\begin{proof}[Proof of \cref{prop:rational:shift}]
Write
\[
        f(\bx+\bz)-g(\bx)=\sum_{\alpha} F_\alpha(\bz)\bx^\alpha,
\]
where each $F_\alpha \in \Qset[\bz]$. Thus $S_{f,g}(\Cset)$ is the common zero set of finitely many polynomials with rational coefficients:
\[
S_{f,g}(\Cset)=\{\ba \in \Cset^n : F_\alpha(\ba)=0 \text{ for all } \alpha\}\,.
\]
Since $S_{f,g}(\Cset)$ is nonempty, choose $\bc \in S_{f,g}(\Cset)$. Then $F_\alpha(\bc)=0$ for all $\alpha$. Now, consider the ideal
\[
I:=\langle F_\alpha\rangle_\alpha
\subseteq \overline{\Qset}[\bz]\;.
\]
We claim that $I$ is a proper ideal. Indeed, if $1\in I$, then we could write
\[
1=\sum_\alpha H_\alpha(\bz)F_\alpha(\bz)
\]
with $H_\alpha\in \overline{\Qset}[\bz]$, and evaluating at $\bz=\bc$ would give $1=0$, a contradiction. Thus $I$ is proper. Since $\overline{\Qset}$ is algebraically closed, the weak Nullstellensatz gives a point
\[
\ba\in \overline{\Qset}^{n}
\]
such that $F_\alpha(\ba)=0$ for all $\alpha$. Equivalently,
$f(\bx+\ba)=g(\bx)$.

Let $L/\Qset$ be a finite Galois extension containing all coordinates of $\ba$, and let
\[
G:=\operatorname{Gal}(L/\Qset).
\]
For every $\sigma \in G$, applying $\sigma$ to the identity $f(\bx+\ba)=g(\bx)$ gives
\[
f(\bx+\sigma(\ba))=g(\bx),
\]
because the coefficients of $f$ and $g$ lie in $\Qset$. Hence $\sigma(\ba) \in S_{f,g}(\Cset)$ for every $\sigma \in G$.

Since $S_{f,g}(\Cset)$ is an affine subspace, it is closed under affine averages; see~\cref{obs:affine-average}. Therefore
\[
\bar{\ba} := \frac{1}{|G|}\sum_{\sigma \in G}\sigma(\ba)\;\in\;S_{f,g}(\Cset)\;.
\]
On the other hand, $\bar{\ba}$ is fixed by every element of $G$. Formally, for any $\tau\in G$, the sum is just a reordering
\[
\sum_{\sigma\in G}\tau(\sigma(\ba))=\sum_{\sigma\in G}\sigma(\ba)\;.
\]
Thus each coordinate of $\bar{\ba}$ lies in $L^G=\Qset$.\footnote{Here $L^G:=\{z\in L:\sigma(z)=z\text{ for all }\sigma\in G\}$ is the fixed field of $G$. Since $L/\Qset$ is Galois and $G=\operatorname{Gal}(L/\Qset)$, we have $L^G=\Qset$.}
Therefore
\[
\bar{\ba}\in \Qset^n.
\]
Consequently there exists a rational vector $\bq:=\bar{\ba} \in \Qset^n$ such that
\[
f(\bx+\bq)=g(\bx),
\]
as desired.
\end{proof}

\begin{proof}[Proof of \cref{small-shift}]
Suppose $S_{f,g}(\overline{\Qset})\neq\emptyset$. By
\cref{prop:rational:shift}, choose a rational shift
    $\bc\in S_{f,g}(\Qset)$.

The proof has three steps. First, the structural bound confines all relevant
derivatives of $f$ to a polynomial-dimensional space $W$. Second, on $W$ we
represent the translation from $f$ to $g$ by a small-height unipotent operator
and take its finite logarithm to recover $D_{\bc}f$. Third, we solve
$D_{\bb}f=D_{\bc}f$ for a small basic rational solution $\bb$; the difference
$\bb-\bc$ stabilizes $f$, so $\bb$ is again a valid shift.

If $f$ is constant, then $g=f$, and the zero shift is a rational witness of
height $0$. Thus we may assume that $f$ is nonconstant.

Let $S=\supp(\bc)$ and
    \[
        W=\left\langle
            \partial^{\bv} f\ \middle|\ \bv\in\Nset^n,\ \supp(\bv)\subseteq S
        \right\rangle .
    \]
By \cref{lem:truncated-structural} with $r=0$, every monomial
    $\bx^{\bgamma}$ of $f$ satisfies
    \[
        \prod_{i\in S}(\gamma_i+1)\le st.
    \]

For a fixed monomial $\bx^{\bgamma}$ of $f$, a derivative
$\partial^{\bv}(\bx^{\bgamma})$ with $\supp(\bv)\subseteq S$ is
nonzero exactly when $0\le v_i\le\gamma_i$ for every $i\in S$. Therefore,
\begin{fact}\label{fact:deriv-num-ub}
    The
number of nonzero derivatives
\[
    \partial^{\bv}(\bx^{\bgamma}),
    \qquad \supp(\bv)\subseteq S,
\]
is at most
\[
    \prod_{i\in S}(\gamma_i+1)\le st.
\]
\end{fact}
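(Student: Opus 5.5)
The statement to prove is \cref{fact:deriv-num-ub}. We bound the number of nonzero derivatives $\partial^{\bv}(\bx^{\bgamma})$ with $\supp(\bv)\subseteq S$ by counting the admissible multi-indices $\bv$, and then use the structural bound already established for $\bgamma$.

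First, we characterize when such a derivative vanishes. Since $\partial^{\bv}$ factors as a product of univariate operators, we have
\[
    \partial^{\bv}(\bx^{\bgamma})=\prod_{i=1}^n \partial_{x_i}^{v_i}\bigl(x_i^{\gamma_i}\bigr).
\]
Each factor equals $\frac{\gamma_i!}{(\gamma_i-v_i)!}x_i^{\gamma_i-v_i}$ when $v_i\le\gamma_i$, and equals $0$ otherwise. In characteristic zero, or in characteristic larger than $d$, the falling factorial $\gamma_i!/(\gamma_i-v_i)!$ is a nonzero scalar whenever $v_i\le \gamma_i$. Hence $\partial^{\bv}(\bx^{\bgamma})\neq 0$ if and only if $v_i\le\gamma_i$ for all $i$.

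Next, we count. The constraint $\supp(\bv)\subseteq S$ forces $v_i=0$ for every $i\notin S$, and for these coordinates the condition $v_i\le \gamma_i$ holds automatically. So the multi-indices giving nonzero derivatives are in bijection with the tuples $(v_i)_{i\in S}$ satisfying $0\le v_i\le\gamma_i$. There are exactly $\prod_{i\in S}(\gamma_i+1)$ such tuples.

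Distinct $\bv$ give distinct derivatives, with distinct monomials $\bx^{\bgamma-\bv}$. We only need an upper bound, however, so this injectivity is not required: the number of nonzero derivatives is at most the number of admissible $\bv$, namely $\prod_{i\in S}(\gamma_i+1)$.

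Finally, the inequality $\prod_{i\in S}(\gamma_i+1)\le st$ is exactly the consequence of \cref{lem:truncated-structural} with $r=0$ and $\ba=\bc$, recorded just before the fact. This is valid because $\bc\in S_{f,g}$, so $g(\bx)=f(\bx+\bc)$. Every step is routine; the only point needing care is that the falling-factorial coefficients are nonzero, which is where the characteristic assumption enters.
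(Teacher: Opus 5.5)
Your proof is correct and follows the same route as the paper. Like the paper, you show that $\partial^{\bv}(\bx^{\bgamma})\neq 0$ exactly when $v_i\le\gamma_i$ for $i\in S$, count the $\prod_{i\in S}(\gamma_i+1)$ admissible tuples, and take the bound $st$ from the $r=0$ case of \cref{lem:truncated-structural} applied to the rational witness $\bc$.
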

\begin{fact}\label{fact:deriv-total-order}
   Each such nonzero derivative satisfies
\[
    |\bv|=\sum_{i\in S}v_i
    \le \sum_{i\in S}\gamma_i
    \le \prod_{i\in S}(\gamma_i+1)-1
    \le st-1.
\]
\end{fact}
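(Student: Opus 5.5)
The plan is to chain three elementary inequalities, each justified separately, ending with the structural bound already recorded before \cref{fact:deriv-num-ub}.

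\emph{First inequality.} I will use the characterization of nonzero derivatives just stated. Since $\supp(\bv)\subseteq S$, we have $v_i=0$ for $i\notin S$, so $|\bv|=\sum_{i\in S}v_i$. If $\partial^{\bv}(\bx^{\bgamma})\neq 0$, then $v_i\le\gamma_i$ for every $i\in S$. This is because $\partial_{x_i}^{v_i}x_i^{\gamma_i}$ vanishes once $v_i>\gamma_i$, and the derivative factors coordinatewise over the variables. Summing over $i\in S$ gives $|\bv|\le\sum_{i\in S}\gamma_i$.

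\emph{Second inequality.} I will prove the purely numerical fact that for nonnegative integers $\gamma_i$,
\[
\sum_{i\in S}\gamma_i\le\prod_{i\in S}(\gamma_i+1)-1 .
\]
I would argue by expanding the product as a sum over subsets $U\subseteq S$ of $\prod_{i\in U}\gamma_i$. The empty subset contributes $1$, the singletons contribute $\sum_{i\in S}\gamma_i$, and every remaining term is nonnegative, which gives the inequality. Alternatively, an induction on $|S|$ works, using $(A+1)(\gamma+1)-1\ge A+\gamma$ for $A,\gamma\ge0$. The case $S=\emptyset$ is trivially $0\le 0$.

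\emph{Third inequality.} I will invoke the structural bound $\prod_{i\in S}(\gamma_i+1)\le st$. This was derived above from \cref{lem:truncated-structural} with $r=0$, applied to the rational witness $\bc$ with $S=\supp(\bc)$. It yields $\prod_{i\in S}(\gamma_i+1)-1\le st-1$, and chaining the three inequalities completes the proof.

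None of these steps is a real obstacle. The only point needing care is that the bound is taken over coordinates in $S$ only: coordinates outside $S$ may carry huge exponents, but they do not enter $|\bv|$ because $\supp(\bv)\subseteq S$.
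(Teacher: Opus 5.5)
Your proposal is correct and follows the paper's own reasoning. The paper states this chain inline: the first inequality comes from the preceding characterization $0\le v_i\le\gamma_i$ of nonzero derivatives, the middle one is the elementary bound $\sum_{i\in S}\gamma_i\le\prod_{i\in S}(\gamma_i+1)-1$, and the last is the structural bound from \cref{lem:truncated-structural} with $r=0$; you simply make each of these justifications explicit.
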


Since $f$ has at most $s$ monomials, \cref{fact:deriv-num-ub} implies
\[
    \dim W\le s^2t.
\]
Let
\[
    m=\dim W\le s^2t.
\]
Choose a basis
\[
    B=\{w_1=f,w_2,\ldots,w_m\}
\]
of $W$ such that each
\[
    w_j=\partial^{\bv_j}f
\]
for some $\bv_j\in\Nset^n$ with $\supp(\bv_j)\subseteq S$. Since we take
$w_1=f$, we have $\bv_1=0$.

Now solve the linear system
\[
    \sum_{j=2}^m y_jw_j=g-f,
\]
in variables $y_2,\ldots,y_m$, where the linear equations of the system are obtained by comparing coefficients of monomials on both sides.

We claim that the system is satisfiable over $\Qset$. This is because
\[
    g(\bx)-f(\bx)
    =
    f(\bx+\bc)-f(\bx)
    =
    \sum_{\substack{\bv\ne 0\\ \supp(\bv)\subseteq S}}
        \frac{\bc^{\bv}}{\bv!}\partial^{\bv} f \in \operatorname{span}\{w_2,\ldots,w_m\},
\]
where the inclusion follows because $w_1=f$ is the only polynomial of degree $d$
in the basis $B$, while the displayed sum has degree at most $d-1$ and hence
lies in the span of $B\setminus \{w_1\}$.

By \cref{fact:deriv-total-order}, every coefficient of the derivatives $w_j = \partial^{\bv_j} f$ has height
at most
\[
    O(\tau+st\log d),
\]
because differentiating multiplies the input coefficients by falling factorials of
exponents and each exponent is bounded by $d$.
Therefore, the linear system has coefficient height
$O(\tau+st\log d)$ and it has $m-1 \leq s^2t$ variables. Then, by
\cref{lem:basic-rational-solution-height-bit}, it has a rational
solution $(y_2,\ldots,y_m)$ with height
\[
    \poly(s,t,\log d,\tau).
\]

Define the differential operator:
\begin{equation} \label{differential-operator}
Y=1+N \qquad \text{with} \qquad N=\sum_{j=2}^m y_j\partial^{\bv_j}.
\end{equation}
Then we have
\[
    Yf=f+\sum_{j=2}^m y_jw_j=g.
\]

\Cref{fact:deriv-total-order} says that any derivative of $f$ with respect to $S$-variables and total order $\geq st$ is zero. Therefore,
\[
    N^{st}h=0
    \qquad\text{for every }h\in W.
\]
Thus the logarithm of $Y$ is finite on $W$:
\[
    \log Y
    =
    \sum_{r=1}^{st-1}\frac{(-1)^{r+1}}{r}N^r.
\]
Put
\[
    q=(\log Y)f.
\]

The polynomial $q$ has coefficient height $\poly(n,s,t,\log d,\tau)$. Indeed,
work in the basis $B$ of $W$. Choose $m$ monomial coordinates in which the
coefficient matrix of $w_1,\ldots,w_m$ is nonsingular; this matrix has height
$O(\tau+st\log d)$. For each basis element $w_j$, the polynomial $Nw_j$ is a
polynomial-height linear combination of derivatives of $f$ of total order at
most $st-1$. Expressing $Nw_j$ back in $B$ by
\cref{lem:basic-rational-solution-height-bit} gives polynomial-height coordinates.
Thus the matrix $A$ of the transformation $N|_W$ has polynomial height. Since $A^{st}=0$,
\[
    q=\left(\sum_{r=1}^{st-1}\frac{(-1)^{r+1}}{r}A^r\right)e_1
\]
in $B$-coordinates, and repeated matrix multiplication and addition for $st$
steps keep the bit-height polynomial.

Now compare $Y$ with the shift operator
\[
    T_\bc
    =
    \exp\left(\sum_{i\in S}c_i\partial_i\right).
\]
We have
\[
    Yf=g=T_\bc f.
\]
Since $Y$ and $T_\bc$ are constant-coefficient differential operators, they commute with every $\partial^{\bv}$. Therefore, for every $\bv$ with $\supp(\bv)\subseteq S$,
\[
    (Y-T_\bc)\partial^{\bv} f
    =
    \partial^{\bv}(Y-T_\bc)f
    =
    0.
\]
Thus $Y$ and $T_\bc$ act identically on $W$. Their logarithms also act
identically on $W$, because the logarithm is a finite polynomial in the
nilpotent operator $Y-1=T_\bc-1$ on $W$. Hence
\[
    q
    =
    (\log Y)f
    =
    (\log T_\bc)f
    =
    \left(\sum_{i\in S}c_i\partial_i\right)f.
\]

Finally, solve the linear system in variables $b_i$, $i\in S$:
\[
    \sum_{i\in S}b_i\partial_i f=q.
\]
It is consistent, since $b_i=c_i$ is a solution. The coefficient matrix has
height $O(\tau+\log d)$, and the right-hand side $q$ has polynomial height.
Thus \cref{lem:basic-rational-solution-height-bit} gives a rational solution
$(b_i)_{i\in S}$ with
\[
    \Ht(b_i)\le \poly(n,s,t,\log d,\tau).
\]
Set $b_i=0$ for $i\notin S$. Then
\[
    \sum_{i=1}^n b_i\partial_i f
    =
    \sum_{i=1}^n c_i\partial_i f,
\]
so
\[
    \sum_{i=1}^n (b_i-c_i)\partial_i f
    = 0
\]
Hence, by \cref{lem:S_f}, we have $\bb-\bc \in S_f(\Cset)$. It follows that $\bb\in S_{f,g}\cap\Qset^n$ and has the desired height bound.
\end{proof}

\begin{definition}[Basic rational solution] \label{def:basic-rational-solution}
Let $M\mathbf y=\mathbf r$ be a feasible linear system over $\Qset$, with
$M\in\Qset^{m\times n}$ and $\rho=\rank(M)$. Choose $\rho$ linearly independent
columns of $M$ whose span contains $\mathbf r$, set all other variables to zero,
and solve for the variables in the chosen columns. The resulting vector is
called a \emph{basic rational solution}.
\end{definition}

\begin{lemma}[Basic rational solution: height and bit complexity]
\label{lem:basic-rational-solution-height-bit}
Let $M\in\Qset^{m\times n}$ and $\mathbf r\in\Qset^m$ be a linear system
$M\mathbf y=\mathbf r$ such that every entry of $M$ and $\mathbf r$ has
height at most $H$. If the system is feasible, then:
\begin{enumerate}[label=(\roman*), leftmargin=*]
    \item there exists a basic rational solution $\mathbf y^\star\in\Qset^n$,
    and
    \item every coordinate of $\mathbf y^\star$ has height
    $\poly(m,n,H)$.
\end{enumerate}
Moreover, such a basic rational solution can be computed deterministically in
$\poly(m,n,H)$ bit operations.
\end{lemma}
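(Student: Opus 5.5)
The plan is to reduce to the classical Cramer's-rule and Hadamard-bound argument after clearing denominators.

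\emph{Clearing denominators.} Multiply each equation (row of $M$ together with the corresponding entry of $\mathbf r$) by the product of the denominators in that row. This gives an equivalent integer system $M'\mathbf y=\mathbf r'$ with the same solution set. Each row has at most $n+1$ denominators of absolute value at most $2^H$, so every integer entry has absolute value at most $2^{(n+2)H}$, that is, height $O(nH)$.

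\emph{Existence of a basic solution.} Let $\rho=\rank(M')$. Feasibility means $\mathbf r'$ lies in the column space of $M'$, and any $\rho$ independent columns span that column space. Fix such columns, indexed by $J$, and choose $\rho$ rows $I$ such that the submatrix $A=M'_{I,J}$ is nonsingular. The rows outside $I$ are linear combinations of the rows in $I$. Since the system is consistent, the same combinations hold for the extended rows $(M',\mathbf r')$, so those equations are implied by the ones indexed by $I$. Setting the variables outside $J$ to zero, the square system $A\mathbf y_J=\mathbf r'_I$ has a unique solution. This gives a basic rational solution in the sense of \cref{def:basic-rational-solution}, proving item (i).

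\emph{Height bound.} By Cramer's rule each coordinate of $\mathbf y_J$ equals $\det(A_k)/\det(A)$, where $A_k$ is $A$ with one column replaced by $\mathbf r'_I$. Both determinants are integers. Hadamard's inequality bounds each by $\rho^{\rho/2}\,2^{\rho(n+2)H}$, which has bit length $O(\rho\log\rho+\rho nH)=\poly(m,n,H)$. This proves item (ii).

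\emph{Algorithm.} Run fraction-free (Bareiss) Gaussian elimination on the integer augmented matrix with column pivoting. Every intermediate quantity is a minor of the input matrix, so by Hadamard its bit length is polynomially bounded. The elimination determines $\rho$ and the pivot sets $I,J$, detects infeasibility, and back-substitution yields the basic solution. The whole computation uses $\poly(m,n,H)$ bit operations.

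\emph{Main obstacle.} The only delicate point is the bit-complexity claim: naive Gaussian elimination over $\Qset$ can blow up intermediate sizes. I would address this by citing Bareiss/Edmonds, whose analysis shows that every intermediate entry is a subdeterminant of the input. Alternatively, one can compute the determinants modulo many small primes and reconstruct them by the Chinese remainder theorem, using the Hadamard bound to fix the number of primes needed.
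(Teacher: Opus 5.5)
Your proposal is correct and follows essentially the same route as the paper: choose $\rho$ independent columns and $\rho$ rows that give a nonsingular square subsystem, bound the solution via Cramer's rule and Hadamard's inequality, and compute it by fraction-free (Bareiss) elimination. Your explicit row-wise clearing of denominators is a small but useful addition, since it makes the application of Hadamard's bound to integer determinants rigorous, a step the paper leaves implicit.
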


\begin{proof}
Let $\rho=\rank(M)$. Since the system is feasible, $\mathbf r$ lies in the
column span of $M$. Choose $\rho$ independent columns $B$ that form a basis of
this column span. Then $\mathbf r\in\vspan(M_B)$, so the reduced system
$M_B\mathbf y_B=\mathbf r$ is feasible. Choose $\rho$ independent rows $R$ such
that the square matrix $C=(M_B)_R$ is nonsingular, and let
$\mathbf d=\mathbf r_R$. Solving
\[
        C\mathbf z=\mathbf d
\]
determines the unique vector $\mathbf z$ with $M_B\mathbf z=\mathbf r$; the
remaining rows are satisfied because they are linear combinations of the rows
in $R$ on the column space spanned by $M_B$. Setting all variables outside $B$
to zero gives a basic rational solution of the original system.

For height, write each coordinate of $\mathbf z$ by Cramer's rule as
$\det(C_i)/\det(C)$, where $C_i$ denotes the matrix obtained from $C$ by
replacing its $i$th column with the right-hand side vector $\mathbf d$. By
Hadamard's bound, determinants of $r\times r$ matrices with entry height
$\le H$ have height $\poly(r,H)$, hence each coordinate has height
$\poly(m,n,H)$.

For computation, use fraction-free Gaussian elimination (or equivalent exact
linear algebra over $\Qset$) to find $\rho$, a nonsingular basis matrix $C$, and
the corresponding solution exactly; the bit-lengths of all intermediate integers
and the total bit-operation cost are bounded by $\poly(m,n,H)$.
\end{proof}

\end{document}